\newtheorem{theorem}{Theorem}
\newtheorem{proposition}[theorem]{Proposition}
\newtheorem{lemma}[theorem]{Lemma}
\newtheorem{definition}[theorem]{Definition}
\theoremstyle{remark}
\newtheorem{example}{Example}
\newtheorem{remark}{Remark}
\newcommand{\lvec}[1]{\overleftarrow{#1}}
\newcommand{\rvec}[1]{\overrightarrow{#1}}
\newcommand{\im}{\textrm{Im}}
\newcommand{\trqq}{T^*(\mathbb{R}\times Q\times Q)}
\newcommand{\trq}{\mathbb{R}\times T^*Q}
\newcommand{\lcf}{\lbrack\! \lbrack}
\newcommand{\rcf}{\rbrack\! \rbrack}
\newcommand{\pir}{\Pi_{\mathbb{R}}}
\newcommand{\rgroupoid}{\mathbb{R}\times G}
\newcommand{\rbase}{\mathbb{R}\times M}
\newcommand{\ralpha}{\alpha_{\mathbb{R}}}
\newcommand{\rbeta}{\beta_{\mathbb{R}}}
\newcommand{\repsilon}{\epsilon_{\mathbb{R}}}
\newcommand{\riota}{\iota_{\mathbb{R}}}
\newcommand{\rmultiplication}{m_{\mathbb{R}}}
\newcommand{\ra}{\mathbb{R}\times AG}
\newcommand{\rtau}{\tau_{\mathbb{R}}}
\newcommand{\rad}{\mathbb{R}\times A^*G}
\newcommand{\id}{\operatorname{id}}
\newcommand{\lag}{\mathcal{L}}
\newcommand{\ide}{\mathfrak{e}}
\author[1]{Sebasti\'an Ferraro\thanks{sferraro@uns.edu.ar}}
\author[2]{Manuel de Le\'on\thanks{ mdeleon@icmat.es}}
\author[3]{Juan Carlos Marrero\thanks{jcmarrer@ull.edu.es}}
\author[2]{David Mart\'in de Diego\thanks{david.martin@icmat.es}}
\author[2]{Miguel Vaquero\thanks{miguel.vaquero@icmat.es}}
\affil[1]{Universidad Nacional del Sur, CONICET, Departamento de Matem\'atica\authorcr Av. Alem 1253, 8000 Bah\'ia Blanca, Argentina \vspace*{0.2cm}}
\affil[2]{Instituto de Ciencias Matem\'aticas,  ICMAT\authorcr
c/ Nicol\'as Cabrera, n$^\textrm{o}$ 13-15, Campus Cantoblanco,UAM\authorcr
28049 Madrid, Spain \vspace*{0.2cm}}
\affil[3]{Unidad Asociada ULL-CSIC ``Geometr\'ia Diferencial y
  Mec\'anica Geom\'etrica'' \authorcr
 y Departamento de Matem\'aticas, Estad\'istica e IO,\authorcr
Facultad de Ciencias, ULL\authorcr
c/ Astrof\'isico Francisco S\'anchez, s/n\authorcr
38206 La Laguna - Tenerife, Canary Islands, Spain}
\date{}
\begin{document}

\title{On the Geometry of the Hamilton--Jacobi Equation\\  and\\
  Generating Functions}
\maketitle

\begin{abstract} 
In this paper we develop a geometric version of the
Hamil\-ton--Jacobi equation in the Poisson setting. Specifically, we ``geometrize'' what is
usually called a complete solution of the Hamilton--Jacobi equation. We use some well-known results about symplectic groupoids, in
particular cotangent groupoids, as a keystone for the construction
of our framework. Our methodology follows the
ambitious program proposed by A. Weinstein, \cite{Weinstein}, in order to develop geometric formulations of the dynamical behavior of Lagrangian and Hamiltonian systems on Lie algebroids
and Lie groupoids. This procedure allows us to take symmetries into
account, and, as a by-product, we recover results from
\cite{ChannellScovelII,GeHJ,GeMarsden}, but even in these situations
our approach is new. A theory of generating functions for the Poisson
structures considered here is also developed following the same
pattern, solving a longstanding problem of the area: how
to obtain a generating function for the identity tranformation  and
  the nearby Poisson automorphisms of Poisson
manifolds. A direct application of our results give
  the construction of a family of Poisson integrators, that is,
  integrators that conserve the underlying Poisson geometry. These
  integrators are implemented in the paper in benchmark problems. Some conclusions, current and future directions of research are shown at the end of the paper.
\end{abstract}

\vfill
{\bf Keywords:} Hamilton-Jacobi theory, symplectic groupoids,
Lagrangian submanifolds, symmetries, Poisson manifolds, Poisson
integrators, generating functions.
\newpage
\tableofcontents


\section{Introduction}\label{introduction}
\subsection{Motivation}\label{motivation}
 To find canonical changes of coordinates that reduce the
Hamiltonian function to a form such that the equations can be easily
integrated is a very useful procedure for the integration of the classical Hamilton's equations. As a consequence, this shows that the initial equations
are integrable. But, of course, the main problem  is to find these
particular canonical transformations. This problem is equivalent to
the determination of a large enough number of solutions of the
Hamilton--Jacobi equation.
This is the objective of the representation of canonical
transformations in terms of  generating functions and leads
to {\it complete solutions} of the Hamilton--Jacobi equations. The usefulness
of this method is highlighted in the following quote by V.I. Arnold, (see \cite{Arnoldmmcm}, p.~233):
\vspace{0.2cm}
\begin{quote}
{\it 
``The technique of generating functions for canonical transformations, developed by
Hamilton and Jacobi, is the most powerful method available for integrating
the differential equations of dynamics.''}

\vspace{0.1cm}

\hfill {\it -V.I. Arnold}
\end{quote}

\vspace{0.2cm}

The procedure described above is well-known in the classical case,
which geometrically corresponds to  the cotangent bundle of the
configuration manifold under consideration (we remark that a recent
geometric Hamilton-Jacobi theory, which includes Lagrangian and
Hamiltonian systems, was developed in \cite{Carinena}). Some research has
been also done in the Lie--Poisson
case (\cite{GeMarsden}) as well. The goal of the following
exposition is to introduce these two cases in order to motivate our
future constructions, which deepen and generalize these results in a
highly non-trivial way finding new and powerful applications.

\subsubsection{The Classical Case}\label{classical}

Let $Q$ be the $n$-dimensional configuration manifold of a mechanical system and let
$(T^*Q, \ \omega_Q, \ H)$ be a Hamiltonian system. In this
system $\omega_Q$ is the canonical symplectic structure of the
cotangent bundle, $T^*Q$. Along this paper $\pi_Q:T^*Q\rightarrow Q$ will be the natural projection of the cotangent bundle onto $Q$, and $H:T^*Q\rightarrow \mathbb{R}$ will denote the
Hamiltonian function. Associated with such a Hamiltonian there is a
Hamiltonian vector field, $X_H$, defined by $i_{X_H}\omega_Q=dH$. In
natural cotangent coordinates $(q^i, p_i)$ the symplectic structure reads
$\omega_Q=dq^i\wedge dp_i$ and the Hamiltonian vector field becomes \[X_H(q^i,p_i)=\displaystyle\frac{\partial
  H}{\partial p_ i}(q^i,p_i)\frac{\partial}{\partial q^i}-\frac{\partial
  H}{\partial q^i}(q^i,p_i)\frac{\partial}{\partial p_i}, \]
so the
equations of motion read
\begin{equation}\label{hamilton}
\begin{array}{rl}
\displaystyle\frac{dq^i}{dt}(t)=&\displaystyle\frac{\partial H}{\partial
  p_i}(q^i(t),p_i(t)),\\ \noalign{\bigskip}
\displaystyle\frac{dp_i}{dt}(t)=&-\displaystyle\frac{\partial H}{\partial q^i}(q^i(t),p_i(t)),
\end{array}
\end{equation}
for $i=1,\ldots,\ n$.

For the sake of simplicity, and in order to clarify the main ideas of
the paper, we start with a local coordinate description. Those ideas hold locally for any symplectic manifold in Darboux coordinates. The reader
interested in the details and proofs of the results presented here
is referred to \cite{AbrahamMarsden, Arnoldmmcm,Goldstein}. Assume that we
have found a function $S$ that depends on the time, $t$, the $(q^i)$-coordinates and
$n$ parameters, say $(x^i)$, $1\leq i\leq n$, so $S=S(t,q^i,x^i)$, satisfying the following two conditions:
\begin{enumerate}
\item \emph{Hamilton--Jacobi equation:} \begin{equation}\label{HamiltonJacobiK}\displaystyle\frac{\partial S}{\partial
    t}(t,q^i,x^i)+H(q^i,\frac{\partial S}{\partial q^i}(t,q^i,x^i))=K(t,x^i),\end{equation} where $K$ is a function that only depends on $t$ and $x^i$;

\item \emph{Non-degeneracy condition:} $\det(\displaystyle\frac{\partial^2S}{\partial
    q^i\partial x^j})\neq 0$,
\end{enumerate}
then, by the implicit function theorem we can make the following change
of coordinates $(t,q^i,p_i)\rightarrow (t,x^i,y_i)$ defined implicitly by
\begin{equation}\label{coordinatechange}
\begin{array}{cc}
\displaystyle\frac{\partial S}{\partial q^i}(t,q^i,x^i)=p_i, & -\displaystyle\frac{\partial S}{\partial x^i}(t,q^i,x^i)=y_i.
\end{array}
\end{equation}
After some brief computations, one can see that in the new coordinates
$(t,x^i,y_i)$ the equations of motion are again in Hamiltonian form, but
now the Hamiltonian is  the function $K(t,x^i)$, {\it i.e}. the
equations \eqref{hamilton} read now
\begin{equation}\label{hamilton2}
\begin{array}{rl}
\displaystyle\frac{dx^i}{dt}(t)=&\displaystyle\frac{\partial K}{\partial
  y_i}(t,x^i(t))=0,\\ \noalign{\bigskip}
\displaystyle\frac{dy_i}{dt}(t)=&-\displaystyle\frac{\partial K}{\partial x^i}(t,x^i(t)),
\end{array}
\end{equation}
for $i=1,\ldots,\ n$.

Since $K$ only depends on the time and the $(x^i)$-coordinates, these equations are
trivially integrable. Given an initial condition $(x^i_0,y_i^0)$ at time
$t_ 0$, the curve \[t\rightarrow \Big(x^i_0,y_i^0-\int_{t_0}^t\displaystyle\frac{\partial K}{\partial x^i}(t,x^i_0) dt\Big)\] is
the solution of equations \eqref{hamilton2} with initial condition  $(x^i_0,y_i^0)$.

\begin{remark}
The equation \begin{equation}\label{HamiltonJacobi}\displaystyle\frac{\partial S}{\partial
    t}(t,q^i,x^i)+H(q^i,\frac{\partial S}{\partial q^i}(t,q^i,x^i))=0\end{equation} appears frequently in the
  literature. If one is able to find the function $S$ satisfying 
  equation \eqref{HamiltonJacobi} and the above non-degeneracy condition, that means that
  $K=0$ and so the equations of motion become $\frac{dx^i}{dt}(t)=0$
  and $\frac{dy_i}{dt}(t)=0$. This means that in the new
  coordinates the system is in \textbf{``equilibrium''}; it does not
  evolve at all! The inverse of that change of variables gives the
  \textbf{flow}, up to an initial transformation given by
  $S(0,q^i,x^i)$, later on we will clarify this claim. In general, any $S$ satisfying the non-degeneracy condition will induce a canonical transformation implicitly by the rule described above, which implies that Hamilton's equations in the $(q^i,p_i)$ coordinates will remain as Hamilton's equations in the $(x^i,y_i)$ coordinates for a new Hamiltonian, say $K$, which is related to the Hamilton--Jacobi equation by the expression
	\begin{equation}\label{changehamiltonian}
	\displaystyle\frac{\partial S}{\partial t}+H(t,q^i,\frac{\partial S}{\partial q^i})=K(t,x^i,y_i).
	\end{equation}
	Observe that equations \eqref{HamiltonJacobiK} and \eqref{HamiltonJacobi} are particular instances of the last equation. We will elaborate on these and related issues in Section \ref{Generating functions for a class of Poisson Manifolds}.
\end{remark}

We proceed now to give a geometric framework for the previous
procedure. A
  nice exposition of Lagrangian submanifolds, generating functions and
  related topics can be found in \cite{CannasDaSilva}. The function $S$, satisfying \eqref{HamiltonJacobiK}, is
interpreted here as a function on the
product manifold $\mathbb{R}\times Q\times Q$ and so $\im(dS)$ is a
Lagrangian submanifold in $\trqq$. Notice that we are thinking about the
$(q^i)$ as coordinates on the first $Q$, and $(x^i)$ as coordinates on the
second factor $Q$. This interpretation is directly related to the fact
that we are describing here \textit{\textbf{type I}} generating functions in
the language of \cite{Goldstein}. Other types of generating functions
will be introduced along the next sections. On the other hand, consider the
projections $\pi_I:\trqq\rightarrow \mathbb{R}\times T^*Q$, $I=1,2$,
defined by $\pi_2(t,e,x^i,y_i,q^i,p_i)=(t,q^i,p_i)$ and $\pi_1(t,e,x^i,y_i,q^i,p_i)=(t,x^i,-y_i)$. With these
geometric tools the \emph{non-degeneracy condition} is equivalent to
saying that ${\pi_I}_{|\im(dS)}$ are  local diffeomorphisms for
$I=1,2$. We assume here for simplicity that they are global
diffeomorphisms, so we can consider the mapping $
 {\pi_1}_{|\im(dS)}\circ ({\pi_2}_{|\im(dS)}) ^{-1}:\trq\rightarrow
 \trq$. The local argument follows with the obvious restrictions to
 open sets. This mapping can be easily checked to be the geometric
 description of the change of variables introduced in
 \eqref{coordinatechange}. The \emph{Hamilton--Jacobi} equation \eqref{HamiltonJacobiK} can be
 understood as the fact that $dS^*(\pi_2^*H+e)$ must be equal to
 $pr_1^*K$ for some $K\in C^\infty(\mathbb{R}\times Q)$, where $pr_I:\mathbb{R}\times Q\times Q\rightarrow
 \mathbb{R}\times  Q , \ I=1,  \ 2$  are $pr_1(t,q^i,x^i)=(t,x^i), $ $ pr_2(t,q^i,x^i)=(t,q^i)$. The diagram below illustrates the situation.
\begin{figure}[H]\[
\xymatrix{
&&\im(dS)\subset T^*(\mathbb{R}\times Q\times
Q)\ar[r]^>>>{\pi_2^*H+e\qquad }\ar[dl]^{\pi_1}
\ar[dr]_{\pi_2}\ar[dd]^>>>{\pi_{\mathbb{R}\times Q\times Q}}&\mathbb{R}\\
&\mathbb{R}\times T^*Q\ar[dd]_{\id_{\mathbb{R}}\times \pi_Q}
&&\mathbb{R}\times T^*Q \ar@/_/[lu]_<<{({\pi_2}_{|\im(dS)}) ^{-1} }
\ar[dd]^{\id_{\mathbb{R}}\times \pi_Q}\ar[ll]^<<{ {\pi_1}\circ ({\pi_2}_{|\im(dS)}) ^{-1}\qquad \qquad \qquad\qquad}|\hole\ar[r]^-H&\mathbb{R}\\
&&\mathbb{R}\times Q\times Q \ar@/^/[uu]^>>>>>>>{dS}|\hole\ar[dl]_{pr_1}\ar[dr]^{pr_2}& \\
&\mathbb{R}\times Q&&\mathbb{R}\times Q
}\] 
\caption{Geometric interpretation.}\label{diagramZZZ}
\end{figure}
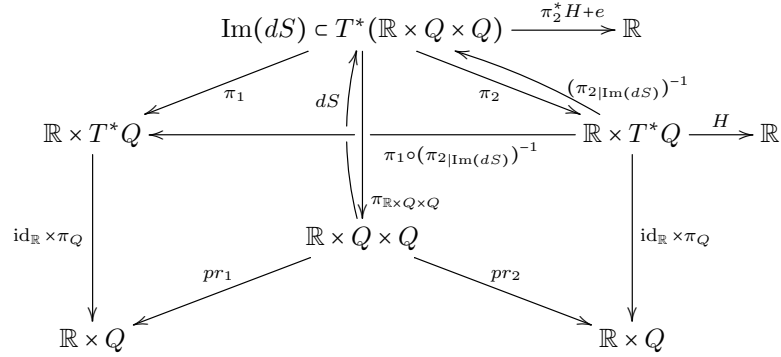

The transformation $ {\pi_1}_{|\im(dS)}\circ ({\pi_2}_{|\im(dS)})
^{-1}$ satisfies \[ \Big({\pi_1}_{|\im(dS)}\circ ({\pi_2}_{|\im(dS)})
^{-1}\Big)_* (\frac{\partial}{\partial t}+X_H)=\frac{\partial}{\partial
  t}+X_K,\] which is the geometric description of the transformation of
equations \eqref{hamilton} into \eqref{hamilton2}. 

\begin{remark}\label{Rfactor1}
We want to call the attention of the reader familiar with Lie groupoids
or discrete mechanics about the geometric structure needed to handle this
theory. In the above diagram, if one removes the $\mathbb{R}$ factor,
what we have is just a \textbf{pair groupoid} $Q\times Q$ and the
corresponding \textbf{cotangent groupoid} $T^*(Q\times Q)$, with base the dual of
its Lie algebroid $T^*Q$. We will show that the multiplication by the
$\mathbb{R}$ factor conserves the groupoid and cotangent groupoid
structures and for $\mathbb{R}\times Q\times Q$ the 
source and the target are exactly $pr_1$ and $pr_2$; furthermore, for the cotangent bundle
$\trqq$ the source and the target are just $\pi_1$ and $\pi_2$, introduced above.
\end{remark}

\begin{remark}
The inverse of the transformation induced above,  which happens to be
  $ {\pi_2}_{|\im(dS)}\circ ({\pi_1}_{|\im(dS)}) ^{-1}$, up to an initial
condition on $S$ at time $t=0$ gives the flow of the Hamiltonian
vector field $X_H$.
\end{remark}

\begin{remark} We used a time-independent Hamiltonian, but actually
  the theory is exactly the same for time-dependent systems.
\end{remark}

\subsubsection{The Lie--Poisson Case}
In this section we write in a geometric way the results about
Hamilton--Jacobi theory for Lie--Poisson systems, $(\mathfrak{g}^*,\
\Lambda, \ H)$, where 
\begin{enumerate}
\item $\mathfrak{g}^*$ is the dual of the Lie algebra $\mathfrak{g}$
  of a Lie group $G$.
\item $\Lambda$ is the canonical (-) Poisson structure on
  $\mathfrak{g}^*$, given by \[\{f,g\}(\mu)=-\mu([df(\mu),dg(\mu)])=\Lambda(df,dg)(\mu);\]
where $f, \ g: \ \mathfrak{g}^*\rightarrow\mathbb{R}$ and $\mu \in \mathfrak{g}^*$.
\item $H:\mathfrak{g}^*\rightarrow \mathbb{R}$ is a Hamiltonian function.
\end{enumerate}

These objects produce a dynamical system through the equation
\[\dot{\mu}=X_H(\mu)=\Lambda^{\sharp}(dH)(\mu),\] where $df(\Lambda^\sharp(dH))=\Lambda(df,dH)$, and $X_H$ is called the Hamiltonian vector
field. Detailed information about Lie--Poisson systems can be found, for instance, in
\cite{MarsdenRatiu}. Similar results to the ones  that we are going to
introduce now appeared for the first
time in \cite{GeMarsden}, related information can also be found in
\cite{Marsden}. Nonetheless, the approaches followed in those works
are very different and even in these situations we will give new
results. In order to
continue we need to define some mappings. The \emph{left} and \emph{right momentum} are the mappings $J_L:
T^*G\rightarrow \mathfrak{g}^*$ and $J_R:T^*G\rightarrow
\mathfrak{g}^*$ defined by $\langle J_L(\alpha_g),\xi\rangle=\langle\alpha_g,T_\ide R_g(\xi)\rangle$
and $\langle J_R(\alpha_g),\xi\rangle=\langle \alpha_g,T_\ide L_g(\xi)\rangle$. Let $S:\mathbb{R}\times G\rightarrow
\mathbb{R}$ be a function such that the following conditions hold:
\begin{enumerate}
\item {\it Hamilton--Jacobi equation:} $\displaystyle\frac{\partial
    S}{\partial t}(t,g)+H(J_R\circ dS_t)=k(t)$, where $S_t$ is defined
  by $S_t(g)=S(t,g)$.
\item {\it Non-degeneracy condition:} let $\xi_a$ be a basis of
  $\mathfrak{g}$. Then we assume that
  $\lvec{\xi}_a(\rvec{\xi}_b(S_t))$ is a regular matrix. Here
  $\lvec{\xi}_a $ and $\rvec{\xi}_b$ are the associated left-invariant
  and right-invariant vector fields respectively.
\end{enumerate}

With the function $S$ at hand we can define a transformation
analogous to \eqref{coordinatechange}. To make the exposition
easier, we introduce the following diagram analogous to diagram in
Figure \ref{diagramZZZ}:
\begin{figure}[H]
\[
\xymatrix{
&&\im(dS)\subset T^*(\mathbb{R}\times G)
\ar[dd]_>>>>>>>{\pi_{\mathbb{R}\times G}}\ar[rr]^-{(\id_{\mathbb{R}}\times
J_R)^*H+e\qquad }\ar[dl]^{\id_{\mathbb{R}}\times J_L}
\ar[dr]_{\id_{\mathbb{R}}\times J_R}&&\mathbb{R}\\
\mathbb{R}&\mathbb{R}\times \mathfrak{g}^*\ar[l]^{H}
&&\mathbb{R}\times \mathfrak{g}^*\\
&&\mathbb{R}\times G \ar@/_/[uu]_>>>>>>>{dS}&
}\]\caption{Lie-Poisson Setting.}\label{diagram2}
\end{figure}

\noindent where $(\id_{\mathbb{R}}\times J_L)(t,e,\alpha_g)=(t,J_L(\alpha_g))$ and,
in the same manner, we define $\id_{\mathbb{R}}$ $\times J_R$. The Hamilton--Jacobi
equation is equivalent to saying that $dS^*((\id_{\mathbb{R}}\times
J_R)^*(H+e))$ is equal to a time-dependent function $k(t)$.  It can also be checked
that the non-degeneracy condition implies that $\id_{\mathbb{R}}\times J_L$
restricted to $\im(dS)$ is a local diffeomorphism (it turns out that
this is equivalent to that $\id_{\mathbb{R}}\times J_R$ restricted to $\im(dS)$ to be a local
diffeomorphism). Now we can define the Poisson mapping
$(\id_{\mathbb{R}}\times J_L)\circ (\id_{\mathbb{R}}\times
{J_R}_{|\im(dS)})^{-1}:\mathbb{R}\times \mathfrak{g}^*\rightarrow
\mathbb{R}\times \mathfrak{g}^*$, henceforth denoted by $\hat{S}$. In
this way we complete the diagram in Figure \ref{diagram2} as follows:
\begin{figure}[H]\[
\xymatrix{
&&\im(dS)\subset T^*(\mathbb{R}\times G)
\ar[dd]_>>>>>{\pi_{\mathbb{R}\times G}}\ar[rr]^-{(\id_{\mathbb{R}}\times
J_R)^*H+e\qquad }\ar[dl]^{\id_{\mathbb{R}}\times J_L}
\ar[dr]_{\id_{\mathbb{R}}\times J_R}&&\mathbb{R}\\
\mathbb{R}&\mathbb{R}\times \mathfrak{g}^*\ar[l]^{H}
\ar@/^/[ru]^<<{(\id_{\mathbb{R}}\times
{J_L})_{|\im(dS)}^{-1}}&&\mathbb{R}\times \mathfrak{g}^* \hole
\ar@/_/[lu]_<<{\ \ (\id_{\mathbb{R}}\times
{J_R}_{|\im(dS)})^{-1}} \ar[ll]^<{\hat{S}\qquad \qquad \qquad}\\
&&\mathbb{R}\times G \ar@/_/[uu]_>>>>>>>{dS}|\hole&
}\]\caption{Geometric interpretation.}\label{diagrama222}
\end{figure}

The main fact is, again, that $\hat{S}:\mathbb{R}\times
\mathfrak{g}^*\rightarrow \mathbb{R}\times \mathfrak{g}^*$ verifies
\[
\hat{S}_*(\frac{\partial}{\partial t}+X_H)=\frac{\partial}{\partial t},
\]
which means that the Hamiltonian evolution is transformed into the trivial dynamics. With this method, we achieve a Poisson
transformation which completely integrates the dynamics.

\begin{remark}\label{Rfactor2}
If we think about the underlying geometric structure,
forgetting about the $\mathbb{R}$ factor, we have a \textbf{Lie group}, $G$,
and its \textbf{cotangent groupoid} $T^* G$. The source and
target of this cotangent groupoid are known to be the mappings $J_L$
and $J_R$. We will see that $T^*(\mathbb{R}\times G)$ is again a
(symplectic) groupoid and that $\id_{\mathbb{R}}\times J_L$ and
$\id_{\mathbb{R}}\times J_R$ are  its source and target.
\end{remark}
\begin{remark}
As in the standard case, the Hamiltonian can be time-dependent and
the same results hold.
\end{remark}

With these two examples at hand it seems clear that the geometry of
the theory can be described using \textbf{cotangent Lie
  groupoids}. The idea of using Lie groupoids to describe the
Hamilton--Jacobi equation appeared for the first time in \cite{GeHJ}, as
far as we know,
and there it is pointed out that A. Weinstein was the first to notice
that there may be a connection between generating functions and
symplectic groupoids. As opposed to that general approach, we focus on giving a complete
picture in the case of cotangent Lie groupoids. Cotangent Lie
groupoids seem to be general enough to include all the interesting
cases of mechanical systems, but at the same time they have
interesting features that make them very useful in practical
problems, which is our final goal. For instance, Darboux
coordinates in the cotangent groupoid are always available, but finding them 
is not such an easy task in the case of a general symplectic groupoid. Furthermore, cotangent
groupoids provide the natural framework to relate continuous and
discrete Hamiltonian and Lagrangian dynamics, as was pointed by
A. Weinstein in \cite{Weinstein}, following \cite{MoserVeselov}. This
viewpoint was
exploited by some of the authors in \cite{MMM,MMMIII, MMS}. A particular
case of cotangent groupoid, the cotangent groupoid of an action Lie
algebroid, was already suggested
as the correct setting for a Hamilton--Jacobi theory in
\cite{ScovelWeinstein}, although no more progress has been made in this direction so far. Moreover, in \cite{ScovelWeinstein} the authors
develop a finite-dimensional Poisson truncation, {\it i.e.} a finite-dimensional Poisson model of the Poisson--Vlasov
equation, an infinite-dimensional Poisson system. It turns out that
the truncation happens to be the dual algebroid of an action
groupoid, which is a particular case of the theory that we present
here. The importance of the {\it ``oid theory''} was already clear to
the authors
\vspace{0.2cm}
\begin{quote}
{\it ``Our first derivation used the theory of Lie groupoids
and Lie algebroids, and the Poisson structures on the duals of Lie
algebroids. We were then able to eliminate the ``oid'' theory in favor of
more well-known ideas on Poisson reduction. [...] The groupoid aspect of the theory also provides natural Poisson
  maps, useful in the application of Ruth type
integration techniques, which do not seem easily derivable from the general theory of Poisson
reduction''.}
\end{quote}
\vspace{0.1cm}
\hfill {\it -C. Scovel and A. Weinstein \cite{ScovelWeinstein}, p. $683$}.
\vspace{0.2cm}

Here we use that connection, symplectic groupoid--Poisson manifolds, to
develop a general theory for Hamiltonian systems on the dual of an integrable Lie algebroid, a framework large enough to study all
interesting Poisson Hamiltonian systems in classical mechanics. Using some well-known facts about
symplectic groupoids, like the fact that the cotangent bundle of a Lie
groupoid is a symplectic groupoid with base the dual bundle of the associated
Lie algebroid, then, a construction similar to the one outlined in the
two examples above leads to the desired Hamiton--Jacobi theory. This
theory allows us to seek for transformations which integrate
Hamilton's equations in the same way we described above.

We remak that, very recently, in \cite{GrilloPadron} the
authors develop a geometric Hamilton-Jacobi theory for dynamical
systems on the total space of a fibration. In the particular case
when the dynamical system is Hamiltonian with respect to an
(arbitrary) Poisson structure and the fibration is isotropic, they
discuss some applications of the theory to the integration by
quadratures of the system. It is clear that our approach in this paper
is different. In fact, we are not concerned with the integration by
quadratures of the system and, in addition, we focus on Hamiltonian
systems with respect to linear Poisson structures on vector
bundles. This fact, as we mentioned before, allows us to use the
theory of Lie algebroids and groupoids and to discuss the interesting
examples for Classical Mechanics, namely, Hamiltonian systems on
cotangent bundles, on the dual space of Lie algebras and the dual
bundles of action and Atiyah Lie algebroids. We also want to stress that besides the exact integration
of the Hamilton's equations our motivation comes from the applications
to numerical methods, which aims to develop, among other things, the
Ruth type integrators pointed in the last quote, taken from
\cite{ScovelWeinstein}. Our examples go into this direction,
although the analytical and dynamical applications should be exploited
as well (see the last section). It is well-known that the theory of generating
functions gives a family of symplectic numerical methods,
\cite{Kang,Hairer,McLachlanScoveI}. There are numerous
examples illustrating the superior preservation of phase-space
structures and qualitative dynamics by symplectic integrators. These
methods were extended to Lie groups in \cite{GeMarsden}, but our
approach is general enough to provide a general setting to develop new numerical
methods on the dual bundle of Lie algebroids. We also stress the
importance of this task, extending the symplectic integrators to the
Poisson world, by a Peter Lax's quote that can be found in \cite{Kang}.
\vspace{0.2cm}
\begin{quote}
{\it
``In the late 1980s Feng Kang proposed
and developed so-called symplectic
algorithms for solving equations in
Hamiltonian form. Combining theoretical
analysis and computer experimentation,
he showed that such methods, over
long times, are much superior to standard
methods. At the time of his death, he was
at work on extensions of this idea to other
structures.''}
\end{quote}

\vspace{0.1cm}
\hfill {\it - P. Lax}

\vspace{0.2cm}
Our approach also differs from the previous ones, \cite{BenzelGeScovel,ChannellScovelII,GeEquivariant,GeMarsden,PoissonIntegrators,McLachlanScoveII}, in that we focus on the
Lagrangian submanifolds instead of the generating functions
themselves. We understand  generating functions, as it has been done for many decades, as a device to describe
Lagrangian submanifolds which are ``horizontal'' regarding a certain
projection. That projection will only be defined locally in many
cases, but that is enough for our applications. Let us clarify a little bit
the situation. The most basic instance
of this setting is the cotangent bundle $(T^*Q,\ \omega_Q)$ endowed
with the canonical symplectic structure and the natural projection
$\pi_Q:T^*Q\rightarrow Q$. Then, we say that a Lagrangian submanifold, say $\lag$, 
is {\it horizontal} for $\pi_Q$ if $(\pi_Q)_{|\lag}:\lag\rightarrow Q$ is a
diffeomorphism, equivalently, if there exists a closed $1$-form $\gamma$ such
that $\im(\gamma)=\lag$. This fact is a straightforward application of
the implicit function theorem. By the Poincar\'e Lemma, at least
locally, there exists $S$ defined on an open set of $Q$ such that
$dS=\gamma$. In this sense, all the information of the Lagrangian
submanifold $\lag$ can be encoded in a function, much easier to
handle. Conditions on $\lag$ become PDEs in the unknown function
$S$. This simple fact allowed us to come with analogues of the  {\it non-free canonical transformations},
following the notation in \cite{Arnoldmmcm}. They
permit, in a very natural way, to generate the identity transformation
as generating functions. This was an open problem, unsolved as far as
we know:
\vspace*{0.2cm}
\begin{quote}
{\it ``Is there a generating function for Lie--Poisson maps which generates the identity map via the coadjoint action of the identity group element?''}
\end{quote}
\vspace{0.1cm}
\hfill {\it -R.I. McLachlan and C. Scovel. \cite{McLachlanScoveII}, p. $157$.}

\vspace*{0.2cm}

This result is quite useful, as the classical type I generating
functions are not enough even in the classical case for may practical
applications. In that regard, we would like to stress
  that the importance of our construction relies not on parametrizing
  the Lagrangian submanifolds of identities, but on giving a
  parametrization of all the Lagrangian submanifolds that are close to
the identity. Those Lagrangian submanifolds induce the Poisson
transformations which are near the identity {\i.e.} the Hamiltonian
flow for small enough time (see Section \ref{non-free}). In addition, the results in Section $3$ and $4$ are the
keystones to develop the geometric numerical methods in Section $5$. Furthermore, we give here results that show how Hamilton's
equations change under a canonical transformation induced by a
generating function, generalizing the equation
\eqref{changehamiltonian} which is needed to develop Ruth type
integrators. We also show local existence of solution and generalize
the classical result which claims that the action functional ($\int L\, dt$)
is a solution of the Hamilton--Jacobi equation. Our approach is new
even when applied to the known situations.


\subsection{Summary}

After the previous section, where we illustrated the motivation that
led us to our results, we
proceed in the following way.

Section $2$ is devoted to the definition of symplectic groupoids. We
include two appendices about Lie groupoids and Lie algebroids with all
the definitions and properties, so the reader
unfamiliar with them should find there the information needed to understand this paper. We
summarize some of the most important properties of symplectic
groupoids because we are going to use them in Section 3 to develop our
Hamilton--Jacobi theory. We provide
several examples of symplectic groupoids and describe carefully the
ones we are concerned with, cotangent groupoids.

In Section $3$ we develop the main results of this paper. The main goal
is to provide a theory of generating functions and a Hamilton--Jacobi theory that allow us to look for
transformations which integrate the Hamiltonian equations of
motion, or at least, approximate them through Poisson automorphisms. This is done using the symplectic groupoid structure of the
cotangent bundle of a groupoid, in particular, its dual pair structure. We also summarize some of the
results in \cite{GeMarsden} related to these generating functions.

In Section $4$ we give a couple of results about local existence of
solutions of the Hamilton--Jacobi equation introduced in the previous section. Since we are dealing with a PDE, it is our duty to show, at
least, local
existence of solutions. The first result generalizes the classical Jacobi's solution
of the Hamilton--Jacobi equation given by the action. The second one is the
method of characteristics, well-known in the literature.

Examples are given in Section $5$. Of course, our theory recovers
the classical situation via the pair groupoid and Ge--Marsden's
framework via the Lie group
case. Even in those cases, our approach clarifies the geometry and our
results are stronger. We present here a general procedure to apply in order to
develop numerical methods. The methods are very general and apply to any Poisson manifold under consideration, but they can be drastically improved when applied to concrete situations. Moreover, all the needed tools to design new numerical methods are also presented in this paper (see the last section). The
examples given in the paper are the rigid body, the heavy top and Elroy's beanie. 


\section{Symplectic Groupoids}

Along this section we introduce  the geometric
framework that allows us to ``geometrize'' the Hamilton--Jacobi equation
in the next section. The basic definitions and properties are given and
several examples are provided. The usual definitions and notation
about Lie groupoids and Lie algebroids are given in the appendices
\ref{liealgebroids} and \ref{liegroupoids}. The basic references for
this section are \cite{CannasDaSilvaWeinstein,CosteDazordWeinstein,lecturesontheintegrabilityofliebrackets}. We want to warn the reader, especially the one non experienced with groupoids, that in the end our constructions rely on the geometric structures of cotangent groupoids and not in  the algebraic ones. That is, in order to follow our results it is enough to understand how to obtain the source and target of the cotangent groupoid and the results of Section \ref{properties}. The main point is that cotangent groupoids provide a symplectic realization of the involved Poisson structures in a way that allows to control their Poisson automorphisms. To what extent the algebraic structures, multiplication and inversion, play a role here is still unknown to us.

\subsection{Definition of Symplectic Groupoids}
Let us now introduce the notion of a symplectic groupoid. For the definitions of groupoids, Lie groupoids or Lie algebroids and the corresponding notation see the appendices. 

\begin{definition}

A \emph{symplectic groupoid} is a Lie groupoid $G\rightrightarrows M$
equipped with a symplectic form $\omega$ on $G$ such that the graph of
the multiplication $m\ \colon \  G_2\to G$, that is, the set $\left\{ (g,h,gh)\,|\, (g,h)\in G_2 \right\}$, is a Lagrangian submanifold of $G\times G\times (-G)$ with the product symplectic form, where $-G$ denotes $G$ endowed with the symplectic form $-\omega$.
\end{definition}

\begin{remark}
It can be easily check that $(G, \omega)$ is a symplectic groupoid
in the above sense if and only if the $2$-form $\omega$ is multiplicative. We
say that the form $\omega$ is \emph{multiplicative} iff 
\[
m^*\omega=\pi_1^*\omega+\pi_2^*\omega
\]
where $\pi_i:G_2\rightarrow G$, $i=1,\ 2$ are the projections over the
first and second factor.
\end{remark}

\subsection{Cotangent Groupoids}
This section constitutes a source of examples of symplectic
groupoids. Since
cotangent groupoids play an important role in our theory they deserve
a whole subsection. Assume that $G\rightrightarrows M$ is a Lie groupoid with source and
target $\alpha$ and $\beta$ respectively and identity section,
inversion map and multiplication $\epsilon$, $\iota$ and $m$, then there is an induced Lie
groupoid structure $T^*G\rightrightarrows A^*G$ which we define below
(see Appendix \ref{liegroupoids} for notation). The cotangent bundle $T^*G$ turns out to be a symplectic groupoid with the canonical symplectic form $\omega_G$ (see 
\cite{CosteDazordWeinstein,lecturesontheintegrabilityofliebrackets}). Let us define the composition law on $T^*G$, which will be written as
$\oplus_{T^*G}$, and let $\widetilde \alpha$, $\widetilde \beta$,
$\widetilde \epsilon$ and $\widetilde \iota$ stand for the source, target,
identity section and inversion map of the groupoid structure on $T^*G$
defined below.  Each one of these maps will cover the corresponding structural map of $G$.

\begin{enumerate}
\item The \emph{source} is
 defined in a way that the following diagram is commutative
\begin{equation}\label{comm1}
\xymatrix{
T^*G \ar[r]^{\widetilde\alpha}\ar[d]_{\pi_G}\ar@{}[dr]|\circlearrowright
& A^*G\ar[d]^\tau\\
G\ar[r]^\alpha & M.
}
\end{equation}
Assume that $g\in G^x_y=\alpha^{-1}(x)\cap \beta^{-1}(y)$ and let $X\in A_xG$. So, by the definition of Lie algebroid
associated to a Lie groupoid, $X$ is a tangent vector at $\epsilon(x)$ that is tangent
to $G^x=\alpha^{-1}(x)$.  Then, since $\iota\ \colon\ G^x\rightarrow G_x$
(where $G_x=\beta^{-1}(x)$) is a
diffeomorphism between the $\alpha$-fibers and $\beta$-fibers, we
conclude that $-(T_{\epsilon(x)}\iota)(X)\in T_{\epsilon(x)}G_x$, \emph{i.e.}, it is
tangent to the $\beta$-fiber. Now,
recalling that right multiplication is a bijection $r_g:G_x\rightarrow
G_y$, with $g \in G^x_y$, we have that $T_{\epsilon(x)}r_g(-T_{\epsilon(x)}\iota(X))\in
T_gG$ and we can finally define
\[
\widetilde\alpha(\varpi_g)(X)=\varpi_g(-T_{\epsilon(x)}(r_g\circ\iota) (X)), \; \; \mbox{ for } \varpi_g \in T_g^*G.
\]
In particular, if we are dealing with a section $X\in\Gamma(\tau)$, the previous
construction leads us to right-invariant vector fields:
\begin{equation}\label{alpha}\widetilde\alpha(\varpi_g)(X)=\varpi_g(\rvec{X}(g)),\text{ for }X\in\Gamma(\tau),\end{equation}
where $\tau\ \colon AG \to M$ is the Lie algebroid associated to $G$ (see equation
\eqref{rinv} in Appendix \ref{liegroupoids}). 

Notice that $\widetilde \alpha\ \colon\ T^*G\to A^*G$ is a surjective submersion.

\item In an analogous way, the \emph{target}, $\widetilde \beta\ \colon\ T^*G\to A^*G$, is defined so that the following diagram commutes
\begin{equation}\label{comm2}
\xymatrix{
T^*G \ar[r]^{\widetilde\beta}\ar[d]_{\pi_G}\ar@{}[dr]|\circlearrowright
& A^*G\ar[d]^\tau\\
G\ar[r]^\beta & M.
}
\end{equation}
Now, given $\varpi_g \in T_g^*G$ and assuming that $g\in G^x_y$, since left
multiplication is defined on the $\alpha$-fibers, $l_g\ \colon G^y\rightarrow
G^x$, then $T_{\epsilon(x)}l_g(X)\in T_{g}G$ and we can define
\[
\widetilde\beta(\varpi_g)(X)=\varpi_g(T_{\epsilon(x)}l_g(X)).
\]
In other words, 
\begin{equation}\label{beta}\widetilde\beta(\varpi_g)(X)=\varpi_g(\lvec{X}(g)),\text{ for
}X\in\Gamma(\tau)\end{equation}from equation \eqref{linv}, Appendix \ref{liegroupoids}.

\item The \emph{identity map}, $\widetilde \epsilon : A^*G\to
  T^*G$, is defined so that
\[
\xymatrix{
A^*G \ar[r]^{\widetilde\epsilon}\ar[d]_{\tau}\ar@{}[dr]|\circlearrowright
& T^*G\ar[d]^{\pi_G}\\
M\ar[r]^\epsilon & G
}
\]
commutes. Take $v\in T_{\epsilon(x)}G$. We can obtain an element of $A_xG\subset T_{\epsilon(x)}G$ by computing $v-T(\epsilon\circ\alpha)(v)$. This tangent vector is indeed tangent to an $\alpha$-fiber since $T\alpha$ of it is $0$, using an argument analogous to that in the definition of $\widetilde \alpha$. Then, for $\mu_x\in A^*_xG$, we define
\[ \widetilde \epsilon(\mu_x)(v)=\mu_x(v-T(\epsilon\circ\alpha)(v)). \]
\item The \emph{inversion map}, $\widetilde \iota\colon\; T^*G\to T^*G$, is defined as a mapping from each $T^*_gG$ to $T^*_{g^{-1}}G$. If $X\in T_{g^{-1}}G$, then $T\iota(X)\in T_gG$, so we can define for $\varpi_g\in T^*_gG$
\[ \widetilde \iota (\varpi_g)(X)=-\varpi_g(T\iota(X)).\]

\item The \emph{groupoid operation} $\oplus_{T^*G}$ is defined for
  these pairs $(\varpi_g,\nu_h)\in T^*G\times T^*G$ satisfying the
  composability condition $\widetilde \beta(\varpi_g)=\widetilde
  \alpha(\nu_h)$, which in particular implies that $(g,h)\in G_2$
  using diagrams \eqref{comm1} and \eqref{comm2}. This condition can be rewritten as
\[ \varpi_g\circ Tl_g=-\nu_h\circ Tr_h\circ T\iota\quad\text{on }A_{\beta(g)}G\subset T_{\epsilon(\beta(g))}G. \]
We define 
\[
\left( \varpi_g\oplus_{T^*G}\nu_h \right)(T_{(g,h)}m(X_g,Y_h))=\varpi_g(X_g)+\nu_h(Y_h),
\]
for $(X_g, Y_h)\in T_{(g,h)}G_2$. An explicit expression for $\oplus_{T^*G}$ using local bisections in the Lie groupoid $G$ can be found in \cite{MMS}.
\end{enumerate}

\begin{remark}
Note from equation \eqref{alpha} (resp.\ \eqref{beta}) that the definition of
$\widetilde\alpha$ (resp.
 $\widetilde\beta$) is just given by
``translation'' via right-invariant vector fields (resp.\ left-invariant).
\end{remark}
\begin{remark}
Another interesting property is that the application $\pi_G\ \colon\  T^*G\to G$ is a
Lie groupoid morphism over the vector bundle projection $\tau_{A^*G}\colon A^*G\to M$.
\end{remark}

\begin{remark}
When the Lie groupoid $G$ is a Lie group, the Lie groupoid $T^*G$ is not in general a Lie group. The base $A^*G$ is identified with the dual of the Lie algebra $\mathfrak{g}^*$, and we have $\widetilde\alpha(\varpi_g)(\xi)=\varpi_g\left( Tr_g \xi\right)$ and $\widetilde\beta(\varpi_g)(\xi)=\varpi_g\left( Tl_g \xi\right)$, where $\varpi_g\in T^*_gG$ and $\xi\in {\mathfrak g}$.
\end{remark}

\subsection{Example: Cotangent Bundle of the Gauge Groupoid}\label{cotangentatiyah}
We present now the cotangent groupoid of a gauge groupoid as an
illustration of the previous constructions. The gauge groupoid is
described in Section \ref{atiyah}. We assume here that the principal $G$-bundle $\pi:P\rightarrow M$ is trivial, so $P=G\times M$. Then, it is easy to see that $(P\times P)/G\equiv M\times M\times G$, with the identification $[(m^1,g^1,m^2,g^2)]\rightarrow (m^1,m^2,(g^1)^{-1}g^2)$. Thus, $T^*\left((P\times P)/G\right)\equiv T^*(M\times M\times G)$, and given $(\lambda_{m^1},\mu_{m^2},\nu_{g})\in T^*(M\times M\times G)$, we have
\begin{enumerate}
 \item The {\it source}, $\tilde{\alpha}$, is the map $\tilde{\alpha}(\lambda_{m^1},\mu_{m^2},\nu_{g})=(-\lambda_{m^1},J_R(\nu_{g}))$.
\item The {\it target}, $\tilde{\beta}$, is the map $\tilde{\beta}(\lambda_{m^1},\mu_{m^2},\nu_{g})=(\mu_{m^2},J_L(\nu_{g}))$.
\item The  {\it identity map} is $\tilde\epsilon(\lambda_{m}, \nu_g)=(-\lambda_{m},\lambda_{m}, \nu_g)$.
\item The {\it inversion map} is $\tilde\iota(\lambda_{m^1},\mu_{m^2},\nu_{g})=(-\mu_{m^2},\lambda_{m^1},-\nu_{g}\circ T_{g^{-1}}\iota)$, where $\iota$ is the inversion in the Lie group $G$.
\item  The {\it multiplication} is $\tilde{m}\left((\lambda_{m^1},\mu_{m^2},\nu_{g}),  (-\mu_{m^2},\mu_{m^3},\nu_{h})\right)=(\lambda_{m^1},\mu_{m^3},\nu_{g}\circ T_{gh}r_{h^{-1}}=\nu_{h}\circ T_{gh}l_{g^{-1}})$.
\end{enumerate}

\begin{remark} Easier-to-handle expressions can be obtained by
  trivializing $T^*G\equiv G\times\mathfrak{g}^*$ in the product $T^*(M\times M\times G)\equiv T^*M\times T^*M\times T^*G$.
\end{remark}

\subsection{Properties}\label{properties}
The following theorem will be crucial in the next sections (see
\cite{CosteDazordWeinstein,lecturesontheintegrabilityofliebrackets} for a proof, as well as the references therein):

\begin{theorem}\label{theorem-sg}
Let $G\rightrightarrows M$ be a symplectic groupoid, with symplectic 2-form $\omega$. We have the following properties:
\begin{enumerate}
\item For any point $g\in G_y^x$, the subspaces $T_g G_y$ and $T_gG^x$ of the symplectic vector space $(T_gG, \omega_g)$ are mutually symplectic orthogonal. That is,
\begin{equation}\label{dualpair}
T_g G_y=(T_gG^x)^{\perp}.
\end{equation}
\item The submanifold $\epsilon (M)$ is a Lagrangian submanifold of the symplectic manifold $(G, \omega)$.
\item The inversion map $\iota\ \colon\  G\to G$ is an anti-symplectomorphism of $(G, \omega)$, that is, $\iota^*\omega=-\omega$.
\item There exists a unique Poisson structure $\Pi$ on $M$ for which $\beta\ \colon\  G\to M$ is a Poisson map, and $\alpha\ \colon\  G\to M$ is an anti-Poisson map (that is, $\alpha$ is a Poisson map when $M$ is equipped with the Poisson structure $-\Pi$).
\end{enumerate}
\end{theorem}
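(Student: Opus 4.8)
The whole statement is squeezed out of a single hypothesis: the graph of the multiplication, $\Sigma=\{(g,h,gh)\,:\,(g,h)\in G_2\}$, is Lagrangian in $(G\times G\times(-G),\Omega)$ with $\Omega=pr_1^*\omega+pr_2^*\omega-pr_3^*\omega$. So the plan is first to unpack the two pieces of information this encodes. Counting dimensions, $\dim\Sigma=\dim G_2=2\dim G-\dim M$ must equal $\tfrac12\dim(G\times G\times(-G))=\tfrac32\dim G$, whence $\dim G=2\dim M$; I will reuse this identity in every dimension count. The isotropy of $\Sigma$ is precisely the multiplicativity relation $m^*\omega=pr_1^*\omega+pr_2^*\omega$, i.e.\ infinitesimally $\omega(X,X')+\omega(Y,Y')=\omega(Tm(X,Y),Tm(X',Y'))$ for composable pairs of tangent vectors. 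This identity is the workhorse behind all four items, and the entire game is to feed it the right tangent vectors of $\Sigma$.

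I would prove (1) by working at the unit point $(\epsilon(x),g,g)\in\Sigma$, which comes from the left unit law $\epsilon(x)g=g$. Fix $g\in G_y^x$, $v\in T_gG^x$ and $w\in T_gG_y$. The left unit law gives, from the curve with constant first factor and a curve in $G^x$ tangent to $v$, the vector $(0_g,v,v)\in T_{(\epsilon(x),g,g)}\Sigma$. The decisive and least obvious vector I would obtain from the curve $s\mapsto(g\cdot\iota(d(s)),\,d(s))$, where $d(s)$ is a curve in $G_y$ with $\dot d(0)=w$: it lies in $G_2$ because $\beta(g\,d(s)^{-1})=\alpha(d(s))$, and it multiplies to the constant $g$ since $(g\,d(s)^{-1})d(s)=g\,\epsilon(y)=g$, so it produces $(A,w,0)\in T_{(\epsilon(x),g,g)}\Sigma$ with $A=Tm(0_g,T\iota(w))$. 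Substituting these two vectors into the isotropy relation collapses everything to $\omega(w,v)=0$, so $T_gG_y\subseteq(T_gG^x)^\perp$; since both subspaces have dimension $\tfrac12\dim G$ (here is where $\dim G=2\dim M$ enters) the inclusion is an equality. I expect this construction of the second tangent vector via the inversion map to be the main obstacle: the rest is bookkeeping, but one must see that inverting the moving factor is exactly what trivializes the product and kills the $Tm$ term.

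Items (2) and (3) I would then do in quick succession. For (2), differentiating $m\circ(\epsilon,\epsilon)\circ\Delta=\epsilon$ shows that for $u=T\epsilon(a)$, $v=T\epsilon(b)$ the diagonal triples $(u,u,u),(v,v,v)$ are tangent to $\Sigma$ at $(\epsilon(x),\epsilon(x),\epsilon(x))$; isotropy reads $\omega(u,v)+\omega(u,v)-\omega(u,v)=0$, so $\epsilon(M)$ is isotropic, and as $\dim\epsilon(M)=\dim M=\tfrac12\dim G$ it is Lagrangian. For (3), I would restrict the isotropy of $\Sigma$ to its submanifold $\{(g,\iota(g),\epsilon(\alpha(g)))\}$, which sits inside $\Sigma$ because $g\,g^{-1}=\epsilon(\alpha(g))$; its tangent vectors are $(X,T\iota(X),T(\epsilon\circ\alpha)(X))$, and the third slot lands in $T\epsilon(M)$, isotropic by (2), so that term drops and isotropy becomes $\omega(X,X')+\omega(T\iota X,T\iota X')=0$, i.e.\ $\iota^*\omega=-\omega$.

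Finally, for (4), uniqueness is immediate since $\beta$ is a surjective submersion, so $\beta^*$ is injective and $\beta^*\{f,h\}_M=\{\beta^*f,\beta^*h\}_\omega$ determines $\{f,h\}_M$. For existence I must check that $\{\beta^*f,\beta^*h\}_\omega$ is constant along the (connected) $\beta$-fibers, and this is where (1) pays off: because $\beta^*f$ is constant on $G_y$, the equation $\omega(X_{\beta^*f},\cdot)=d(\beta^*f)$ forces $\omega(X_{\beta^*f},Y)=df(T\beta\,Y)=0$ for $Y\in T_gG_y$, so $X_{\beta^*f}\in(T_gG_y)^\perp=T_gG^x$ is tangent to the $\alpha$-fibers, and likewise for $\beta^*h$. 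The $\alpha$-fibers foliate $G$, so their tangent distribution is involutive and $X_{\{\beta^*f,\beta^*h\}}=\pm[X_{\beta^*f},X_{\beta^*h}]$ is again tangent to them; equivalently $d\{\beta^*f,\beta^*h\}$ annihilates $TG_y$, which is exactly constancy along $\beta$-fibers. This defines a bivector $\Pi$ on $M$ making $\beta$ Poisson, and the Jacobi identity descends through the injective homomorphism $\beta^*$. For the last claim I would use $\alpha=\beta\circ\iota$ together with $\iota^*\omega=-\omega$ from (3): since an anti-symplectomorphism is anti-Poisson, $\{\alpha^*f,\alpha^*h\}_\omega=\{\iota^*\beta^*f,\iota^*\beta^*h\}_\omega=-\iota^*\{\beta^*f,\beta^*h\}_\omega=-\alpha^*\{f,h\}_M$, so $\alpha$ is anti-Poisson with respect to the same $\Pi$.
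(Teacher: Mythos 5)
First, a point of reference: the paper never proves this theorem itself --- it is quoted with a pointer to \cite{CosteDazordWeinstein,lecturesontheintegrabilityofliebrackets} --- so there is no in-paper proof to compare against, and your proposal must be judged on its own. Items (1)--(3) are correct as written, and they are derived directly from the paper's actual definition (the graph $\Sigma$ of $m$ being Lagrangian in $G\times G\times(-G)$), which is exactly the spirit of the cited references. The dimension identity $\dim G=2\dim M$, the two tangent vectors $(0,v,v)$ and $\bigl(Tm(0_g,T\iota(w)),w,0_g\bigr)$ at $(\epsilon(x),g,g)$ (the zero in the first slot lives at $\epsilon(x)$, not at $g$ --- a harmless notational slip), and the restriction of the isotropy of $\Sigma$ to the graphs of $x\mapsto(\epsilon(x),\epsilon(x),\epsilon(x))$ and $g\mapsto(g,\iota(g),\epsilon(\alpha(g)))$ all check out.

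Item (4), however, has a genuine gap. Your argument only shows that $d\{\beta^*f,\beta^*h\}$ annihilates $TG_y$, i.e.\ that the bracket of $\beta$-basic functions is \emph{locally} constant on $\beta$-fibers, and you pass to global constancy through a parenthetical connectedness assumption that is not among the hypotheses of the theorem. This is not academic in the present paper: for the cotangent groupoid $T^*\textrm{Trian}(2)\rightrightarrows\mathfrak{g}^*$ used in Section \ref{generatingalgebras}, each $\widetilde\beta$-fiber is diffeomorphic to $\textrm{Trian}(2)$, which has four connected components, yet the theorem is invoked there. The repair stays entirely inside your method. Feed the isotropy identity the pair $\bigl(0_g,X_{\beta^*f}(h'),Tm(0_g,X_{\beta^*f}(h'))\bigr)$ --- composable precisely because you already showed $X_{\beta^*f}$ is tangent to the $\alpha$-fibers --- and an arbitrary $\bigl(X,Y,Tm(X,Y)\bigr)$; using $\beta\circ m=\beta\circ\mathrm{pr}_2$ on $G_2$ and the surjectivity of $Tm$ (the paper records that $m$ is a submersion), nondegeneracy of $\omega$ yields $X_{\beta^*f}(gh')=Tm\bigl(0_g,X_{\beta^*f}(h')\bigr)=Tl_g\bigl(X_{\beta^*f}(h')\bigr)$, i.e.\ the Hamiltonian vector fields of $\beta$-basic functions are left-invariant. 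Since $\beta\circ l_g=\beta$ on the $\alpha$-fiber where $l_g$ is defined, this gives $\{\beta^*f,\beta^*h\}(gh')=\{\beta^*f,\beta^*h\}(h')$; and because any two points of a $\beta$-fiber are related by $g_1=(g_1g_2^{-1})g_2$, the bracket is constant on every fiber, connected or not. With that replacement, the rest of your item (4) --- uniqueness from injectivity of $\beta^*$, descent of the Leibniz and Jacobi identities, and the anti-Poisson property of $\alpha$ via $\alpha=\beta\circ\iota$ together with item (3) --- goes through as you wrote it.
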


\begin{remark} The theorem above states that the $\alpha$-fibers and
  $\beta$-fibers are symplectically orthogonal. A pair of fibrations
  satisfying that property are called a \textbf{\emph{dual pair}}. This dual
  pair property will be the keystone of our construction.
\end{remark}

\begin{remark} When dealing with the symplectic groupoid $T^*G$, where
  $G$ is a groupoid, the Poisson structure on $A^*G$ is the
  (linear) Poisson structure of the dual of a Lie algebroid (see
  Appendix \ref{poisson}).
\end{remark}

The next theorem is the core of our Hamilton--Jacobi theory. It
basically says that Lagrangian bisections of a symplectic groupoid
induce Poisson transformations in the base.

\begin{theorem}\label{PoissonIsomorphisms}[see \cite{CosteDazordWeinstein}]\label{CDW} Let $G$ be a symplectic groupoid with
  source and target $\alpha$ and $\beta$ respectively. Let $\lag$ be a
  Lagrangian submanifold of $G$ such that $\alpha_{|\lag}$ is a (local)
  diffeomorphism. Then
\begin{enumerate}
\item $\beta_{|\lag}:\lag\rightarrow M$ is a (local) diffeomorphism as well.
\item The mapping
  $\hat{\lag}=\alpha\circ(\beta_{|\lag})^{-1}:M\rightarrow M$ and its
  inverse, $\beta\circ (\alpha_{|\lag})^{-1}$,
  are (local) Poisson isomorphisms.
\end{enumerate}
\end{theorem}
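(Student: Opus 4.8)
The plan is to prove the two assertions separately, with the \emph{dual pair} property \eqref{dualpair} of Theorem~\ref{theorem-sg} carrying all the geometric weight. Throughout I fix $g\in\lag$ and write $x=\alpha(g)$, $y=\beta(g)$, so $g\in G^x_y$, $\hat\lag(y)=x$. Since $\epsilon(M)$ is Lagrangian (Theorem~\ref{theorem-sg}(2)) we have $\dim M=\tfrac12\dim G$, hence $\dim\lag=\dim M$ and the $\alpha$- and $\beta$-fibres each have dimension $\dim M$; recall also $\ker T_g\alpha=T_gG^x$ and $\ker T_g\beta=T_gG_y$. For (1), the hypothesis that $\alpha_{|\lag}$ is a local diffeomorphism at $g$ is exactly $T_g\lag\cap T_gG^x=0$. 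I would take symplectic orthogonals of this intersection: because $T_g\lag$ is Lagrangian, $(T_g\lag)^\perp=T_g\lag$, and by the dual pair property $(T_gG^x)^\perp=T_gG_y$, so $(T_g\lag\cap T_gG^x)^\perp=T_g\lag+T_gG_y=T_gG$. A dimension count ($\dim T_g\lag+\dim T_gG_y=2\dim M=\dim G$) then forces $T_g\lag\cap T_gG_y=0$, i.e.\ $T_g\beta_{|T_g\lag}$ is an isomorphism and $\beta_{|\lag}$ is a local diffeomorphism. This is the easy half.

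For (2), the map $\hat\lag=\alpha\circ(\beta_{|\lag})^{-1}$ is a local diffeomorphism as a composite of such, so the content is the Poisson property. Here I would invoke the characterisation of $\Pi$ from Theorem~\ref{theorem-sg}(4): writing $\Lambda$ for the nondegenerate Poisson bivector associated with $\omega$ on $G$, the statements that $\beta$ is Poisson and $\alpha$ is anti-Poisson become, at $g$,
\[
\Pi_y(\xi,\eta)=\Lambda_g\big((T_g\beta)^*\xi,(T_g\beta)^*\eta\big),\qquad
\Pi_x(\zeta,\theta)=-\Lambda_g\big((T_g\alpha)^*\zeta,(T_g\alpha)^*\theta\big),
\]
for $\xi,\eta\in T_y^*M$ and $\zeta,\theta\in T_x^*M$. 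Since $\hat\lag(y)=x$, the Poisson condition $\Pi_y\big((T_y\hat\lag)^*\zeta,(T_y\hat\lag)^*\theta\big)=\Pi_x(\zeta,\theta)$ follows, after substituting $\xi=(T_y\hat\lag)^*\zeta$, $\eta=(T_y\hat\lag)^*\theta$ and using $(T_g\beta)^*\xi=P^*\zeta$ with $P:=T_y\hat\lag\circ T_g\beta$, from the single linear-algebra identity
\[
\Lambda_g\big(P^*\zeta,P^*\theta\big)=-\Lambda_g\big((T_g\alpha)^*\zeta,(T_g\alpha)^*\theta\big).
\]

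The crux is this identity, and it is where both hypotheses re-enter. Set $Q:=T_g\alpha$ and let $u$, $w$ be the vectors $\omega$-dual to $P^*\zeta$ and $Q^*\zeta$, so $\omega_g(u,\cdot)=\langle\zeta,P\,\cdot\rangle$ and $\omega_g(w,\cdot)=\langle\zeta,Q\,\cdot\rangle$. Since $\ker P=T_gG_y$ and $\ker Q=T_gG^x$, the dual pair property gives $u\in(T_gG_y)^\perp=T_gG^x=\ker Q$ and $w\in(T_gG^x)^\perp=T_gG_y=\ker P$. On the other hand $\hat\lag\circ\beta$ and $\alpha$ coincide on $\lag$, so $P$ and $Q$ agree on $T_g\lag$; evaluating $\omega_g(u-w,\cdot)$ on $T_g\lag$ gives zero, whence $u-w\in(T_g\lag)^\perp=T_g\lag$ by the Lagrangian condition. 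Writing $u=w+\ell$ with $\ell\in T_g\lag$ and using $Pw=0$, $Qu=0$, $P|_{T_g\lag}=Q|_{T_g\lag}$, I get $Pu=P\ell=Q\ell=Qu-Qw=-Qw$, which is precisely the desired identity after pairing with $\theta$. The inverse $\beta\circ(\alpha_{|\lag})^{-1}=\hat\lag^{-1}$ is then automatically a local Poisson diffeomorphism, and in the global case every ``local'' above becomes genuine with the same computation verbatim.

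I expect the dimension bookkeeping in (1) to be routine; the genuine obstacle is the pointwise identity in (2), and specifically the observation that $u-w$ lands in the \emph{Lagrangian} subspace $T_g\lag$. It is the interplay of the Lagrangian condition $(T_g\lag)^\perp=T_g\lag$ with the dual pair property $(T_gG^x)^\perp=T_gG_y$ — the former matching the $\alpha$- and $\beta$-lifts of $\zeta$ on $T_g\lag$, the latter placing $u$ and $w$ in complementary fibres — that collapses $Pu$ and $-Qw$ onto each other. Keeping careful track of which lift is annihilated by which fibre, and fixing one consistent sign convention for $\Lambda$ so that the relative sign coming from ``anti-Poisson'' is the one appearing in the displayed relatedness formulas, is the only delicate point.
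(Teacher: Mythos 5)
Your proof is correct, but there is nothing in the paper to compare it against line by line: the paper does not prove this theorem at all, it imports it from the literature (hence the bracketed citation to \cite{CosteDazordWeinstein}), and the only follow-up is a remark on the local/global distinction. What you have produced is therefore a self-contained proof that uses exactly the two facts the paper does record, namely items (1) and (4) of Theorem \ref{theorem-sg}. Both halves check out. In part (1), the identity $(A\cap B)^{\perp}=A^{\perp}+B^{\perp}$, the Lagrangian condition $(T_g\lag)^{\perp}=T_g\lag$, the dual-pair property $(T_gG^x)^{\perp}=T_gG_y$, and the count $\dim\lag=\dim M=\tfrac12\dim G$ (valid since $\epsilon(M)$ is Lagrangian) do convert transversality to the $\alpha$-fibre into transversality to the $\beta$-fibre. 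In part (2), with the convention $\Lambda(\mu,\nu)=\omega(\tilde\mu,\tilde\nu)$ for $i_{\tilde\mu}\omega=\mu$, one has $\Lambda_g(P^*\zeta,P^*\theta)=-\langle\theta,Pu\rangle$ and $\Lambda_g(Q^*\zeta,Q^*\theta)=-\langle\theta,Qw\rangle$, so the required identity is precisely $Pu=-Qw$, and your chain $Pu=P\ell=Q\ell=Qu-Qw=-Qw$ uses each hypothesis (dual pair for $Qu=0$, $Pw=0$; Lagrangian for $\ell\in T_g\lag$; the definition of $\hat\lag$ for $P=Q$ on $T_g\lag$) exactly once and is sound; the inverse of a Poisson diffeomorphism is automatically Poisson, so the last claim is also fine. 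For context, the classical Coste--Dazord--Weinstein treatment runs through a different mechanism: left translation by a bisection is a symplectomorphism of $G$ exactly when the bisection is Lagrangian, and the Poisson property of the induced base map is extracted from the bracket calculus of $\alpha$-basic and $\beta$-basic functions (which Poisson-commute, an equivalent formulation of the dual-pair property). Your pointwise linear-algebra route is more elementary, never touches the groupoid multiplication --- in line with the paper's own remark that only the source, target and dual-pair structure of the cotangent groupoid are actually used --- and applies verbatim when $\alpha_{|\lag}$ is only a local diffeomorphism rather than a global bisection. The single point requiring care is the one you flag: fixing one sign convention relating $\Lambda$ and $\omega$ so that the minus sign from ``anti-Poisson'' lands where your displayed relatedness formulas put it.
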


\begin{remark} We wrote in the above theorem the mappings in the
  global diffeomorphism case. Of course, when the diffeomorphism is
  local, these maps are restricted to the corresponding open sets.
\end{remark}

The Lagrangian submanifolds used above are usually
  called Lagrangian bisections. They are bisections that happen to be
  Lagrangian submanifolds at the same time. See the appendices
  included in this paper for a definition of bisection.


\section{Generating functions and Hamilton-Jacobi theory for Poisson Manifolds}\label{Generating functions for a class of Poisson Manifolds}
\subsection{The Geometric Setting}
In this section we develop a theory of generating functions and a Hamilton--Jacobi theory using the geometric
structures introduced along the previous sections. We encourage the
reader, especially the reader not familiar with Lie groupoids and Lie
algebroids, to keep in mind the two instances of the Hamilton-Jacobi theory sketched in the introduction
of this paper (the classical and the Lie-Poisson cases). Our construction mimics what happens in those cases.
Before going to our theory of generating function  and Hamilton--Jacobi theory, we need to introduce one more
construction. It basically says that the $\mathbb{R}$ factor we
mentioned in Remark \ref{Rfactor1} and Remark \ref{Rfactor2} does not
modify the groupoid structure. This theorem was already sketched in \cite{GeHJ}.

\begin{proposition} Let $G$ be a Lie groupoid with source, target,
  identity map, inversion and multiplication $\alpha, \ \beta, \
  \epsilon, \ \iota$ and $m$. Then the manifold
  $\rgroupoid$ is a Lie groupoid with base $\mathbb{R}\times M$ and with structure
  mappings
\begin{equation}\label{rgroupoid}
\begin{array}{rccl}
\ralpha:& \rgroupoid&\longrightarrow & \rbase \\ 
& (t,g) &\to &\ralpha(t,g)=(t,\alpha(g)) \\ \noalign{\bigskip}

\rbeta:& \rgroupoid&\longrightarrow & \rbase \\ 
& (t,g) &\to &\rbeta(t,g)=(t,\beta(g)) \\ \noalign{\bigskip}

\repsilon:& \rbase&\longrightarrow & \rgroupoid \\ 
& (t,x) &\to &\repsilon(t,x)=(t,\epsilon(x)) \\ \noalign{\bigskip}

\riota:& \rgroupoid&\longrightarrow & \rgroupoid \\ 
& (t,g) &\to &\riota(t,g)=(t,\iota(g)) \\ \noalign{\bigskip}

\rmultiplication: &({\rgroupoid})_2 &\longrightarrow &\rgroupoid\\
&((t,g),(t,h))&\to&\rmultiplication ((t,g),(t,h))=(t,m(g,h)).
\end{array}
\end{equation}
Notice that the condition $\rbeta(t_1,g)=\ralpha(t_2,h)$ is
equivalent to $t_1=t_2$ and $\beta(g)=\alpha(h)$.
\end{proposition}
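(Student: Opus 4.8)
The plan is to recognize $\rgroupoid$ as the direct product of two Lie groupoids: the given groupoid $G\rightrightarrows M$ and the \emph{unit} (or trivial) groupoid $\mathbb{R}\rightrightarrows\mathbb{R}$, in which source, target, identity and inversion are all $\id_{\mathbb{R}}$ and the only composable pairs are $(t,t)$ with product $t$. Since the direct product of two Lie groupoids is again a Lie groupoid and the product structure maps are precisely the componentwise ones, comparison with the formulas \eqref{rgroupoid} would conclude the proof immediately. To keep the argument self-contained, however, I would instead verify the groupoid axioms directly, exploiting that the $\mathbb{R}$ factor is carried along trivially.

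First I would check that $\ralpha$ and $\rbeta$ are surjective submersions; this is immediate, since each is $\id_{\mathbb{R}}\times\alpha$ (resp.\ $\id_{\mathbb{R}}\times\beta$) and $\alpha,\beta$ are surjective submersions by hypothesis. Next I would identify the set of composable pairs. The condition $\rbeta(t_1,g)=\ralpha(t_2,h)$ reads $(t_1,\beta(g))=(t_2,\alpha(h))$, forcing $t_1=t_2$ and $\beta(g)=\alpha(h)$, exactly as noted after the statement; hence $(\rgroupoid)_2\cong\mathbb{R}\times G_2$, on which $\rmultiplication$ is visibly smooth.

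With the composable pairs understood, the remaining axioms reduce to those of $G$ because the first coordinate is never touched. Associativity of $\rmultiplication$ follows from associativity of $m$, once one observes that a triple $((t,g),(t,h),(t,k))$ is composable iff $(g,h,k)$ is; the unit laws follow from $\alpha(\epsilon(x))=\beta(\epsilon(x))=x$ together with $m(\epsilon(\beta(g)),g)=g=m(g,\epsilon(\alpha(g)))$; and the inverse laws follow from $m(g,\iota(g))=\epsilon(\alpha(g))$ and $m(\iota(g),g)=\epsilon(\beta(g))$, since in each identity the $t$-component is preserved verbatim. Smoothness of $\repsilon$ and $\riota$ is clear from their product form.

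There is no genuine obstacle here: the statement is essentially bookkeeping. The only point deserving a moment's care is verifying that the composable pairs form a smooth embedded manifold, so that the word \emph{Lie} groupoid, and not merely groupoid, is justified; this is exactly where the observation $t_1=t_2$ matters, as it identifies $(\rgroupoid)_2$ diffeomorphically with $\mathbb{R}\times G_2$, which is smooth because $G_2$ is (the fibered product defining $G_2$ being transverse, as $\alpha,\beta$ are submersions). Everything else is a direct transcription of the groupoid identities for $G$.
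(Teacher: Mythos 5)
Your proof is correct, and it fills in exactly the routine verification the paper itself skips: the paper states only ``The proof is obvious and we omit it,'' and the bookkeeping you carry out (identifying $(\rgroupoid)_2$ with $\mathbb{R}\times G_2$ and checking that every axiom reduces componentwise to the corresponding axiom of $G$) is precisely what that remark presupposes. Your opening observation that $\rgroupoid$ is the direct product of $G\rightrightarrows M$ with the trivial groupoid $\mathbb{R}\rightrightarrows\mathbb{R}$ is a clean way to package the same argument, not a genuinely different route.
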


The proof is obvious and we omit it. If $AG$ is the Lie
algebroid associated to $G$, then the Lie algebroid associated to
$\rgroupoid$ is just $\rtau:\ra\rightarrow \rbase$, where
$\rtau(t,X)=(t,\tau(X))$ and the addition is just
$(t,X)+(t,Y)=(t,X+Y)$ for $t\in \mathbb{R}$ and $X, \ Y\in AG$. The
anchor and the Lie bracket on sections have analogous
expressions. Actually one can think of $\mathbb{R}\times G$ and $\ra$ as
``time-dependent'' structures, although we regard them just as groupoids. The dual of this Lie algebroid is
simply $A^*(\rgroupoid)=\mathbb{R}\times A^*G$. The
cotangent bundle of $\rgroupoid$ is a (symplectic) Lie groupoid with base
$\mathbb{R}\times A^*G$ and structure maps $\widetilde\ralpha, \
\widetilde\rbeta, \
\widetilde\repsilon, \ \widetilde\riota$ and
$\widetilde\rmultiplication$. The next diagram summarizes the
situation, note the commutativity:
\begin{figure}[H]\[
\xymatrix{
&T^*(\mathbb{R}\times G)\ar[dl]_{\tilde{\ralpha}}
\ar[dr]^{\tilde{\rbeta}}\ar[dd]^{\pi_{G}}&\\
\mathbb{R}\times A^*G\ar[dd]^{\rtau}&&\mathbb{R}\times A^*G \ar[dd]^{\rtau}\\
&\mathbb{R}\times G \ar@{}[ul]|\circlearrowright
\ar[dl]_{\ralpha}\ar[dr]^{\rbeta} \ar@{}[ur]|\circlearrowright& \\
\mathbb{R}\times M&&\mathbb{R}\times M.
}\]\caption{Geometric interpretation.} \label{diagram3}
\end{figure}

The reader should note that the diagram above is just the
generalization of diagrams in  Figures \ref{diagramZZZ} and \ref{diagrama222}. When
$G$ is the pair groupoid or a Lie group we recover these two
settings. A point of $T^*(\rgroupoid)$ is given by $(t,e,\alpha_g)$,
representing the cotangent vector $e\,dt+\alpha_g$, that is, $e$
is the time conjugate momentum. With this notation, the maps
$\widetilde\ralpha$ and $\widetilde\rbeta$ read as follows
\begin{equation}\label{diagram4}
\begin{array}{rccl}
\widetilde\ralpha:& T^*(\rgroupoid)&\longrightarrow & \rad \\ 
& (t,e,\varpi_g) &\to &\widetilde\ralpha(t,e,\varpi_g)=(t,\widetilde\alpha(\varpi_g)), \\ \noalign{\bigskip}

\widetilde\rbeta:& T^*(\rgroupoid)&\longrightarrow & \rad \\ 
& (t,e,\varpi_g) &\to &\widetilde\rbeta(t,e,\varpi_g)=(t,\widetilde\beta(\varpi_g)).
\end{array}
\end{equation}

\begin{remark} It is easy to see that the Poisson structure on
  $A^*(\rgroupoid)=\rad$ is given by the product of the $0$ Poisson
  structure on $\mathbb{R}$ and the natural Poisson structure on
  $A^*G$, say $\Pi$. So the Poisson structure on $\rad$ is just the one
  described at the beginning of the section, recall that we refer to
  this structure as $\pir$. Recall that by Theorem \ref{theorem-sg}
  we have a dual pair structure, considering the natural symplectic
  structure $\omega_{\rgroupoid}$ on $T^*(\rgroupoid)$ and the
  aforementioned Poisson structure $\pir$ on $\rad$.
\end{remark}

\subsection{Motivation}

Let us start with a (possibly time-dependent) Hamiltonian system on the dual
of an integrable Lie algebroid ($\mathbb{R}\times A^*G, \ \pir, \
H$). We clarify now the situation.
\begin{enumerate}
\item $\mathbb{R}\times A^*G$ is the \emph{extended phase space}, that is,
  the product of the phase space $A^*G$ and the time, $\mathbb{R}$. We
  are assuming that our (possibly time-dependent) Hamiltonian system
  evolves on the dual of a Lie algebroid. This framework is enough to cover
  all interesting cases:
\begin{enumerate}
\item cotangent bundles;
\item duals of Lie algebras (Lie--Poisson systems);
\item duals of action Lie algebroids (semi-direct products: heavy top,
  truncated Vlasov--Poisson equation...);
\item reduced systems (dual of the Atiyah algebroid),
\end{enumerate}
among others. Notice that this is the space where
A. Weinstein \cite{Weinstein} points out that the dynamics of Hamiltonian systems
should take place.

\item $\pir$ is the \emph{Poisson structure} in $\mathbb{R}\times A^*G$ given by the product of the $0$
  Poisson structure on $\mathbb{R}$ and the natural linear Poisson
  structure $\Pi$ on $A^*G$ described in Section \ref{poisson} of
  Appendix \ref{liealgebroids}. So $\pir=0\times \Pi$.

\item $H:\mathbb{R}\times A^*G\rightarrow \mathbb{R}$ is the
  \emph{Hamiltonian function}. Once that function is given, the dynamics is
  produced by the suspension of the Hamiltonian vector field $X_H=\pir^{\sharp}(dH)$, i.e.,
  $\frac{\partial}{\partial t}+X_H$, 
  which is a vector field on $\mathbb{R}\times A^*G$.
\end{enumerate}

Assume that we have the same Hamiltonian system
as in the previous section, ($\mathbb{R}\times A^*G, \ \pir, \
H$). Since any Lagrangian bisection $\lag$ of the symplectic
groupoid $T^*(\rgroupoid)$ produces a Poisson isomorphism (defined in Theorem \ref{PoissonIsomorphisms}) on the base,
which we will denote by $\hat{\lag}:\rad\rightarrow\rad$, then the
equations of motion $\frac{\partial}{\partial t}+X_H$ transform to
another Hamiltonian equations (see Theorem
\ref{hamiltoniantransformation}). In this section we give an explicit
formula for the new Hamiltonian, and therefore for the transformed Hamilton's
equations. This result is useful to compute the local error of the numerical methods developed in the next sections. After that, we look for explicit coordinates where we can
compute the Lagrangian submanifolds of interest. We are mainly
interested in computing the perturbations, nearby Lagrangian submanifolds, of a very important
Lagrangian submanifold, the submanifold of identities,
$\tilde\epsilon(A^*G)$, of the cotangent groupoid under consideration, $T^*G$. These Lagrangian submanifolds happen to be
generally non-horizontal, even in the classical or Lie--Poisson settings,
for the canonical projection $\pi_G$. We propose a local solution that
allows us to develop our Poisson integrators under the situations
described in the previous sections.

\subsubsection{ The Classical Case}
The reader familiar with the classical Hamilton--Jacobi theory should
 notice that we are we are going to describe in what follows the geometric counterpart of the
 usually stated fact that Hamilton's equations for the Hamiltonian $H$
in the coordinates $(t,q^i,p_i)$, say
\begin{equation}\label{hamilton3}
\begin{array}{rl}
\displaystyle\frac{d q^i}{dt}(t)=&\displaystyle\frac{\partial H}{\partial
  p_i}(q^i(t),p_i(t)),\\ \noalign{\medskip}
\displaystyle\frac{dp_i}{dt}(t)=&-\displaystyle\frac{\partial H}{\partial q^i}(q^i(t),p_i(t)),
\end{array}
\end{equation}
transform via a generating function
 $S$ (see equation \eqref{coordinatechange}) to another system of Hamilton's equations in coordinates $(t,x^i,y_i)$ for
 a new Hamiltonian function $K$, given by the expression
\begin{equation}\label{relation}
\displaystyle\frac{\partial S}{\partial t}+H(t,q^i,p_i)=K(t,x^i,y_i).
\end{equation}

This could be useful in several cases. For instance, if the new
Hamiltonian $K$ happens to depend only on the $(x^i)$ coordinates,
then the Hamilton's equations become
\begin{equation}\label{hamilton4}
\begin{array}{rl}
\displaystyle\frac{d x^i}{dt}(t)=&\displaystyle\frac{\partial K}{\partial
  y_i}(x^i(t))=0,\\ \noalign{\medskip}
\displaystyle\frac{d y_i}{dt}(t)=&-\displaystyle\frac{\partial K}{\partial x^i}(x^i(t)),
\end{array}
\end{equation}
which are trivially integrable. The Hamilton--Jacobi theory is understood later on as the particular case $K=0$, that is, the new dynamics are trivial. These results will follow from our theory of generating functions.


\subsection{Fundamental Lemma}

Recall that given a Lagrangian bisection $\lag$ in $T^*(\mathbb{R}\times G)$ then, using Theorem \ref{PoissonIsomorphisms},
we have the induced Poisson isomorphism
\[
\hat{\lag}=\widetilde\ralpha\circ(\widetilde\rbeta_{|\lag})^{-1}:
\mathbb{R}\times A^*G\longrightarrow \mathbb{R}\times A^*G.
\]
Note that the induced transformation by the Lagrangian bisection $\lag$ is obtained as follows:
\begin{enumerate}
\item Take $(t_0,\mu_x)\in \mathbb{R}\times A^*G$.
\item Find the unique point $(t_1,e_1,\varpi_g)$ in $\lag$ such that $\widetilde\rbeta(t_1,e_1,\varpi_g)=(t_0,\mu_x)$. That is equivalent to 
\[
\begin{array}{l}
t_0=t_1,\\ \noalign{\medskip}
\widetilde\beta(\varpi_g)=\mu_x.
\end{array}
\]
\item Finally, $\hat\lag(t_0,\mu_x)=\widetilde\ralpha(t_1,e_1,\varpi_g)$, and, by definition of $\ralpha$ we conclude 
\[
\hat\lag(t_0,\mu_x)=(t_0,\widetilde\alpha(\varpi_g)).
\]
\end{enumerate}
In that sense, Lagrangian bisections in $T^*(\rgroupoid)$ respect the product structure in $\mathbb{R}\times A^*G$ and parametrize time--dependent Poisson automorphisms of $A^*G$.
The next lemma says how the time evolution,
$\frac{\partial}{\partial t}$, transforms under the
mapping $\hat{\lag}$. This is the hard part of the proof of the Main Theorem, since the
evolution of the Hamiltonian vector field $X_H$ is known because
$\hat{\lag}$ is a Poisson mapping. Once more, we assume that the
induced map is a global diffeomorphism, since the local case follows
changing the domains and range of definition by the appropriate open
sets. We will denote by $e$ the coordinate function  $e:T^*(\mathbb{R}\times G)\rightarrow
\mathbb{R}$ given by $e(t,e,\varpi_g)=e$.

\begin{lemma}\label{fundamental} Let $\lag$ be a Lagrangian bisection of
  the symplectic groupoid $T^*(\rgroupoid)$. Then the induced Poisson
  mapping, $\hat{\lag}:\rad\rightarrow \rad$, satisfies
\begin{equation}\label{time}
\hat{\lag}_*(\displaystyle\frac{\partial}{\partial
  t})=\displaystyle\frac{\partial}{\partial t}+X_{\sigma_2^*e}
\end{equation}
where $\sigma_2=(\widetilde\ralpha_{|\lag})^{-1}$.
\end{lemma}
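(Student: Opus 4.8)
The plan is to compute the pushforward at an arbitrary point $p\in\rad$ by writing $\hat\lag=\widetilde\ralpha\circ\sigma_1$ with $\sigma_1:=(\widetilde\rbeta_{|\lag})^{-1}$, and to exploit two elementary features of the coordinate function $e$. First, on $(T^*(\rgroupoid),\omega_{\rgroupoid})$ one has $X_e=\partial/\partial t$, since $\omega_{\rgroupoid}=dt\wedge de+\omega_G$ under $T^*(\rgroupoid)\cong T^*\mathbb{R}\times T^*G$ and $i_{\partial/\partial t}\omega_{\rgroupoid}=de$. Second, because $\widetilde\ralpha(t,e,\varpi_g)=(t,\widetilde\alpha(\varpi_g))$ and $\widetilde\rbeta(t,e,\varpi_g)=(t,\widetilde\beta(\varpi_g))$ both forget $e$ and preserve $t$, the field $X_e=\partial/\partial t$ is pushed by each of $\widetilde\ralpha$ and $\widetilde\rbeta$ to $\partial/\partial t$ on $\rad$. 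Writing $q:=\sigma_1(p)$ and $v:=\sigma_{1*}(\partial/\partial t|_p)\in T_q\lag$, I would note that $\hat\lag_*(\partial/\partial t)=\widetilde\ralpha_* v$ and, since $\widetilde\rbeta\circ\sigma_1=\id$, that $\widetilde\rbeta_* v=\partial/\partial t=\widetilde\rbeta_* X_e$. Hence $w:=v-X_e|_q$ lies in $\ker\widetilde\rbeta_*$ and $\widetilde\ralpha_* v=\partial/\partial t+\widetilde\ralpha_* w$, so the whole statement reduces to proving $\widetilde\ralpha_* w=X_{\sigma_2^* e}$ at $\hat\lag(p)=\widetilde\ralpha(q)$.

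The key idea is to introduce the function $F:=\widetilde\ralpha^*(\sigma_2^* e)\in C^\infty(T^*(\rgroupoid))$ and to compare its Hamiltonian vector field with $w$. Since $\sigma_2=(\widetilde\ralpha_{|\lag})^{-1}$, for every $q'\in\lag$ one has $F(q')=e(\sigma_2(\widetilde\ralpha(q')))=e(q')$; that is, $F$ and $e$ agree on $\lag$. Consequently $d(F-e)$ annihilates $T\lag$, so that $\omega_{\rgroupoid}(X_F-X_e,u)=d(F-e)(u)=0$ for every $u\in T_q\lag$, which says $X_F|_q-X_e|_q\in(T_q\lag)^{\perp}$. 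This is exactly where the Lagrangian hypothesis enters: $(T_q\lag)^{\perp}=T_q\lag$, whence $X_F|_q-X_e|_q\in T_q\lag$. On the other hand, $F$ is constant along the $\widetilde\ralpha$-fibers, so $dF$ annihilates $\ker\widetilde\ralpha_*$; by the dual pair property (Theorem \ref{theorem-sg}(1)), which identifies the $\widetilde\rbeta$-fibers as the symplectic orthogonal of the $\widetilde\ralpha$-fibers, this forces $X_F|_q\in\ker\widetilde\rbeta_*$.

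I would then combine the two facts. The vector $w+X_F=v+(X_F-X_e)$ is a sum of vectors tangent to $\lag$, hence lies in $T_q\lag$; and it is a sum of vectors in $\ker\widetilde\rbeta_*$, hence lies in $\ker\widetilde\rbeta_*$. But $\widetilde\rbeta_{|\lag}$ is a diffeomorphism (Theorem \ref{PoissonIsomorphisms}), so $\widetilde\rbeta_*$ is injective on $T_q\lag$ and $T_q\lag\cap\ker\widetilde\rbeta_*=\{0\}$. Therefore $w=-X_F|_q$. Finally, $\widetilde\ralpha$ is anti-Poisson (Theorem \ref{theorem-sg}(4)), so the Hamiltonian vector field of the pullback $F=\widetilde\ralpha^*(\sigma_2^* e)$ is $\widetilde\ralpha$-related to $-X_{\sigma_2^* e}$; hence $\widetilde\ralpha_* w=-\widetilde\ralpha_* X_F=X_{\sigma_2^* e}$, and $\hat\lag_*(\partial/\partial t)=\partial/\partial t+X_{\sigma_2^* e}$ follows.

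The main obstacle, and the reason this is the hard part of the Main Theorem, is precisely the $\partial/\partial t$ direction: unlike a genuine Hamiltonian field $X_H$, whose image under the Poisson map $\hat\lag$ is immediate from naturality, $\partial/\partial t$ is not Hamiltonian on $\rad$, so its pushforward cannot be read off directly. The device that overcomes this is the auxiliary function $F$, which matches $e$ along $\lag$; the two orthogonality inputs then pin $w$ down uniquely, namely the self-orthogonality $(T_q\lag)^{\perp}=T_q\lag$ of the Lagrangian $\lag$ and the dual pair orthogonality of the source and target fibers. The only remaining care is the sign bookkeeping in the anti-Poisson pushforward identity $\widetilde\ralpha_* X_{\widetilde\ralpha^* h}=-X_h\circ\widetilde\ralpha$, which is exactly what yields the correct sign of $X_{\sigma_2^* e}$ in \eqref{time}.
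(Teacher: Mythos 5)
Your proof is correct, and it is essentially a mirror image of the paper's argument rather than a copy of it. Both proofs rest on the same three pillars: the identity $X_e=\partial/\partial t$ and its $\widetilde\ralpha$- and $\widetilde\rbeta$-relatedness to $\partial/\partial t$; the dual pair property of Theorem \ref{theorem-sg}; and the tangency to $\lag$ of the Hamiltonian vector field of a function vanishing along the Lagrangian $\lag$. The difference is in which leg carries the pullback and how the conclusion is assembled. The paper's auxiliary function is $F=e-\widetilde\rbeta^*(\sigma_1^*e)$, which vanishes on $\lag$; its Hamiltonian vector field is tangent to $\lag$, is pushed by $\widetilde\rbeta$ (a Poisson map) to $\partial/\partial t+X_{-\sigma_1^*e}$ and by $\widetilde\ralpha$ to $\partial/\partial t$ (dual pair), yielding the intermediate identity \eqref{result}, $\hat{\lag}_*(\partial/\partial t+X_{-\sigma_1^*e})=\partial/\partial t$; a third step, using that $\hat{\lag}$ is Poisson together with $\sigma_1^*e\circ\hat{\lag}^{-1}=\sigma_2^*e$, then converts $X_{\sigma_1^*e}$ into $X_{\sigma_2^*e}$. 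You instead take the auxiliary function $\widetilde\ralpha^*(\sigma_2^*e)$, which agrees with $e$ on $\lag$, and pin down the defect $w=\sigma_{1*}(\partial/\partial t)-X_e$ pointwise by a uniqueness argument: $w+X_{\widetilde\ralpha^*(\sigma_2^*e)}$ lies in $T_q\lag\cap\ker T_q\widetilde\rbeta=\{0\}$, the intersection being zero precisely because $\lag$ is a bisection. The target Hamiltonian $\sigma_2^*e$ then appears directly by pushing through $\widetilde\ralpha$, using its anti-Poisson character---an ingredient of Theorem \ref{theorem-sg} that the paper's proof never invokes, and whose sign (anti-Poisson maps reverse Hamiltonian vector fields of pullbacks) you handle correctly. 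What each approach buys: the paper's route produces \eqref{result} as a standalone statement (the transformation sending $\partial/\partial t+X_{-\sigma_1^*e}$ to equilibrium), which is conceptually useful later; yours is more economical, needing one naturality input instead of two and no identity $\sigma_1^*e\circ\hat{\lag}^{-1}=\sigma_2^*e$, and it isolates explicitly the linear-algebraic mechanism $T_q\lag\cap\ker T_q\widetilde\rbeta=\{0\}$ through which the bisection hypothesis enters.
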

\begin{proof}
For the sake of clarity of the exposition we are going
to divide the proof into three steps. In the first one we show that
the Lagrangian bisection $\lag$ leaves in the $0$-level set of a certain
function $F$, in that sense, it satisfies certain Hamilton--Jacobi
equation. In the second step, using that $\lag$ is Lagrangian, we show
that the Hamiltonian vector of $F$ is tangent to $\lag$ and that is enough to compute the
push--forward of the time evolution plus certain Hamiltonian vector
field. In the third step we use that the push--forward of the
Hamiltonian vector fields is easy to treat under Poisson isomorphisms
to conclude the desired result. 

\vspace*{0.2cm}
\textbf{First step:} From now on we will write
$\sigma_1=(\widetilde\rbeta_{|\lag})^{-1}:\rad\rightarrow
\lag\subset T^*(\rgroupoid)$ and
$\sigma_2=(\widetilde\ralpha_{|\lag})^{-1}:\rad\rightarrow \lag\subset
T^*(\rgroupoid)$. Consider the Hamiltonian
$-\sigma_1^*{e}:\rad\rightarrow \mathbb{R}$. Then the associated
extended Hamiltonian is
$(\sigma_1^*e)^{ext}=-\widetilde\rbeta^*(\sigma_1^*e)+e$. Notice that the
Lagrangian submanifold $\lag$ satisfies the Hamilton--Jacobi equation,
that is, $(-\widetilde\rbeta^*(\sigma_1^*e)+e)_{|\lag}=0$. Equivalently,
$\lag$ lives in the $0$--level set of the function
$(-\widetilde\rbeta^*(\sigma_1^*e)+e)$. For simplicity we will usually
write $(-\widetilde\rbeta^*(\sigma_1^*e)+e)=F$.

 \vspace*{0.2cm}
\textbf{Second step:} Let $X_F$ be the Hamiltonian vector field of $F$
in $T^*(\mathbb{R}\times G)$, that is, the unique vector field
satisfying $i_{X_{F}}\omega_{\mathbb{R}\times G}=dF$. Since $\lag$ is
Lagrangian and $F_{|\lag}=0$ then $X_F$ is tangent to $\lag$ along the points
of $\lag$.  Let us analyze $X_F$, since $F=-\widetilde\rbeta^*(\sigma_1^*e)+e$, then
$X_F=X_{-\widetilde\rbeta^*(\sigma_1^*e)}+X_e$. Now we make two
observations:
\begin{enumerate}
\item $X_e=\frac{\partial}{\partial t}$ and
  $\widetilde\rbeta_*(X_e)=\frac{\partial}{\partial t}$.
\item Since $\widetilde\rbeta$ is a Poisson mapping (Theorem
  \ref{theorem-sg}), $\widetilde\rbeta
  _*(X_{-\widetilde\rbeta^*(\sigma_1^*e)})=X_{-\sigma_1^*e}$. Summarizing,
\end{enumerate}

\begin{equation}\label{projection}
\widetilde\rbeta _*(X_F)=\widetilde\rbeta _*(X_{-\widetilde\rbeta^*(\sigma_1^*e)+e})=\displaystyle\frac{\partial}{\partial t}+X_{-\sigma_1^*e}.
\end{equation}
Nevertheless, since $\widetilde\rbeta$ and $\widetilde\ralpha$
are a dual pair, i.e., the $\widetilde\rbeta$-fibers and $\widetilde\ralpha$-fibers are
symplectically orthogonal (Theorem
  \ref{theorem-sg}) or, equivalently,
$\ker(T\widetilde\rbeta)^\perp=\ker(T\widetilde\ralpha)$ and
since $d(-\widetilde\rbeta^*(\sigma_1^*e))\in \ker(T\widetilde\rbeta)^\circ$ by
construction, then
\[X_{(-\widetilde\rbeta^*(\sigma_1^*e))}=(\omega_{\rgroupoid})^\sharp(\widetilde d(-\widetilde\rbeta^*(\sigma_1^*e)))\in
\ker(T\widetilde\ralpha)\footnote{Recall that by definition
$\ker(T\widetilde\rbeta)^\perp=(\omega_{\rgroupoid})^{\sharp}(\ker(T\widetilde\rbeta)^\circ)$.}.\] This
last equation implies that
$\widetilde\ralpha_*(X_{(-\widetilde\rbeta^*(\sigma_1^*e))})=0$ and thus 
\begin{equation}\label{pushforward2}
\widetilde\ralpha_*(X_{F})=\displaystyle\frac{\partial}{\partial t}.
\end{equation}
Now we proceed to compute $\hat{\lag}_*(\frac{\partial}{\partial
  t}+X_{-\widetilde\rbeta^*(\sigma_1^*e)})$. By equation \eqref{projection} and since $X_{F}$ is
tangent to $\lag$, we have that
$(\widetilde\rbeta_{|\lag})_*(X_{F})=\frac{\partial}{\partial
  t}+X_{-\sigma_1^*e}$ or equivalently, reversing the arguments
\begin{equation}\label{equ1}
(\widetilde\rbeta_{|\lag})^{-1}_*(\frac{\partial}{\partial
  t}+X_{-\sigma_1^*e})=({X_{F}})_{|\lag}.
\end{equation}
Combining equation \eqref{equ1} and equation \eqref{pushforward2} we
obtain in a straightforward way
\begin{equation}\label{result}
\hat{\lag}_*(\displaystyle\frac{\partial}{\partial t}+X_{-\sigma_1^*e})=\displaystyle\frac{\partial}{\partial t}
\end{equation}
and we conclude the second step of the proof.

\vspace*{0.2cm}
\textbf{Third step:}
Now, since $\hat{\lag}$ is a Poisson automorphism, we have
   $\hat{\lag}_*(X_{\sigma_1^*e})=X_{(\hat{\lag}^{-1})^*(\sigma_1^*e)}$. It
is easy to see that $\sigma_1^*e\circ \hat{\lag}^{-1}=\sigma_2^*e$
and thus 
\begin{equation}\label{two}
\hat{\lag}_*(X_{\sigma_1^*e})=X_{\sigma_2^*e}. 
\end{equation}
Finally, adding equations \eqref{result} and \eqref{two} we get
\[
\hat{\lag}_*(\frac{\partial}{\partial
  t})=\hat{\lag}_*\Big(\displaystyle\frac{\partial}{\partial
  t}+X_{-\sigma_1^*e}+X_{\sigma_1^*e}\Big)=\frac{\partial}{\partial t}+X_{\sigma_2^*e}.\qedhere
\]
\end{proof}
\begin{remark}
In the previous Lemma, when $\lag=\im(dS)$, with $S \in C^{\infty}(\mathbb{R} \times G)$, then
\[\sigma_2^*e=\sigma_2^*\left(\displaystyle\frac{\partial S}{\partial
  t}(t,g)\right).\] That is, the new Hamiltonian is the function $\displaystyle\frac{\partial S}{\partial
  t}(t,g)$ read in $A^*G$ after the transformation  induced by $\im(dS)$.
\end{remark}

\subsection{Main Result}
With the previous lemma at hand, we can easily prove the following
theorem. We maintain the notations used in the previous lemma.

\begin{theorem}\label{hamiltoniantransformation} Let
  $H:\rad\rightarrow \mathbb{R}$ be a Hamiltonian function and
  $\lag\subset T^*(\rgroupoid)$ a Lagrangian bisection. The induced
  mapping $\hat{\lag}:\rad\rightarrow\rad$ satisfies
\[
\hat{\lag}_*(\displaystyle\frac{\partial}{\partial
  t}+X_H)=\displaystyle\frac{\partial}{\partial t}+X_K
\]
where $K=H\circ{\hat{\lag}^{-1}}+\sigma_2^*e$.
 \end{theorem}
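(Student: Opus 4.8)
The plan is to leverage Lemma \ref{fundamental}, which already handles the genuinely difficult part, namely the transformation of the pure time-evolution vector field $\frac{\partial}{\partial t}$ under $\hat{\lag}$. Since $\hat{\lag}$ is a Poisson isomorphism (by Theorem \ref{PoissonIsomorphisms}), its behavior on Hamiltonian vector fields is completely standard, so the theorem should follow by combining these two facts and using linearity of the push-forward.

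First I would decompose the vector field to be transformed as a sum, writing
\[
\frac{\partial}{\partial t}+X_H = \frac{\partial}{\partial t} + X_H,
\]
and apply $\hat{\lag}_*$ to each summand separately, using that push-forward by a diffeomorphism is linear. The first summand is handled directly by Lemma \ref{fundamental}, giving $\hat{\lag}_*(\frac{\partial}{\partial t}) = \frac{\partial}{\partial t} + X_{\sigma_2^*e}$. For the second summand, since $\hat{\lag}$ is a Poisson automorphism, push-forward of a Hamiltonian vector field is again Hamiltonian for the pulled-back function: $\hat{\lag}_*(X_H) = X_{H\circ \hat{\lag}^{-1}}$. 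This is the same identity used in the third step of the previous lemma's proof, so I would simply invoke it.

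Then I would add the two results. Using the linearity of the Hamiltonian vector field assignment $f \mapsto X_f$ (with respect to $\pir$), I obtain
\[
\hat{\lag}_*\Big(\frac{\partial}{\partial t}+X_H\Big)=\frac{\partial}{\partial t}+X_{\sigma_2^*e}+X_{H\circ\hat{\lag}^{-1}}=\frac{\partial}{\partial t}+X_{H\circ\hat{\lag}^{-1}+\sigma_2^*e},
\]
which is exactly the claim with $K=H\circ\hat{\lag}^{-1}+\sigma_2^*e$.

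I do not anticipate a serious obstacle here, since all the conceptual work was front-loaded into Lemma \ref{fundamental}; the only point requiring a little care is making sure the Poisson-map identity $\hat{\lag}_*X_H = X_{H\circ\hat{\lag}^{-1}}$ is correctly oriented (it is $H$ composed with $\hat{\lag}^{-1}$, not $\hat{\lag}$, because we are pushing forward rather than pulling back), which is consistent with the convention already fixed in equation \eqref{two} of the lemma. One might also note that the additivity $X_f + X_g = X_{f+g}$ is just the $\mathbb{R}$-linearity of $\pir^\sharp \circ d$, which holds on any Poisson manifold, so no extra hypotheses are needed.
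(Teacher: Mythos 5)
Your proposal is correct and follows exactly the same route as the paper's own proof: invoke Lemma \ref{fundamental} for $\hat{\lag}_*(\frac{\partial}{\partial t})$, use the Poisson-automorphism identity $\hat{\lag}_*(X_H)=X_{H\circ\hat{\lag}^{-1}}$ for the second summand, and add, using linearity of $f\mapsto X_f$. No gaps; your extra remark about the orientation of the pullback (via $\hat{\lag}^{-1}$) matches the convention fixed in equation \eqref{two} of the lemma.
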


\begin{proof} The proof is a consequence of Lemma
\ref{fundamental}. First notice that since $\hat{\lag}$ is Poisson,
we have $\hat{\lag}_*(X_H)=X_{H\circ\hat{\lag}^{-1}}$. Combining
this result with Lemma \ref{fundamental} we obtain
\[
\hat{\lag}_*(\displaystyle\frac{\partial}{\partial
  t}+X_H)=\hat{\lag}_*(\frac{\partial}{\partial
  t})+\hat{\lag}_*(X_H)=\frac{\partial}{\partial
  t}+X_{\sigma_2^*e}+X_{H\circ\hat{\lag}^{-1}}=\frac{\partial}{\partial
t}+X_K.\qedhere
\]
\end{proof}

\begin{remark}
Of course, the resemblances between the equations $\frac{\partial
  S}{\partial t}+H(t,q^i,p_i)=K(t,x^i,y_i)$ and
$H\circ{\hat{\lag}^{-1}}+\sigma_2^*e=K$ are obvious. When
$\lag=\im(dS)$ the last equation reduces to a generalization of the
first one
\[
K=H\circ{\hat{\lag}^{-1}}+\displaystyle\frac{\partial S}{\partial t}.
\]

In the pair groupoid case we recover the classical results.
\end{remark}

\begin{remark} The Poisson isomorphism described above is just the one given by Theorem \ref{PoissonIsomorphisms}. It is natural to wonder which Poisson transformation on $\rad$ are of
the form $\hat{\lag}$ for some Lagrangian bisection $\lag\subset
T^*(\rgroupoid)$. It is shown in the reference
\cite{GeHJ} that under some conditions ($\widetilde\rbeta^{-1}(\mu_x)$
be connected for all $\mu_x\in \rad$) one can find any Poisson
transformation preserving the symplectic leaves of $\rad$ by that
procedure. 

\end{remark}

\subsection{Hamilton--Jacobi theory for Poisson Manifolds}
The next theorem is the generalization of the results shown in
Section \ref{motivation}. The Hamilton--Jacobi theorem below, as it
happens in the classical case, is a consequence of the theory of
generating functions developed in the previous section. The main idea
is to look for a transformation that makes the new dynamics trivial or
easy to integrate, now that we have a formula relating the
transfomation $\hat{\lag}$, the original Hamiltonian $H$ and the new
Hamiltonian function $K$. Along this section $G$ will be a groupoid over a manifold $M$ with source and target $\alpha$ and $\beta$ respectively. The corresponding algebroid will be denoted by $AG$ and its dual by $A^*G$.

Now let us construct a function
  $H^{ext}:T^*(\mathbb{R\times G})\rightarrow \mathbb{R}$ out of the
  Hamiltonian $H$ that we call the extended Hamiltonian associated to
  $H$. Define $H^{ext} = \widetilde\rbeta^*H+e$. Finally, we state the
  Hamilton-–Jacobi theorem that allows the integration of the
  Hamilton's equations once a solution of a certain PDE has been
  found. The appropriate coordinates where the PDE has solution will
  be given in the next sections.

\begin{theorem}[Hamilton--Jacobi]\label{hj} Let $\lag$ be a Lagrangian submanifold of $T^*(\mathbb{R}\times G)$ and assume that
\begin{enumerate}
\item \emph{Hamilton--Jacobi equation}, $H^{ext}_{|\lag}=\widetilde\ralpha^*K$, where
  $K=\ralpha^*k$ for some function $k:\rbase\rightarrow
  \mathbb{R}$. That means that $K$ is a $\ralpha$-basic function. 
\item \emph{Non-degeneracy condition}, $\lag$ is a bisection, or equivalently,
  $\widetilde\ralpha_{|\lag}:\lag\rightarrow \rad$ is a (local) diffeomorphism.
\end{enumerate}
Then the induced Poisson isomorphism $\widetilde\ralpha\circ
(\widetilde\rbeta_{|\lag})^{-1}:\rad\rightarrow \rad$ (henceforth
denoted $\hat{\lag}$) satisfies
\begin{equation}\label{transformation}
\hat{\lag}_*(\displaystyle\frac{\partial}{\partial t}+X_H)=\displaystyle\frac{\partial}{\partial t}+X_{\hat{k}},
\end{equation}
where $\hat{k}=\rtau^*k$.
\end{theorem}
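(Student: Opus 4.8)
The plan is to reduce the statement to the Main Result, Theorem~\ref{hamiltoniantransformation}, which already records how the suspended field $\frac{\partial}{\partial t}+X_H$ is transported by the Poisson isomorphism attached to any Lagrangian bisection. Hypothesis~(2) is exactly the requirement that $\lag$ be a bisection, so Theorem~\ref{hamiltoniantransformation} applies and yields
\begin{equation*}
\hat{\lag}_*\Big(\frac{\partial}{\partial t}+X_H\Big)=\frac{\partial}{\partial t}+X_{K'},\qquad K'=H\circ\hat{\lag}^{-1}+\sigma_2^*e
\end{equation*}
where $\sigma_2=(\widetilde{\ralpha}_{|\lag})^{-1}$. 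All that then remains is to show that the new Hamiltonian $K'$ coincides with $\hat{k}=\rtau^*k$; the formula \eqref{transformation} follows at once.

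First I would repackage $K'$ as a single pullback of the extended Hamiltonian $H^{ext}=\widetilde{\rbeta}^*H+e$. Since $\hat{\lag}=\widetilde{\ralpha}_{|\lag}\circ(\widetilde{\rbeta}_{|\lag})^{-1}$, its inverse is $\hat{\lag}^{-1}=\widetilde{\rbeta}_{|\lag}\circ\sigma_2$, so that $H\circ\hat{\lag}^{-1}=\sigma_2^*(\widetilde{\rbeta}^*H)$. Adding $\sigma_2^*e$ gives $K'=\sigma_2^*(\widetilde{\rbeta}^*H+e)=\sigma_2^*H^{ext}$, and because $\sigma_2$ takes values in $\lag$ this is the same as $\sigma_2^*\big(H^{ext}_{|\lag}\big)$.

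Next I would feed in the Hamilton--Jacobi equation, hypothesis~(1). Restricted to $\lag$ it reads $H^{ext}_{|\lag}=(\widetilde{\ralpha}_{|\lag})^*K$, where $K$ is the basic function $\rtau^*k$ on $\rad$ (this is the content of ``$K$ is $\ralpha$-basic'', $K$ being pulled back from $\rbase$). Applying $\sigma_2^*$ and using $\widetilde{\ralpha}_{|\lag}\circ\sigma_2=\id_{\rad}$ I obtain
\begin{equation*}
\sigma_2^*\big(H^{ext}_{|\lag}\big)=\big(\widetilde{\ralpha}_{|\lag}\circ\sigma_2\big)^*K=K=\rtau^*k=\hat{k}.
\end{equation*}
Chaining this with the previous paragraph gives $K'=\hat{k}$, and substituting into the output of Theorem~\ref{hamiltoniantransformation} establishes \eqref{transformation}.

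On where the difficulty lies: there is no analytic obstacle remaining, because Lemma~\ref{fundamental} (through Theorem~\ref{hamiltoniantransformation}) has already carried out the genuinely nontrivial computation, namely the push-forward of the time direction $\frac{\partial}{\partial t}$ across a dual pair. What must be handled carefully is purely the bookkeeping of the restriction and inverse maps --- in particular the identity $\hat{\lag}^{-1}=\widetilde{\rbeta}_{|\lag}\circ\sigma_2$, the observation that $\widetilde{\rbeta}^*H+e$ reassembles precisely into $H^{ext}$, and the fact that $\widetilde{\ralpha}_{|\lag}\circ\sigma_2=\id$ collapses the pullback of $K$ back to $K$ itself --- so that the function $K$ of the Hamilton--Jacobi equation is literally the $\hat{k}=\rtau^*k$ appearing in the conclusion.
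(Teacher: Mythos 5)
Your proof is correct and takes essentially the same route as the paper: the paper's entire proof of Theorem~\ref{hj} is the sentence ``an obvious application of Theorem~\ref{hamiltoniantransformation}'', and your argument simply supplies the bookkeeping it leaves implicit, namely that $K'=H\circ\hat{\lag}^{-1}+\sigma_2^*e=\sigma_2^*H^{ext}$ and that hypothesis~(1) collapses this to $\hat{k}=\rtau^*k$. You also (correctly) read the paper's ``$K=\ralpha^*k$'' as the basic function $\rtau^*k$ on $\rad$, which is the only interpretation compatible with the pullback $\widetilde\ralpha^*K$ and with the stated conclusion.
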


\begin{proof} The proof is an obvious application of Theorem \ref{hamiltoniantransformation}.

\end{proof}

\begin{remark}
\begin{enumerate}
\item When $K=0$ the new dynamics are $\frac{\partial}{\partial t}$, that is, the system after the transformation is in {\bf equilibrium}. The inverse of the this trasformation is, up to an initial condition, the flow of the Hamiltonian system given by the Hamiltonian $H$.
\item The Hamilton--Jacobi equation is easily seen to recover the two examples discussed
  in the introduction. In the classical case it reads as 
  \eqref{HamiltonJacobiK}. In the general case, if $\lag=\im(dS)$ then it becomes 
\[\displaystyle\frac{\partial
    S}{\partial t}+H(t,\widetilde\beta(\frac{\partial S}{\partial
    g}))=k(t,\alpha(g))\]
 where $dS=\frac{\partial
    S}{\partial t}dt+\frac{\partial S}{\partial
    g}dg$ comes from the identification
 $T^*(\rgroupoid)=T^*\mathbb{R}\times T^*G$, so $\frac{\partial
    S}{\partial t}dt\in T^*\mathbb{R}$ and $\frac{\partial S}{\partial
    g}dg\in T^*G$. The reader should compare this expression with  the
  Hamilton--Jacobi equations in the classical and Lie group cases. 

\item The non-degeneracy condition is the geometric description of the
  non-degeneracy previously used ($\det(\frac{\partial^2S}{\partial
    q^i\partial Q^j})\neq 0$ and $\lvec{\xi}_a(\rvec{\xi}_b(S_t))$ be regular). It can be stated saying that
  $\im(dS)$ is a  bisection.

\item Obviously, this theorem will only be useful if the new
  Hamiltonian dynamics  $\displaystyle\frac{\partial}{\partial
    t}+X_{\hat{k}}$ is trivially integrable, as before. Take
  coordinates adapted to the fibration $\alpha$, say $t,x^i,y_j$, $i=1,\ldots,n$,
  $j=1,\ldots,m$, where $t$ is the $\mathbb{R}-$coordinate, $n$ the
  dimension of $M$ and $m$ the dimension of the $\alpha$-fibers. Thus,
  the equations of motion read
 \begin{equation}\label{trivial}
 \begin{array}{l}
 \displaystyle\frac{d x^i}{dt}=\{x^i,\hat{k}(t,x^i)\}=0,\\
 \noalign{\bigskip}
 \displaystyle\frac{d
   y_j}{dt}=\{y_j,\hat{k}(t,x^i)\}=\rho^k_j(x^i)\frac{\partial \hat{k}}{\partial x^k}(t,x^i),
 \end{array}
 \end{equation}
 where $i=1,\ldots,n$, $j=1,\ldots,m$. So if $\hat{k}$ is time-independent, given an initial condition
 $(x^i_0,y_j^0)$ at time $t_0$ the solution of Hamilton's equations is
 \[
 t\rightarrow \Big(t,x^0_i, y_j^0+\rho^k_j(x_i^0)\displaystyle\frac{\partial
 \hat{k}}{\partial x^k}(x_i^0)(t-t_0)\Big).
 \]

Otherwise, in the time-dependent setting the solution is just given by
integration
\[
t\rightarrow \Big(t,x^0_i,\int_{t_0}^t \rho^k_j(x^i_0)\frac{\partial \hat{k}}{\partial x^k_0}(s,x^i_0)ds+y_i^0\Big).
\]
\end{enumerate}
\end{remark}

\begin{remark}\label{lagrangian}
Notice that we are working with Lagrangian bisections but 
along the next section we will provide the necessary method to obtain these Lagrangian submanifolds, that is, our theory of generating functions. The main
difficulty is to describe Lagrangian submanifolds which are not horizontal
(i.e. $\im(\gamma)$ for a closed $1$-form $\gamma$), which amount to deal with
generating functions which are not of the ``first type''. In the
classical theory, it means that one can allow $S$ to be dependent not
only on the variables $(t,q^i,x^i)$, but on ``mixed'' variables like
$(t,q^i,y_i)$. The previous statements hold only locally. Lagrangian
submanifolds are the geometric objects behind generating
functions. Even when the generating functions develop singularities
(\emph {caustics}), these objects are well-defined, as it is pointed
out in \cite{AbrahamMarsden}.
\end{remark}

\subsection{Generating the identity and nearby transformations: {\it non-free} canonical transformations}\label{non-free}
\subsubsection{Motivation: Classical Case}\label{nonfreeclassical}

We start this subsection showing that even in the classical situation,
described in 
Section \ref{classical}, the identity transformation
$\id_{T^*Q}:T^*Q\rightarrow T^*Q$ can not be obtained through generating
functions of the first type, {\it i.e.} $S(x^i,q^i)$, where $(q^i)$
and $(x^i)$ are coordinates in $Q$, the configuration
manifold. Equivalently, the Lagrangian submanifold given by the
identities is not horizontal for $\pi_{Q\times Q}$. In the language of
\cite{Arnoldmmcm} a canonical transformation is called {\it
  free} if it has a generating function of type I and {\it non-free}
otherwise. The local
coordinate description is all that we are going to need and, thus, we assume that $Q=\mathbb{R}^n$
and so $T^*Q=\mathbb{R}^{2n}$ and consider global coordinates
$(q^i,p_i)$. Doubling these coordinates we get a coordinate system for
$T^*(Q\times Q)=\mathbb{R}^{4n}$, say $(x^i,y_i,q^i,p_i)$. It is easy to see that the
Lagrangian submanifold of the identities $\lag=\widetilde\epsilon(T^*Q)$ is given by
the set of points of the form
\[
\lag=\{(q^i,-p_i,q^i,p_i)\textrm{ such that } q^i, \ p_i\in\mathbb{R}\},
\]
or equivalently
\[
\lag=\{(x^i,y_i,q^i,p_i)\textrm{ such that } q^i=x^i, \ -p_i=y_i\in\mathbb{R}\}.
\]
Using the description of the source and the target it can be
checked that $\lag$ induces the identity transformation. Obviously, there
does not exist $S(x^i,q^i)$ such that $\im(dS)=\lag$. In order to solve that, we introduce the following symplectomorphism, 
\[
\begin{array}{rccl}
F:& (\mathbb{R}^{4n},\ dx^i\wedge dy_i+\ dq^i\wedge dp_i)&\longrightarrow & (\mathbb{R}^{4n},\ dy_i\wedge dx^i+\ dq^i\wedge dp_i)\\ \noalign{\medskip}
&(x^i,y_i,q^i,p_i)&\rightarrow & (\tilde{x}^i=x^i,\tilde{y}_i=-y_i,\tilde{q}^i=q^i,\tilde{p}_i=p_i).
\end{array}
\]
The main idea is that we interchanged the role of $x^i$ and $y_i$. Since symplectomorphisms conserve Lagrangian submanifolds, $\tilde{\lag}=F(\lag)$ is a Lagrangian submanifold in $(\mathbb{R}^{4n},\ dy_i\wedge dx^i+dq^i\wedge dp_i)$. The submanifold $\tilde{\lag}$ is explicitly given by
\[
\tilde{\lag}=\{(q^i,p_i,q^i,p_i)\textrm{ such that } q^i, \ p_i\in\mathbb{R}\}.
\]
We introduce a projection analogous to $\pi_Q$, which in
general is only defined locally and in a non-canonical way
\begin{equation}
\begin{array}{rccl}
\pi :& \mathbb{R}^{4n}&\longrightarrow & \mathbb{R}^{2n}\\ \noalign{\medskip}
& (x^i,y_i ,q^i,p_i)&\rightarrow & \pi (x^i,y_i ,q^i,p_i)=(y_i,q^i).
\end{array}
\end{equation}
Now, $\pi:\mathbb{R}^{4n}\rightarrow \mathbb{R}^{2n}$ with the symplectic structure in $\mathbb{R}^{4n},\ dq^i\wedge dp_i+dy_i\wedge dx^i$ is ``the same as a cotangent bundle'', but there $\tilde{L}$ is horizontal. We can consider the function $S(y_i,q^i)=q^i\cdot y_i$ and $\tilde{L}=\im(dS)$. It is now obvious that we can use $F$ to turn non-horizontal Lagrangian submanifolds into horizontal ones. Using $F^{-1}$, once a generating function is obtained for the desired Lagrangian submanifold, we can describe a canonical transformation implicitly following the same pattern we used in \eqref{coordinatechange}. That is 
\[
\tilde{\lag}=\{(\frac{\partial S}{\partial y_i}(y_i,q^i),y_i ,q^i,\displaystyle\frac{\partial S}{\partial q^i}(y_i,q^i))\textrm{ such that } q^i, \ y_i\in\mathbb{R}\}
\]
and 
\[
\lag=F^{-1}(\tilde{\lag})=\{(\frac{\partial S}{\partial y_i}(y_i,q^i),-y_i ,q^i,\displaystyle\frac{\partial S}{\partial q^i}(y_i,q^i))\textrm{ such that } q^i, \ y_i\in\mathbb{R}\}
\]
which in general will induce the transformation analogous to \eqref{coordinatechange}
\begin{equation}\label{coordinatechangenonfree}
\begin{array}{cc}
\displaystyle\frac{\partial S}{\partial y^i}(t,y_i,q^i)=x_i, &\displaystyle\frac{\partial S}{\partial q^i}(t,y_i,q^i)=p_i,
\end{array}
\end{equation}
which is usually called a type II generating function. The expression
above is found in most classical mechanics books, like
\cite{Arnoldmmcm,Goldstein}. In the particular case $S(y_i,q^i)=q^i\cdot y_i$, \eqref{coordinatechangenonfree} reads
\[
\begin{array}{cc}
y_i=p_i, & q^i=x^i.
\end{array}
\]
that is, we generated the identity.

\begin{remark} The transformations \eqref{coordinatechange}, \eqref{coordinatechangenonfree} are just the transformations defined by the corresponding Lagrangian bisections following Theorem \ref{CDW}.
\end{remark}

The previously described situation illustrates the importance of
``projecting'' Lagrangian submanifolds in the appropriate
setting. While their geometry is very rich and powerful, whenever one
is interested in computations these Lagrangian submanifolds should be
described by functions. Conditions on the Lagrangian submanifold, like
the Hamilton--Jacobi equation $H^{ext}_{|\lag}=0$, become PDE's that can be
treated by the techniques of analysis. In the rest of this section we
will describe analogous procedures to generate the identity in the
three cases of interest: Lie algebras, action  and Atiyah Lie
algebroids.

\subsubsection{Lie Algebras}\label{generatingalgebras}
Let $\mathfrak{g}$ a Lie algebra and $G$ a Lie group integrating it (not necessarily simply-connected). Then the cotangent groupoid $T^*G\rightrightarrows\mathfrak{g}^*$ is endowed with the following two natural projections
\begin{equation}\label{liealgebra}
\begin{array}{rccl}
\tilde{\alpha}:& T^*G&\longrightarrow &  \mathfrak{g}^* \\ 
& (\mu_g) &\to &\tilde{\alpha}(\mu_g)=J_L(\mu_g) ,\\ \noalign{\bigskip}

\tilde{\beta}:& T^*G&\longrightarrow &  \mathfrak{g}^* \\ 
& (\mu_g) &\to &\tilde{\beta}(\mu_g)=J_R(\mu_g).
 \end{array}
\end{equation}
These are the only structural mappings needed, we will not use the algebraic properties of groupoids in this section. It is easy to check that the submanifold of identities $\lag=\widetilde\epsilon(\mathfrak{g}^*)$ is given by the fiber $\pi_G^{-1}(\ide)$, where $\ide$ is the identity element of the Lie group $G$. It is quite illustrative to notice that we have the opposite situation of an horizontal Lagrangian submanifold for $\pi_G$, the projection of $\lag$ onto $G$ by $\pi_G:T^*G\rightarrow G$ is just $\{\ide\}$. This suggests that we should try to ``project $\lag$ onto the fibers'', not into $G$. In order to do that, take local coordinates in $G$ around the identity $\ide$, say $(g^i)$, $i=1,\ldots,n$ and let $(g^i,p_i)$ be the associated natural coordinates on $T^*G$. Assume for the sake of simplicity that $g^i(\ide)=0$. In those coordinates $\lag$ is given by
\[
\lag=\{(0,p_i)\textrm{ such that }p_i\in\mathbb{R}\}.
\]
 We introduce the, only locally defined, projection
\begin{equation}\label{}
\pi: (g^i,p_i)  \longrightarrow   (p_i).
\end{equation}
We think about the space $(g^i,p_i)$ with the symplectic form $dg^i\wedge dp_i$ and projection $\pi$ as a ``cotangent bundle''. Then we can consider $S(p^i)=0$ and $\lag$ is identified as the graph of the differential of this function
\[
\lag=\{(0,p_i)\textrm{ such that }p_i\in\mathbb{R}\}=\{(\displaystyle\frac{\partial S}{\partial p_i}(p_i),p_i)\textrm{ such that }p_i\in\mathbb{R}\}.
\]
The point of this construction is that nearby Lagrangian submanifolds to $\widetilde\epsilon(\mathfrak{g}^*)$ will be described by this kind of functions as well. This is the keystone for the construction of our numerical methods in Section \ref{examples} and it solves a question by McLachlan and Scovel (\cite{McLachlanScoveII}) that we mentioned in the introduction. Although these constructions are local, they permit us to parametrize the requested Lagrangian submanifolds, as it can be shown in our examples.

To finish this subsection, we illustrate how to deal with an example taken from \cite{GeEquivariant} where it is shown that there is no way to find a generating function depending on coordinates on the base $S(g^i)$ for the identity $\id_{\mathfrak{g}^*}:\mathfrak{g}^*\rightarrow \mathfrak{g}^*$. How to generate the identity transformation was only known for quadratic Lie algebras, \cite{ChannellScovel}.

\begin{example} Consider the group
\[
\textrm{Trian}(2)=\left\{ \left(\begin{array}{lr} a_{11} & a_{12}\\ \noalign{\medskip} 0& a_{22} \end{array}\right)\in \textrm{GL}(2,\mathbb{R})\right\}.
\]
We introduce coordinates
\[
(g^1,g^2,g^3)\rightarrow \left(\begin{array}{lr} a_{11}=g^1 & a_{12}=g^2\\ \noalign{\medskip} 0& a_{22}=g^3 \end{array}\right)
\]
defined for $(g^1,g^2,g^3)\in (\mathbb{R}-\{0\})\times (\mathbb{R}-\{0\})\times \mathbb{R}$. The product is given in local coordinates by $(g^1,g^2,g^3)\cdot (\tilde{g}^1,\tilde{g}^2,\tilde{g}^3)=(g^1\tilde{g}^1,g^1\tilde{g}^2+g^2\tilde{g}^3,g^3\tilde{g}^3)$. Let $(g^1,g^2,g^3,p_1,p_2,p_3)$ be natural coordinates in $T^*\textrm{Trian}(2)$ and a straightforward computation shows that 
\[
\begin{array}{l}
\tilde{\alpha}(g^1,g^2,g^3,p_1,p_2,p_3)=J_L(g^1,g^2,g^3,p_1,p_2,p_3)=(p_1g^1+p_2g^2,p_2g^3,p_3g^3)\\\noalign{\medskip}
\tilde{\beta}(g^1,g^2,g^3,p_1,p_2,p_3)=J_R(g^1,g^2,g^3,p_1,p_2,p_3)=(p_1g^1,p_2g^1,p_2g^2+p_3g^3).
\end{array}
\]
In this case in the coordinates $(g^1,g^2,g^3)$ the identity element $ \ide$
reads as $(1,0,1)$, so taking $S(p_1,p_2,p_3)=p_1+p_3$ implies that 
\[
\begin{array}{rl}
\im(dS)&=\{(\displaystyle\frac{\partial S}{\partial p_1}(p),\displaystyle\frac{\partial S}{\partial p_2}(p),\displaystyle\frac{\partial S}{\partial p_3}(p),p_1,p_2,p_3)\textrm{ such that }p_i\in\mathbb{R}\}\\\noalign{\medskip} &=\{(1,0,1,p_1,p_2,p_3)\textrm{ such that }p_i\in\mathbb{R}\}.
\end{array}
\]
Now we have 
\begin{equation}\label{equation}
\begin{array}{l}
\tilde{\alpha}(dS(p_1,p_2,p_3))=(p_1,p_2,p_3),\\ \noalign{\medskip}
\tilde{\beta}(dS(p_1,p_2,p_3))=(p_1,p_2,p_3),
\end{array}
\end{equation}
and equation \eqref{equation} implies that the transformation induced by $\im(dS)$ is the identity.
\end{example}

\subsubsection{Action Lie Algebroids}\label{generatingaction}
Action Lie algebroids are discussed in Appendix
\ref{liegroupoids}. The connection of these algebroids with semi-direct
products is very remarkable, as these structures have  proved their
importance describing some of the dynamical systems relevant for geometric mechanics, see \cite{SemiDirect}. Assume that our action Lie algebroid, say $M\times \mathfrak{g}$, is integrable with the action Lie groupoid $M\times G$ integrating it. Then $T^*(M\times G)\rightrightarrows M\times\mathfrak{g}^*$ is a symplectic Lie groupoid with source and target given by
\begin{equation}\label{actionliealgebroid}
\begin{array}{rccl}
\tilde{\alpha}:& T^*(M\times G)&\longrightarrow &  M\times \mathfrak{g}^* \\ 
& (\mu^1_q,\mu^2_g) &\to &\tilde{\alpha}(\mu^1_q,\mu^2_g)=(q,-J(\mu^1_q)+J_L(\mu^2_g)), \\ \noalign{\bigskip}

\tilde{\beta}:& T^*(M\times G)&\longrightarrow &  M\times \mathfrak{g}^* \\ 
& (\mu^1_q,\mu^2_g) &\to &\tilde{\beta}(\mu^1_q,\mu^2_g)=(qg,J_R(\mu^2_g)).
 \end{array}
\end{equation}
Here, $J$ is the standard momentum mapping for lifted
actions, \[J(\mu_q)(\xi)=\mu_q(\xi_M),\]$\xi_M$ is the infinitesimal
generator associated to $\xi\in\mathfrak{g}$. On the other hand, the Lagrangian submanifold of identities is given by \[\lag=\widetilde\epsilon(M\times \mathfrak{g}^*)=\{(0_m,\mu)\textrm{ such that }m\in M, \ \mu\in\mathfrak{g}^*=T_{\frak{e}}^*G\}. \]
Take local coordinates $(x^i)$ on an open set $U\subset M$, and $(g^j)$ coordinates on a neighbourhood of the identity on $G$, moreover, we assume that $g^i(\frak{e})=0$. We consider natural coordinates $(x^i,y_i,g^j,p_j)$ on $T^*(M\times G)$. In these coordinates $\lag$ is given by
\[
\{(x^i,0,0,p_j)\textrm{ such that }x^i\in U,\ p_j\in\mathbb{R}\}.
\] 

Arguing like in the previous sections, we get that we have to
``project'' onto the $(x^i,p_j)$-coordinates. We get a generating
function $S(x^i,p_j)$ that produces the Lagrangian submanifold, after a change of sign using a mapping analogous to $F$ in Section \ref{nonfreeclassical}
\[
\{(x^i,\displaystyle\frac{\partial S}{\partial x^i}(x^i,p_j),\displaystyle\frac{\partial S}{\partial p_j}(x^i,p_j),p_j),\textrm{ such that }x^i\in U,\ p_j\in\mathbb{R}\}.
\]
So taking $S(x^i,p_j)=\text{\it constant\/}$ gives us the desired generating function.

\subsubsection{Atiyah Algebroids}\label{Gen-func-Atiyah}
Atiyah algebroids are discussed in Appendix \ref{liegroupoids}, their cotangent groupoids were already introduced in Section \ref{cotangentatiyah}. We assume that $P\rightarrow M$ is trivial from the very beginning, as we are interested in the local description, so we have $T^*(M\times M\times G)$, and recall that 
\begin{equation}\label{actionliealgebroidz}
\begin{array}{rccl}
\tilde{\alpha}:& T^*(M\times M\times G)&\longrightarrow &  T^*M\times \mathfrak{g}^* \\ 
& (\lambda_{m^1},\mu_{m^2},\nu_{g}) &\to &\tilde{\alpha}(\lambda_{m^1},\mu_{m^2},\nu_{g})=(-\lambda_{m^1},J_L(\nu_{g})), \\ \noalign{\bigskip}

\tilde{\beta}:& T^*(M\times M\times G)&\longrightarrow & T^*M\times\mathfrak{g}^* \\ 
& (\lambda_{m^1},\mu_{m^2},\nu_{g}) &\to &\tilde{\beta}(\lambda_{m^1},\mu_{m^2},\nu_{g})=(\mu_{m^2},J_R(\nu_{g})).
 \end{array}
\end{equation}
The Lagrangian submanifold of identities is given by 
\[
\{(-\lambda_{m},\lambda_{m},\nu)\textrm{ such that } \lambda_m\in T^*M, \ \nu\in\mathfrak{g}^*\}.
\]

It should be clear, looking at the expressions above, that this case is a combination of Section \ref{nonfreeclassical} and Section \ref{generatingalgebras}. Taking the union of the $(x^i,q^i)$ and $g^j$ introduced in the aforementioned sections, we get the coordinate system $(x^i,q^i,g^j)$ on $M\times M\times G$. Combining the arguments exposed there, a generating function of the form $S(y^i,q^i,p^g_j)$ is obtained in a straightforward manner. The $p^g_j$ in the previous expression are the momenta associated to the $g^j$.

\subsubsection{Transitive Lie algebroids}
A Lie groupoid $G \rightrightarrows M$ with source $\alpha: G \to M$ and target $\beta: G \to M$ is said to be transitive if the anchor map
\[
(\alpha, \beta): G \to M \times M, \; \; g \in G \to (\alpha(g), \beta(g))
\]
is a surjective submersion (see \cite{Mackenzie}). Examples of transitive Lie groupoids are the pair or banal groupoid associated with a manifold, Lie algebras or Atiyah algebroids associated with principal bundles.

The Lie algebroid $AG$ of a transitive Lie groupoid $G \rightrightarrows M$ is transitive, that is, the anchor map of $AG$ is a vector bundle epimorphism on $TM$ (see \cite{Mackenzie}). On the other hand, one may prove that a transitive Lie algebroid is locally isomorphic to the Atiyah algebroid associated with a trivial principal bundle over $M$ (see Chapter $8$ in \cite{Mackenzie}).

So, using the results in Section \ref{Gen-func-Atiyah}, we can find local generating functions for the Lagrangian submanifold of the identities in the cotangent bundle $T^*G$ of a transitive Lie groupoid $G\rightrightarrows M$.

To conclude this section we would like to mention that the important thing here is not the parametrization of the Lagrangian submanifold of identities $\lag$, which induces the identity transformation and leaves everything unchanged. The main contribution of the previous subsections is that one can parametrize all the Lagrangian submanifolds, which are close to $\lag$, as the differentials of the appropriate functions. With that, one obtains a method to parametrize all the Hamiltonian flows, at least in the appropriate subsets, for small enough time which is our object of study.

\section{Local Existence of Solution}

Since we are dealing with a PDE one of our main tasks is to show, at least, local
existence of solutions of the corresponding Hamilton--Jacobi equation. Our proof
follows the construction of Ge in \cite{GeHJ} to build a Lagrangian
submanifold which satisfies the Hamilton--Jacobi equation, according to Remark
\ref{lagrangian}. More precisely, to us, a solution of the Hamilton--Jacobi equation will be a Lagrangian bisection $\lag$ satisfying $H^{ext}_{|\lag}=0$ and a generating function generating $\lag$. We go further because our cotangent groupoid structure
allows us to show that, under certain initial value conditions, the
aforementioned Lagrangian submanifold is horizontal, and then it happens
to be the image of a closed $1$-form. Thus, locally the graph of that $1$-form is the
differential of a function, $S$, which satisfies the Hamilton--Jacobi
equation. The result follows after applying basic stability theory results. Our construction works for arbitrary Hamiltonian functions
$H:\rad\rightarrow \mathbb{R}$. 

Before proving the results stated in the paragraph above, we generalize a classical and very popular result, the action given by a regular Lagrangian gives a type I solution of the Hamilton--Jacobi equation. This result establishes the link of the Hamilton-Jacobi theory with the variational integrators. 

\subsection{Existence of type I solutions for Lagrangian systems}

When the Hamiltonian system under consideration comes from a
(hyper-)regular Lagrangian function $L: TQ \to \mathbb{R}$, there exists a local solution to the Hamilton--Jacobi equation, given by
the action functional
\[
S(t,q^i,x^i)=\int_0^tL(\dot{c}(s))ds
\]
for $t$ sufficiently small and $(q^{i})$ sufficiently close to $(x^{i})$. Here, 
$c(t)$ is the unique curve satisfying the Euler--Lagrange
equations with the boundary conditions 
\[
\begin{array}{l}
c(0)=(q^i), \\ \noalign{\medskip}
c(t)=(x^i).
\end{array}
\]
The details of the construction can be found in the references
\cite{Arnoldmmcm,Goldstein,MarsdenRatiu}. For useful information about
Lagrangian dynamics on Lie algebroids we recommend
\cite{CoLeMaMaMa,LeMaMa,Ma,EduardoM,Weinstein} (see also \cite{MaMeSa,SaMeMa}). We extend this result to
our setting, following an auxiliary construction introduced in
\cite{MMMIII,Weinstein}. Along this section $G$ will be a Lie groupoid with source and target $\alpha$ and $\beta$ respectively and $AG$ its corresponding Lie algebroid with dual vector bundle $A^*G$. The vector bundle
$\pi:\ker(T\alpha)\rightarrow G$, where $\pi$ is the natural projection, has a Lie algebroid
structure given by

\begin{enumerate}
\item  The {\it anchor} is just the inclusion
  $i:\ker(T\alpha)\rightarrow TG$.
\item    The {\it Lie bracket} is just the restriction of the Lie
  bracket on vector fields. 
\end{enumerate}
This Lie algebroid is related to $\tau: AG\rightarrow M$ through the
mapping below, which happens to be a Lie algebroid morphism
\[
\begin{array}{rccl}
\Psi:&\ker(T\alpha)&\longrightarrow & AG \\ \noalign{\medskip}
& X_g&\rightarrow &\Psi(X_g)=dl_{g^{-1}}(X_g).
\end{array}
\]

Here, $l_{g^{-1}}: G^{\alpha(g)} \to G^{\beta(g)}$ is the left translation by $g^{-1}$ (see Appendix \ref{liegroupoids}). Note that $\Psi$ is an isomorphism on each fiber. Moreover, given a Lagrangian function on the Lie algebroid $L:AG\rightarrow \mathbb{R}$, we can consider the induced Lagrangian dynamics in $\ker(T\alpha)$ by $L\circ\Psi$. The relation between the two systems is given by the following theorem.
\begin{theorem}[See \cite{Weinstein}, Theorem $4.5$]  Let $\Psi:
  A\rightarrow B$ be a morphism of Lie algebroids, and let $L$ be a
  regular Lagrangian on $B$. If $\Psi$ is an isomorphism on each
  fiber, then the image under $\Psi$ of any solution of Lagrange's
  equations for $L\circ \Psi$ is a solution of Lagrange's equations for $L$.
\end{theorem}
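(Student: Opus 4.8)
The plan is to use the variational characterization of Lagrangian dynamics on a Lie algebroid. Recall that, for a Lie algebroid $\tau_A\colon A\to M$ with anchor $\rho_A$ and bracket $[\cdot,\cdot]_A$, a curve $a\colon[t_0,t_1]\to A$ is \emph{admissible} if its base curve $\gamma_A=\tau_A\circ a$ satisfies $\dot\gamma_A=\rho_A\circ a$, and the solutions of Lagrange's equations for $L$ are exactly the critical points of the action $\mathcal{S}_L(a)=\int_{t_0}^{t_1}L(a(s))\,ds$ among admissible curves, where the admissible variations are those generated by time-dependent sections $\eta$ of $A$ along $\gamma_A$ vanishing at the endpoints. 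In a local basis $\{e_c\}$ with $[e_a,e_b]_A=C^c_{ab}e_c$ and $\rho_A(e_a)=\rho^i_a\,\partial/\partial x^i$, writing $a=(x^i,y^a)$ and $\eta=\eta^a e_a$, these variations read $\delta x^i=\rho^i_a\eta^a$ and $\delta y^a=\dot\eta^a+C^a_{bc}y^b\eta^c$. First I would record this description for both $A$ and $B$.

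Next I would prove the transport lemma that $\Psi$ sends admissible curves of $A$ to admissible curves of $B$. This is precisely the anchor-compatibility half of the definition of a Lie algebroid morphism, $T\varphi\circ\rho_A=\rho_B\circ\Psi$ over the base map $\varphi\colon M\to N$: if $a$ is admissible then, for $b=\Psi\circ a$ with base curve $\gamma_B=\varphi\circ\gamma_A$, one computes $\dot\gamma_B=T\varphi(\dot\gamma_A)=T\varphi(\rho_A(a))=\rho_B(\Psi(a))=\rho_B(b)$, so $b$ is admissible. Since $\mathcal{S}_L(\Psi\circ a)=\int L(\Psi(a(s)))\,ds=\mathcal{S}_{L\circ\Psi}(a)$, the two action functionals agree along this correspondence.

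The heart of the argument is to match admissible variations, and this is where the hypothesis that $\Psi$ is a \emph{fiberwise isomorphism} enters decisively. Given an arbitrary admissible variation $\delta b$ of $b$, generated by a section $\zeta$ of $B$ along $\gamma_B$ vanishing at the endpoints, I would set $\eta=\Psi^{-1}\circ\zeta$ (fiberwise inverse), a section of $A$ along $\gamma_A$ which also vanishes at the endpoints. The \emph{full} morphism condition, compatibility with both anchors and brackets, is what guarantees that the admissible variation $\delta a$ of $a$ generated by $\eta$ is carried by $\Psi$ exactly onto $\delta b$, that is, $\Psi_*(\delta a)=\delta b$: the bracket-compatibility converts the structure functions $C^a_{bc}$ of $A$ into those of $B$ in the formula for $\delta y$, while the anchor-compatibility handles $\delta x$. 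Because $\Psi^{-1}$ exists fiberwise, every admissible variation of $b$ arises in this way. I expect this intertwining of the two variation formulas to be the main obstacle, since it requires unwinding the definition of a Lie algebroid morphism and checking that the nonlinear bracket terms transform correctly; the fiberwise isomorphism is essential both to produce $\eta$ and to ensure surjectivity onto the variations of $b$.

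With the two previous steps in place the conclusion is immediate. Differentiating the identity $\mathcal{S}_L(\Psi\circ a)=\mathcal{S}_{L\circ\Psi}(a)$ along a one-parameter family of admissible curves realizing $\delta a$ gives
\[
d\mathcal{S}_L(b)(\delta b)=d\mathcal{S}_{L\circ\Psi}(a)(\delta a).
\]
If $a$ is a solution of Lagrange's equations for $L\circ\Psi$, the right-hand side vanishes for every admissible $\delta a$; by the surjectivity established above, the left-hand side then vanishes for every admissible variation $\delta b$ of $b$, so $b=\Psi\circ a$ is a critical point of $\mathcal{S}_L$ and hence a solution of Lagrange's equations for $L$. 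Finally I would remark that regularity of $L$ is inherited by $L\circ\Psi$ precisely because $\Psi$ is a fiberwise isomorphism (the fiber Hessian of $L\circ\Psi$ is conjugate to that of $L$ by the fiberwise-linear isomorphism $\Psi$), so both variational problems genuinely determine the Lagrangian dynamics and the statement concerns actual trajectories of the two systems.
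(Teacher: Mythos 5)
Your strategy is sound, but be aware that the paper contains no proof of this statement at all: it is quoted verbatim from Theorem 4.5 of \cite{Weinstein}, and the argument the paper implicitly relies on (and is set up to exploit) is Hamiltonian rather than variational. Weinstein's route is: since $\Psi$ is a fiberwise isomorphism, the fiberwise inverse--transpose $\Psi^{\dagger}\colon A^{*}\to B^{*}$, $\xi_{x}\mapsto \xi_{x}\circ(\Psi_{x})^{-1}$, is a Poisson map for the canonical linear Poisson structures; the Legendre transforms satisfy $\Psi^{\dagger}\circ\mathbb{F}(L\circ\Psi)=\mathbb{F}L\circ\Psi$, the energies satisfy $E_{L\circ\Psi}=E_{L}\circ\Psi$, hence the Hamiltonians obey $H_{L\circ\Psi}=H_{L}\circ\Psi^{\dagger}$; since a Poisson map sends integral curves of $X_{H\circ\Psi^{\dagger}}$ to integral curves of $X_{H}$, undoing the (regular) Legendre transforms gives the claim in a few lines. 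The paper displays precisely these ingredients immediately after the theorem, with $\tilde{\beta}$ playing the role of $\Psi^{\dagger}$ and $H_{L\circ\Psi}=H_{L}\circ\tilde{\beta}$. Your variational route is genuinely different and also works, but it rests on two inputs you defer and should instead make explicit: (i) the characterization of solutions of Lagrange's equations on a Lie algebroid as critical points of the action over admissible curves under admissible variations is itself a theorem (Mart\'inez's variational calculus on Lie algebroids, see \cite{EduardoM}), not Weinstein's definition of the dynamics, so it must be invoked or proved; (ii) the step you call the heart of the argument --- that $\Psi$ carries the variation generated by $\eta$ onto the variation generated by $\zeta=\Psi\circ\eta$ --- is exactly the statement that Lie algebroid morphisms preserve admissible homotopies; rather than the coordinate computation with structure functions that you anticipate, the clean argument is to view an admissible variation as a Lie algebroid morphism $T(I\times I)\to A$ and compose it with $\Psi$ (see \cite{lecturesontheintegrabilityofliebrackets}), and you also use, without stating it, the standard fact that every generator $\eta$ vanishing at the endpoints is realized by an actual family of admissible curves. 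With those points filled in your proof is complete; what it buys is independence from the Poisson geometry of the dual bundles, whereas Weinstein's argument buys brevity and is the one that matches the surrounding text of the paper, where the diagram relating $\mathbb{F}(L\circ\Psi)$, $\mathbb{F}L$ and $\tilde{\beta}$ is exactly the commutativity $\tilde{\beta}\circ\mathbb{F}(L\circ\Psi)=\mathbb{F}L\circ\Psi$ used in that proof.
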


The main observation is that the Lagrangian dynamics can be
interpreted as a ``parametrized'' version of the usual Lagrangian
theory on tangent bundles (see \cite{MMMIII}). An easy computation shows that for any initial condition $X_g\in \ker(T\alpha)$ the source does not evolve, that is
$\alpha\circ \pi(\varphi_t^{X_{ L\circ \Psi}}(X_g))=\alpha(g)$, where
$\varphi_t^{X_{L\circ\Psi}}$ is the flow of the vector field
$X_{L\circ\Psi}$ whose trajectories are the solutions of the
Euler--Lagrange equations for $L \circ \Psi$. That
explains why we can think about the Lagrangian system $(\ker(T\alpha),\
L \circ\Psi)$ as a ``bundle of Lagrangian systems''
$(T\alpha^{-1}(m),(L\circ \Psi)_{|\alpha^{-1}(m)})$ parametrized by
$m\in M$. From now on we use $L_m=(L \circ \Psi)_{|\alpha^{-1}(m)}$. Of course, given $X_g$ the flow produced by the two
Lagrangian systems,  $\varphi_t^{X_{L\circ \Psi}}$ and $\varphi_t^{X_{L_m}}$ with initial condition $X_g$ coincides, and that
concludes our observation. We can reduce now our problem to the
classical tangent bundle case. Given $g$, close enough to the identity
$\epsilon(\alpha(g))$ and a time $t$ small enough,  there exists an initial condition $X\in
\ker(T\alpha)_{\epsilon(\alpha(g))}=T\alpha^{-1}(\alpha(g))$ such that
$\varphi_t^{X_{L\circ \Psi}}(X)=g$, just by
direct application of the classical result on tangent bundles, to the
system $(T\alpha^{-1}(\alpha(g)),L_{\alpha(g)})$. Since these
constructions only hold locally, close enough in this section can be
understood with the usual euclidean norm in local coordinates, for instance. The classical theory also
says that we can define 
\begin{equation}\label{actionsolution}
S(t,g)=\int_0^tL(\varphi_s^{X_{L\circ \Psi}}(X)) ds
\end{equation}
in an appropriate open set, that we do not describe here, such that $g$ is close
to $\epsilon(\alpha(g))$ (for more details, see \cite{MMMIII}).
\begin{remark} The classical situation corresponds to the pair
  groupoid case. There $G=Q\times Q$ and $\ker(T\alpha)\equiv
  \{0\}\times TQ$ and $\Psi(0,X_{q})=(X_{q})$. So for any $q^1\in Q$ we
  have the Lagrangian system $L_{q^1}(X_{q^2})=L(X_{q^2})$. Thus, given $q^1$ close
  to $q^2$, which is equivalent to saying $(q^1,q^2)$ ($g$ in the
  previous discussion) close to $(q^1,q^1)$ ($\epsilon(\alpha(g))$ in
  the previous discussion) there exists a unique path satisfying the
  Euler--Lagrange equations joining them. Notice that given an initial
  condition $(0_{q^1},X_{q^2})$ the source, $q^1$, does not evolve.
\end{remark}

The Legendre transformations , $\mathbb{F}(L\circ \Psi)$ and
$\mathbb{F}L$, make the diagram below commutative
\[
\xymatrix{
\ker(T\alpha)^* \ar[r]^{\tilde{\beta}} & AG^* \\
\ker(T\alpha) \ar[r]^{\Psi} \ar[u]^{\mathbb{F}(L\circ \Psi)}\ar@{}[ur]|\circlearrowright& AG \ar[u]_{\mathbb{F}L}
}
\]
The corresponding Hamiltonians, say $H_{L\circ \Psi}:\ker(T\alpha)^*\rightarrow \mathbb{R}$ and $H_L:A^*G\rightarrow\mathbb{R}$
satisfy $H_{L\circ \Psi}= H_L\circ\tilde{\beta}$. Once more, direct application of the tangent bundle case shows that
$S$ satisfies
\[
\displaystyle\frac{\partial S}{\partial t}+H_L(\tilde{\beta}\circ
\frac{\partial S}{\partial g})=0.
\]
\begin{remark} The previous construction shows the link of our theory
  with \textbf{variational integrators}, in regard of equation \eqref{actionsolution}. The Hamilton--Jacobi equation can be
  understood in this way as a computation of the exact discrete Lagrangian (see \cite{MMMIII}).
\end{remark}

\begin{remark}
The same proof with the obvious modifications holds in the
time-dependent case. 
\end{remark}

\subsection{Method of Characteristics}\label{characteristics}
We give the procedure below to show existence of solutions of our
Hamilton--Jacobi equation. Actually, what we use is the classical
method of characteristics adapted to this situation. 

Assume that we are dealing with the general setting, that is, our
Hamiltonian is defined in the dual bundle of an integrable Lie
algebroid, $H:\mathbb{R}\times A^*G\rightarrow\mathbb{R}$. Let
$\lag_0=\widetilde\epsilon(A^*G)\subset T^*G$ be the Lagrangian submanifold given
by the units of the cotangent groupoid, and $\lag_1=\{(0,0)\}\times
\lag_0\subset T^*(\mathbb{R}\times G)$. Let $X_{H^{ext}}$ be the
Hamiltonian vector field of the associated extended Hamiltonian,
defined in previous sections and $\varphi^{X_{H^{ext}}}(t,p)$ its flow. Consider the immersion below
\[
\begin{array}{rccl}
i:& \mathbb{R}\times \lag_1&\longrightarrow & T^*(\mathbb{R}\times G)\\ \noalign{\medskip}
&(t,p)&\rightarrow & \varphi^{X_{H^{ext}}}(t,p),
\end{array}
\]
a simple computation shows that its image, $\lag$, is a Lagrangian submanifold,
embedded Lagrangian submanifold when restricted to small open
sets. The fact that, by construction, this Lagrangian submanifold is
tangent to $X_{H^{ext}}$ implies that $dH^{ext}$ restricted to this manifold
is $0$, then $H^{ext}$ is constant in any connected
component. For small $t$, it is obvious that $\lag_t=\varphi_t^{X_{H^{ext}}}(\lag_1)$ will be a bisection, al least restricting ourselves to an open subset of $\lag_t$. Summarizing, we get a Lagrangian bisection such that $H^{ext}_{|\lag}$ is constant.

\subsubsection{Local Solution}
In this section we give an abstract argument which justifies our
choice of local coordinates (see Section \ref{non-free}). We just sketch the proof, as it
is straightforward. Imagine that we are in the setting of the previous
Section \ref{characteristics}, and that there exist a projection,
which may be only locally defined, $\pi:T^*
  G\rightarrow B$ such that $\lag_0$ is horizontal. We are assuming that
  $\pi:T^*G\rightarrow B$ can be, at least locally, identified with a
  cotangent bundle, see Section \ref{non-free} for an illustration. Then, maybe restricting to an open set of $\lag$, for
  $t\in \mathbb{R}$ small enough, $\lag$ is horizontal for $(\id_{\mathbb{R}},\pi):T^*(\mathbb{R}\times
  G)\rightarrow \mathbb{R}\times B$, defined as $(\id_{\mathbb{R}},\pi)(t,e,\mu_g)=(t,g)$, as an
  immediate application of stability of submersions. Since the
  argument is local, this is a consequence of the implicit function theorem. Making some
  abuse of notation, let us call this horizontal submanifold  (restriction of $\lag$ to an open set) $\lag$
as
  well. This implies that $\lag$ is a Lagrangian submanifold in
  $T^*(\mathbb{R}\times G)$ and so, locally again, there exists
  $S$, defined using the mappings introduced in previous sections, such that
  $\im(dS)=\lag$, and so we can guarantee local existence of
  solution. The projection $\pi$ has to be replaced by the
  appropriate mapping in each setting. For Lie algebras, action Lie
  algebroids and Atiyah algebroids, the existence of this map comes
  from Section \ref{non-free}.
	
	\begin{remark} When the Hamiltonian comes from
            an hyper-regular Lagrangian, then the Lagrangian
            submanifolds above, solutions of the Hamilton--Jacobi
            equation, are also horizontal with respect to the
            projection of the cotangent bundle
            $\pi_{\rgroupoid}:T^*(\rgroupoid)\rightarrow
            \rgroupoid$. A type I generating function for that
            Lagrangian submanifold is given by the action ($\int L$)
            as introduced in the previous section. Nonetheless, the
            approach above works for any Hamiltonian system and
            provides Lagrangian submanifolds that are solution of the
            Hamilton--Jacobi equation but are not horizontal with respect
            to the $\pi_{\rgroupoid}$ projection.
\end{remark}


\section{Numerical Methods}\label{examples}

Here we develop our applications to numerical methods. The general
procedure is an improvement of the one outlined in
\cite{ChannellScovelII}. It works for Lie algebras, action Lie
algebroids and Atiyah Lie algebroids in a straightforward way, but it
can be generalized to other settings. We want to mention here that the
methods presented here can be improved in several ways, see item
\ref{improve} in Section \ref{conclusions}. Nonetheless, the numerical methods presented  already have the nice properties that one would expect from a Poisson integrator.

\subsection{General Procedure}

We describe here one of the most basic truncations that can be carried out in order to obtain numerical methods. However, we remark that other truncations can be elaborated based on our theoretical framework. In particular, {\it ad hoc} truncations can be obtained for any case, taking into account the features of the system under consideration.

\begin{enumerate}
\item Start with a Hamiltonian system in the dual of an integrable Lie
  algebroid $(\mathbb{R}\times A^*G,\Pi_\mathbb{R},H)$.

\item Construct coordinates where the identity can be generated, using
  Section \ref{non-free}, and obtain the Hamilton--Jacobi equation
\[
\displaystyle\frac{\partial S}{\partial t}+H(t,\tilde{\beta}\circ
\frac{\partial S}{\partial g})=0
\]
in those coordinates.

\item Approximate the solution taking the Taylor series in $t$ of $S$
  up to order $k$, $S(t,g)=\sum\limits_{i=0}^k S_i(t,g)t^i/i!+\mathcal{O}(t^{k+1})$, where
  $S_0$ is the generating function of the identity.

\item The equations for the $S_i$, $i\geq 1$ can be solved
  recursively. For instance, in the case of Lie algebras we get for
  the three first terms
\begin{itemize}
\item $\mathbf{S_0}(t, p^i)=0$.
\item $\mathbf{S_1}(t, p^i)+H(t,\displaystyle\frac{\partial S_0}{\partial
    p^i},p^i)=0$.
\item $\mathbf{S_2}(t, p^i)+\displaystyle\frac{\partial H}{\partial
    t}(t,\frac{\partial S_0}{\partial p^i},p^i)+\frac{\partial
    H}{\partial g^i}(t,\frac{\partial S_0}{\partial
    p^j},p^i)\frac{\partial S_1}{\partial p^i}=0$.
\end{itemize}
Each term can be obtained from the previous one by differentiating with respect
to $t$ and evaluating at $t=0$.

\item Collecting all the terms obtained up to order $k$,
  $S^k=\sum\limits_{i=0}^kS_{i}t^i/i!$, we get an approximation of the
  solution of the Hamilton--Jacobi equation. It is easy to see that the
  transformation induced implicitly $\mu^1_{m^1}\rightarrow
  \mu^2_{m^2}$, $\mu^i_{m^i}\in A^*G$ by 
\[
\begin{array}{cc}
\tilde{\alpha}\circ\displaystyle dS^k(t,g)=\mu^1_{m^1} & \tilde{\beta}\circ dS^k(t,g)=\mu^2_{m^2}
\end{array}
\] 
transforms the system to equilibrium up to order $k$. That is, it satisfies \[\displaystyle\frac{\partial S^k}{\partial t}+H(t,\tilde{\beta}\circ
\frac{\partial S^k}{\partial g})=\mathcal{O}(t^{k+1}).\]Application of Theorem \ref{hamiltoniantransformation} gives that the transformation is an approximation of order $k$ of the flow and so the
numerical method obtained by fixing a time-step, say $h$, is of order
$k$.
\end{enumerate}

We present now the results obtained after applying this scheme to
different situations.

\subsection{Examples}
\subsubsection{Rigid Body}

The rigid body is one of the classic benchmarks for Lie--Poisson integrators. We
show here the performance of the integrators described above. The configuration space for the rigid body (disregarding translations) is the Lie group $G=SO(3)$. The matrix $R\in SO(3)$ gives the configuration of the sphere as a rotation with respect to a reference configuration where its principal axes of inertia are aligned with the coordinate axes of an inertial system. Consider also a second system of coordinates fixed to the rotating body and aligned with the principal axes of inertia. We identify the Lie algebra $\mathfrak{so}(3)$ with $\mathbb{R}^3$ using the isomorphism $\hat{\cdot}\ \colon\ \mathbb{R}^3 \to \mathfrak{so}(3)$ given by
\[
\hat{x}=\begin{pmatrix}
  0&-x_3&x_2\\
x_3&0&-x_1\\
-x_2&x_1&0
\end{pmatrix}
\]
As is usual, the angular velocity in body coordinates is denoted by $\Omega$, and $\hat\Omega=R ^{-1}\dot R$.
The moment of inertia tensor in body coordinates is $\mathbb{I}=\operatorname{diag}(I_1,I_2,I_3)$.
Also, the body angular momentum is $\Pi=\mathbb{I}\Omega$, so in principal axes, $\Pi=(\Pi_1,\Pi_2,\Pi_3)=(I_1\Omega_1,I_2\Omega_2,I_3\Omega_3)$. The Hamiltonian in these variables is
\[
H= \frac{1}{2}\left( \frac{\Pi_1^2}{I_1}+\frac{\Pi_2^2}{I_2}+\frac{\Pi_3^2}{I_3}\right),
\]
which we regard as a function $H\ \colon\ \mathfrak{so}(3)^* \to
\mathbb{R}$. Note that if we regard $SO(3)$ as a Lie groupoid, then the dual Lie algebroid $A^*G$ is just $\mathfrak{so}^*(3)$. We use the Cayley map $\operatorname{cay}\ \colon\ \mathfrak{so}(3) \to SO(3)$,
\[
\operatorname{cay}(\hat\omega)=I_3+\frac{4}{4+\|\omega\|^{2}}\left(\hat\omega+\frac{\hat\omega^{2}}{2}\right),
\]
where $I_{3}$ is the $3\times 3$ identity matrix (see \cite{FerraroJimenezMartin}) to define local coordinates near the group identity. This gives a trivialization of the cotangent Lie groupoid $T^*SO(3)$ near the identity. The corresponding source and target maps, which map $T^*SO(3)$ into $\mathfrak{so}(3)^*$, become locally $\tilde\alpha_{\operatorname{cay}},\ \tilde\beta_{\operatorname{cay}}\ \colon\ \mathbb{R}^3\times (\mathbb{R}^3)^* \to \mathbb{R}^3$, and are given by 
\[
\tilde\alpha_{\operatorname{cay}}(x,y,z,p_x,p_y,p_z)= \left( \begin {array}{c}  \left( \frac{x^2}{4}+1 \right) { p_x}+
 \left(\frac{xy}{4}+ \frac{z}{2} \right) { p_y}+ \left( \frac{xz}{4}-\frac{y}{2}
 \right) { p_z}\\ \left( \frac{xy}{4}-\frac{z}{2} \right) {
 p_x}+ \left( \frac{y^2}{4}+1 \right) { p_y}+ \left( \frac{yz}{4}+\frac{x}{2}
 \right) { p_z}\\ \left( \frac{xz}{4}+\frac{y}{2} \right) {
 p_x}+ \left( \frac{yz}{4}-\frac{x}{2}
 \right) { p_y}+ \left( \frac{z^2}{4}+1 \right) { p_z}\end {array} \right),
\]
\[
\tilde\beta_{\operatorname{cay}}(x,y,z,p_x,p_y,p_z)= \left( \begin {array}{c}  \left( \frac{{x}^{2}}{4}+1 \right) { p_x}+
 \left( \frac{xy}{4}-\frac{z}{2} \right) { p_y}+ \left( \frac{xz}{4}+\frac{y}{2}
 \right) { p_z}\\ \left(\frac{xy}{4}+ \frac{z}{2} \right) {
 p_x}+ \left( \frac{y^2}{4}+1 \right) { p_y}+ \left( \frac{yz}{4}-\frac{x}{2} \right) { p_z}\\ \left( \frac{xz}{4}-\frac{y}{2}
 \right) { p_x}+ \left( \frac{yz}{4}+ \frac{x}{2} \right) { p_y}+ \left( \frac{z^2}{4} +1\right) { p_z}\end {array} \right). 
\]

The numerical method corresponding to a given truncation of the
function $S$ gives the evolution of $\Pi\in A^*G=\mathfrak{so}^* (3)$. We have run simulations using truncations of $S$ up to order 8, for a rigid body with $\mathbb{I}=(0.81,1,0.21)$. We have used $\Pi_0=(1.5,0.1,0)$ as the initial value, which makes the body rotate near the middle (unstable) axis.
The total run time was $T=5$, in which the body makes one ``tumbling''
motion, with decreasing values for the time-step $h$ (encoded as the
variable $t$ in the function $S$).  In Figure \ref{fig:rigidbody-errors} we plot the norm of the global error, as a distance in $\mathbb{R}^3$, with respect to a Runge-Kutta simulation (Matlab's ode45, variable step size) of the Euler equation $\dot\Pi=\Pi\times\Omega$ (\cite{Marsden}). Error values below $10^{-12}$ are not plotted due to inaccuracies caused by roundoff errors. For the rigid body, the terms with even orders in the expansion of $S$ are zero.
\begin{figure}[H]
\begin{center}
\includegraphics[trim = 30mm 7mm 37mm 10mm, clip, scale=.52]{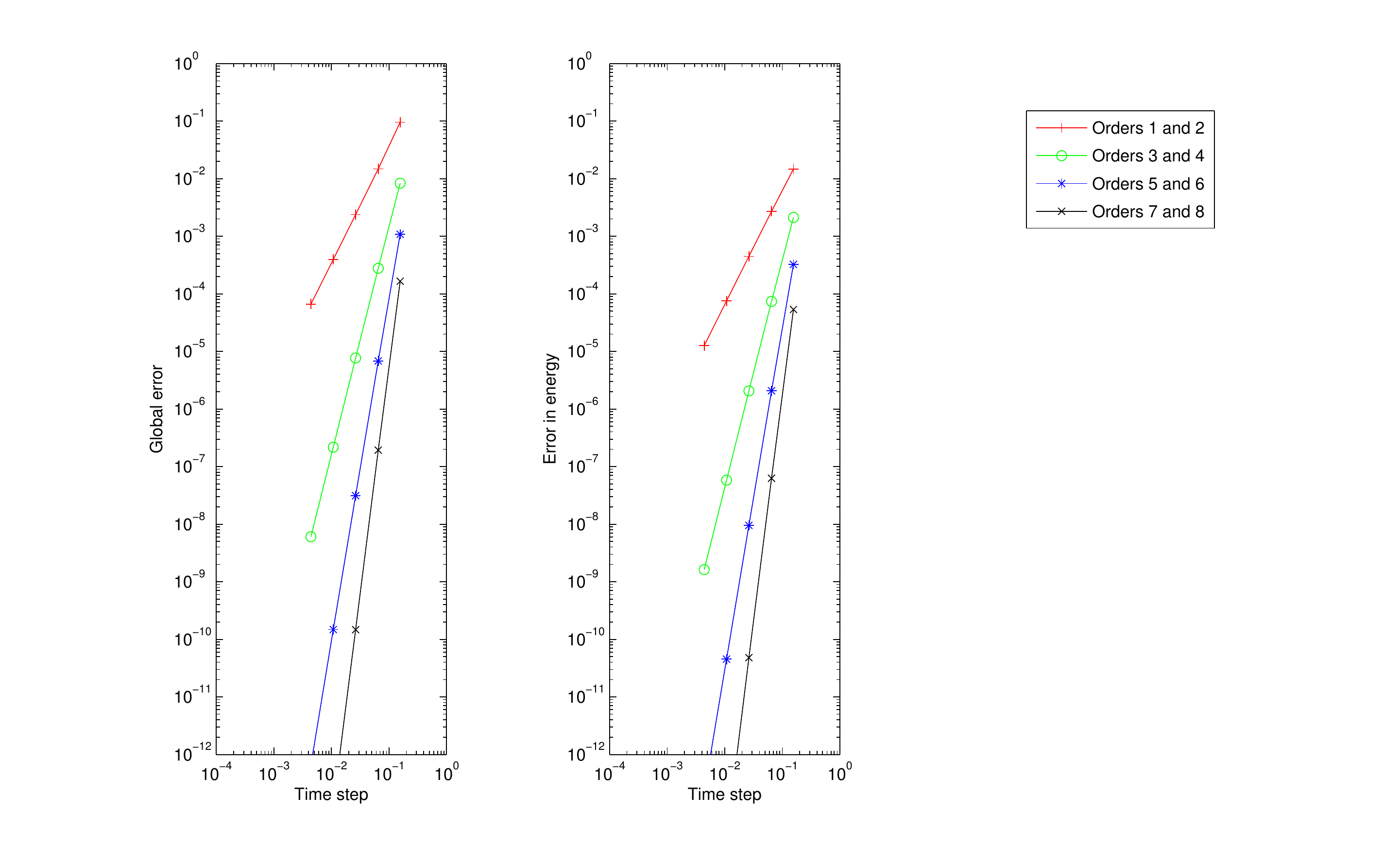}
\end{center}
\setlength\abovecaptionskip{-5pt}
\caption{Global errors for the rigid body simulations after $5$ seconds, HJ method.}
\label{fig:rigidbody-errors}
\end{figure}

The next figures show the nice behavior of the energy for the orders $2$, $4$, $6$ and $8$, for $h=0.05$.  In the first three cases the behavior is essentially periodic, while in the last case the roundoff errors dominate and the evolution of the energy becomes a random walk.

\begin{figure}[H]
\begin{center}
\includegraphics[trim = 10mm 5mm 10mm 0mm, clip, scale=.6]{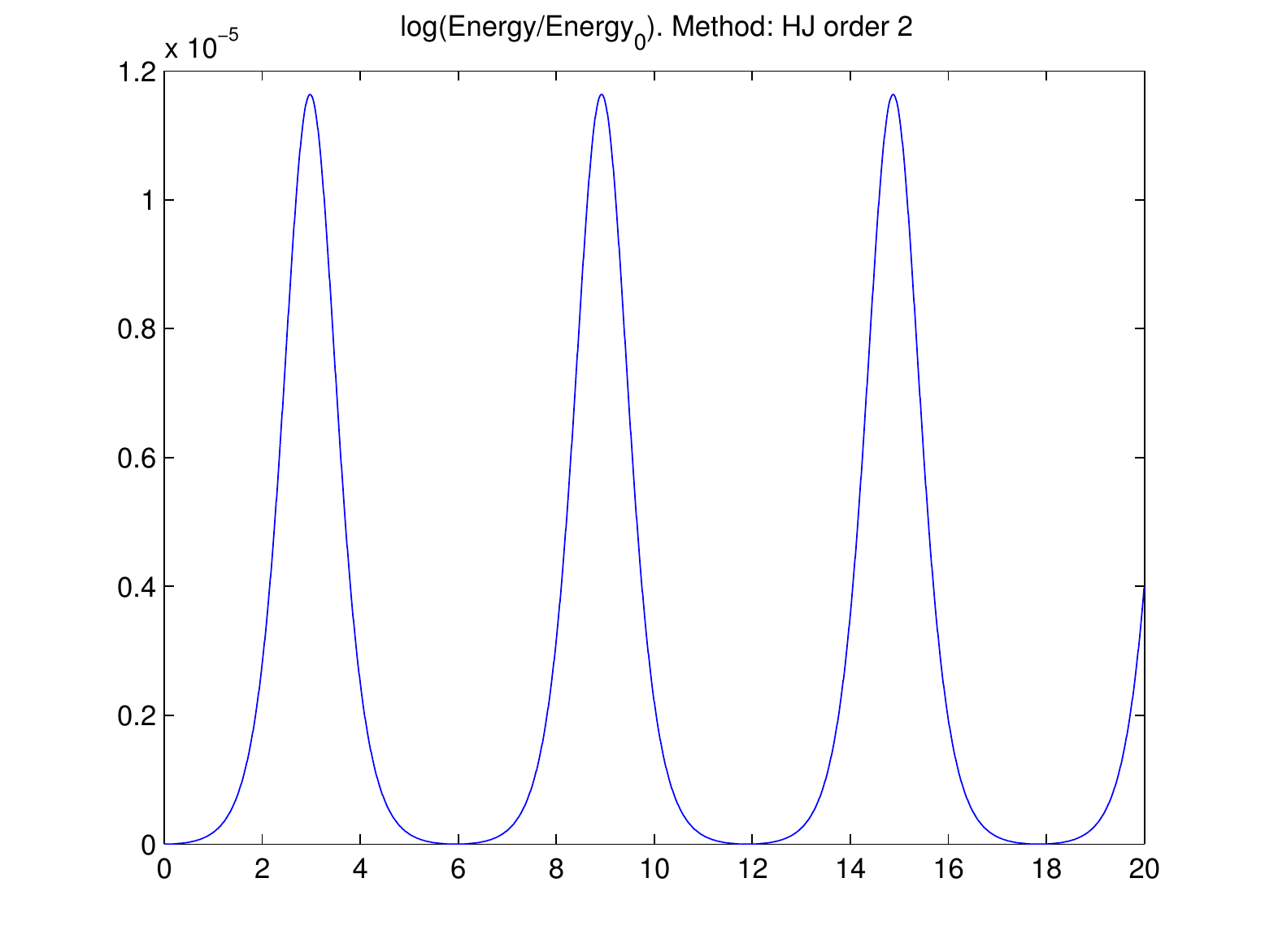}
\includegraphics[trim = 10mm 5mm 10mm 0mm, clip, scale=.6]{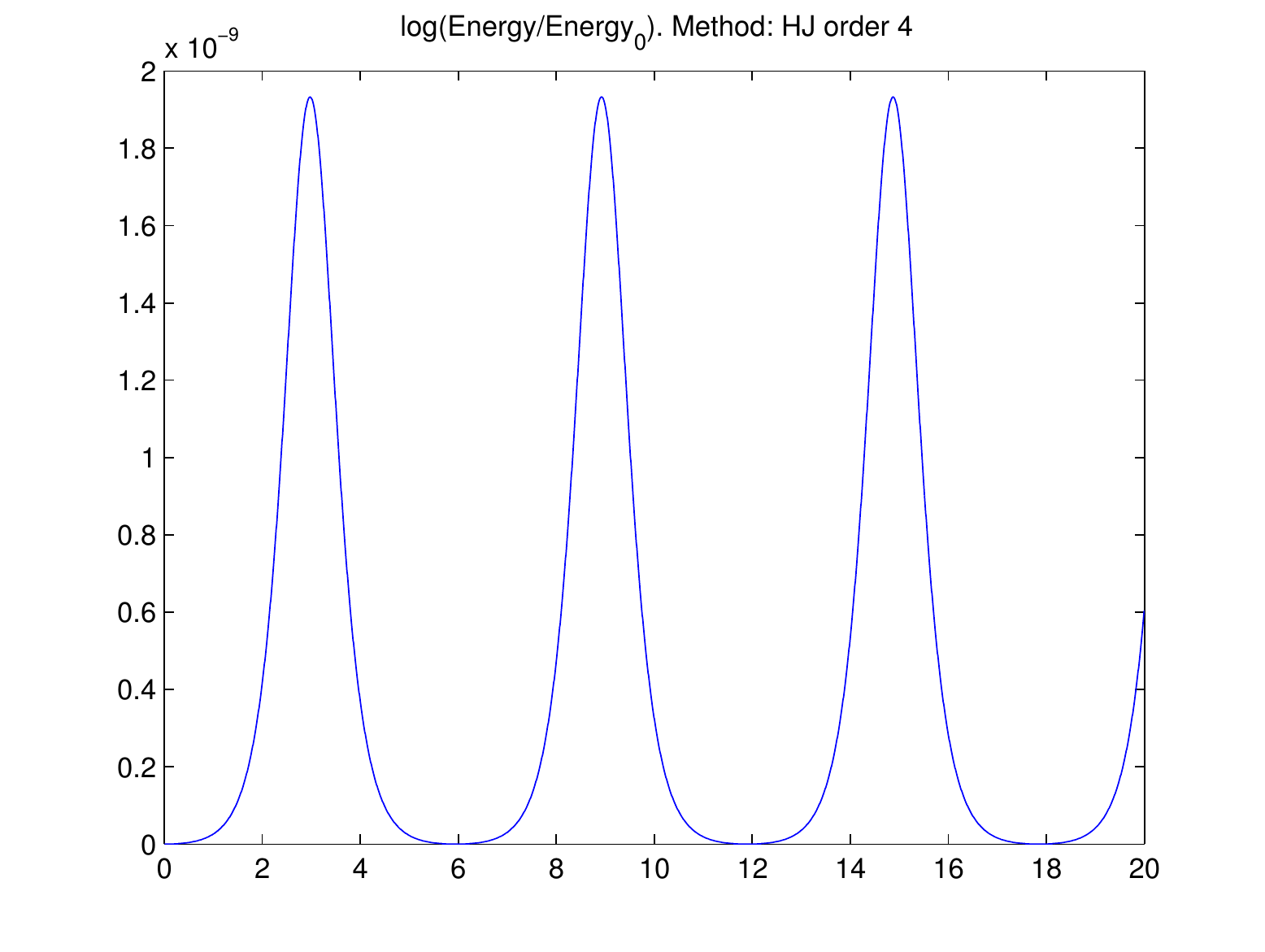}
\end{center}
%
\begin{center}
\includegraphics[trim = 10mm 5mm 10mm 0mm, clip, scale=.6]{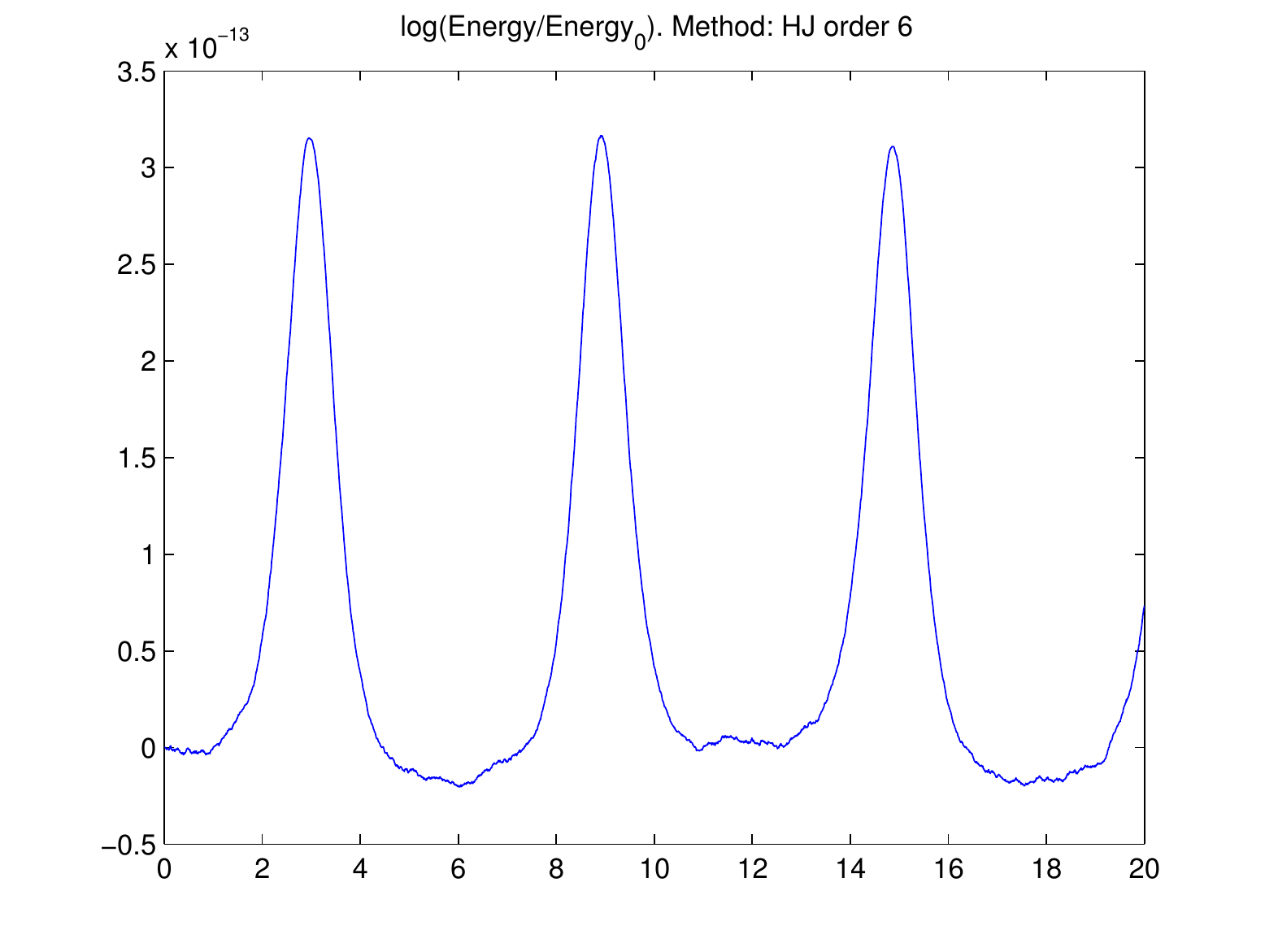}
\includegraphics[trim = 10mm 5mm 10mm 0mm, clip, scale=.6]{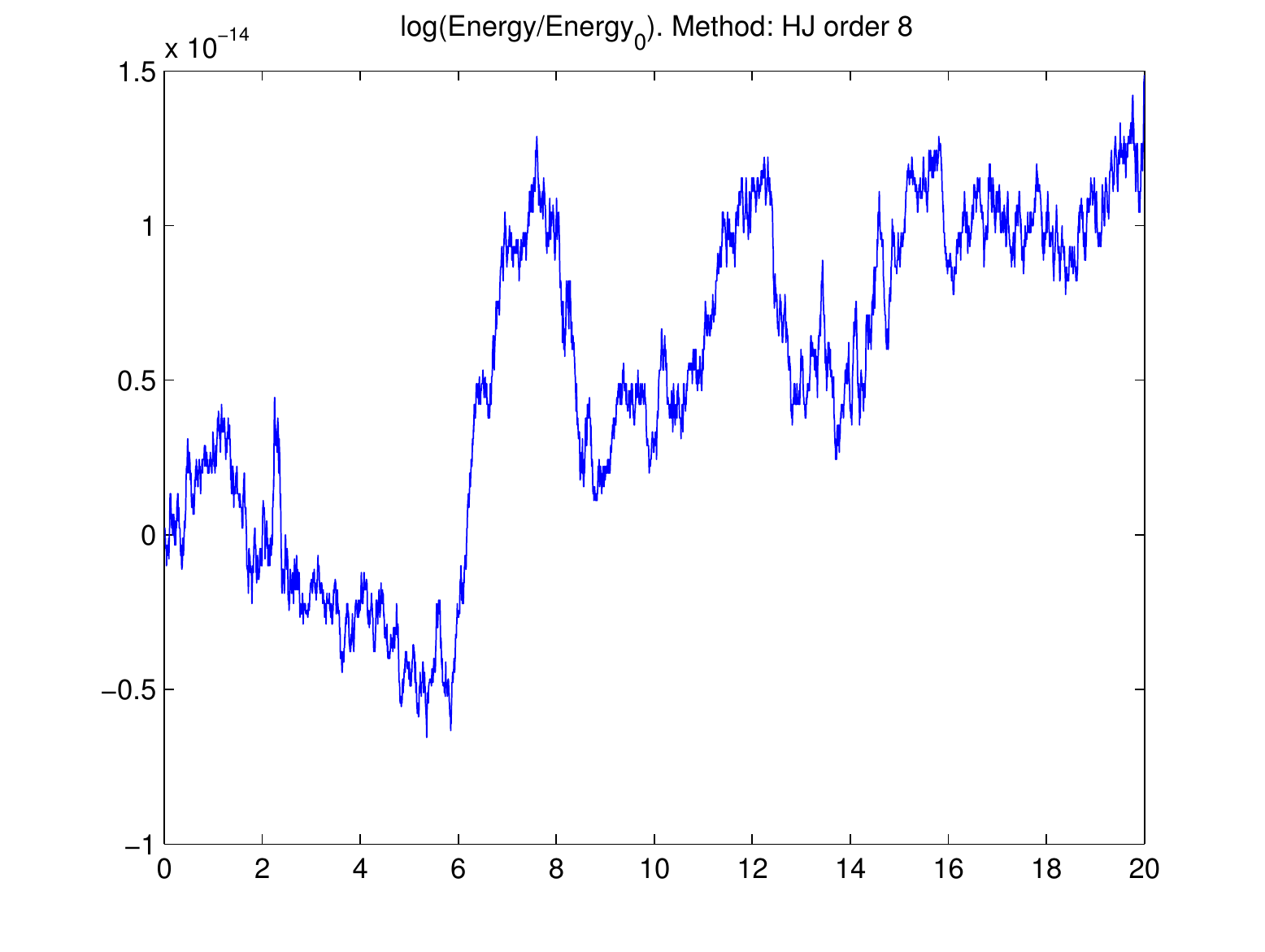}
\end{center}
\setlength\abovecaptionskip{-10pt}
\caption{Energy conservation for the HJ method.}
\label{fig:HJ-energy}
\end{figure}

Notice that here we simulated the first $20$ seconds of the system, but the behavior of the energy is very stable, {\it i.e.}, the energy seems to oscillate periodically for all time, as shown in Figure \ref{energylong} where the first $25000$ seconds are simulated for the order $2$ method.

\begin{figure}[H]
\begin{center}
\includegraphics[trim = 15mm 5mm 10mm 0mm, clip, scale=.6]{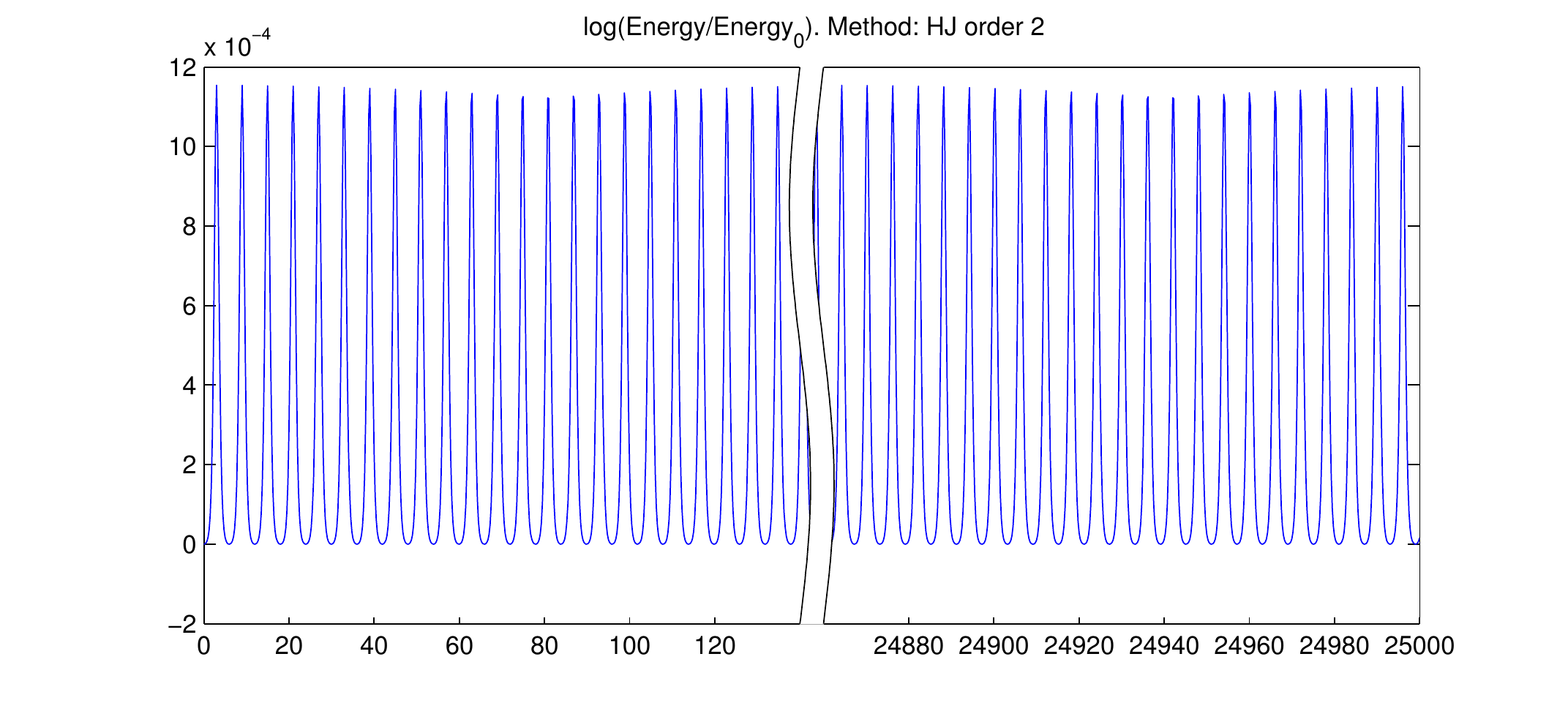}
\end{center}
\setlength\abovecaptionskip{-10pt}
\caption{Long time energy conservation, order $2$ HJ method. }\label{energylong}
\end{figure}

The Casimirs have an exceptional conservation even for order $2$ methods, as shown in the figures below.

\begin{figure}[H]
\begin{center}
\includegraphics[trim = 10mm 5mm 10mm 0mm, clip, scale=.6]{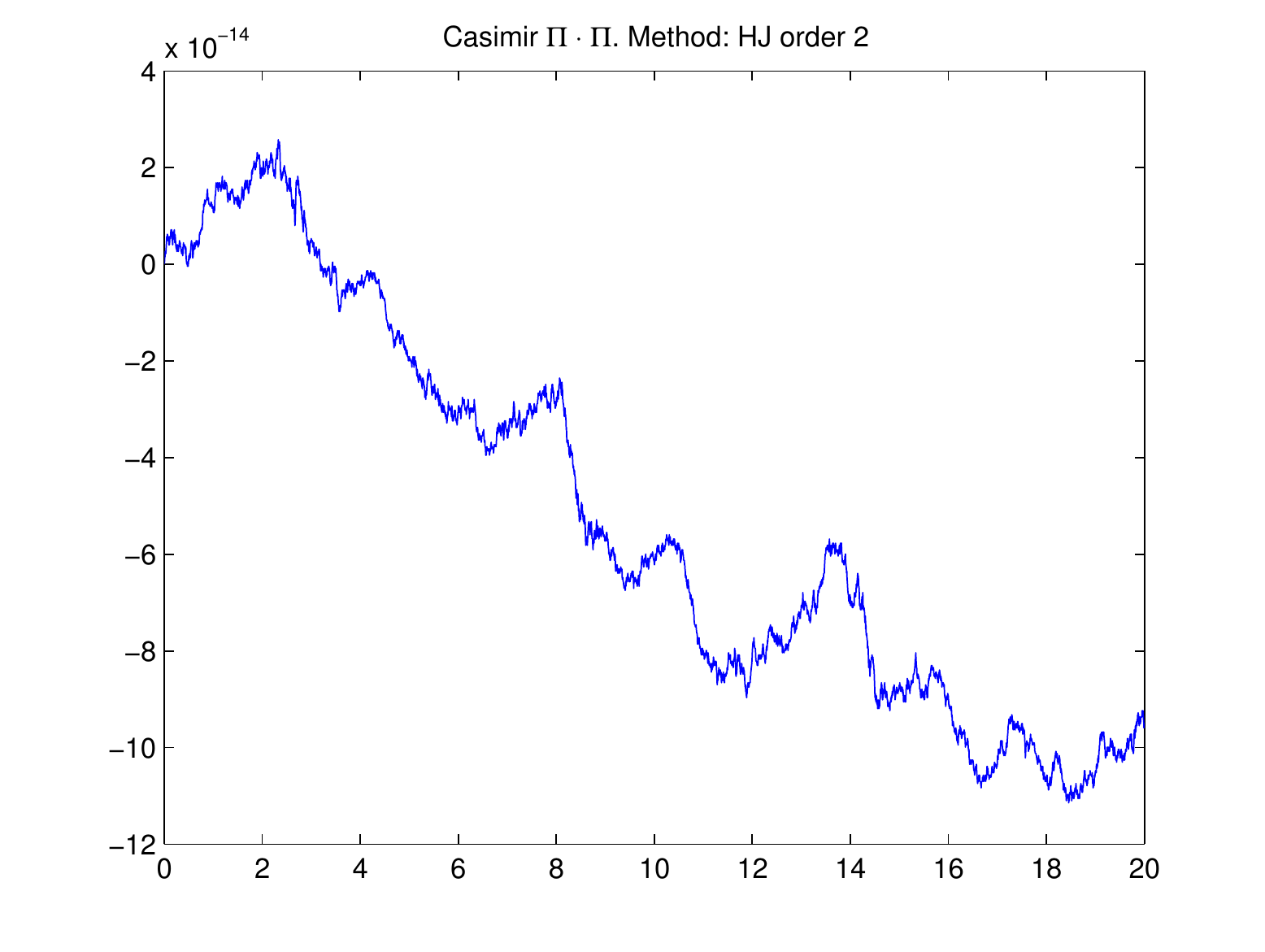}
\includegraphics[trim = 10mm 5mm 10mm 0mm, clip, scale=.6]{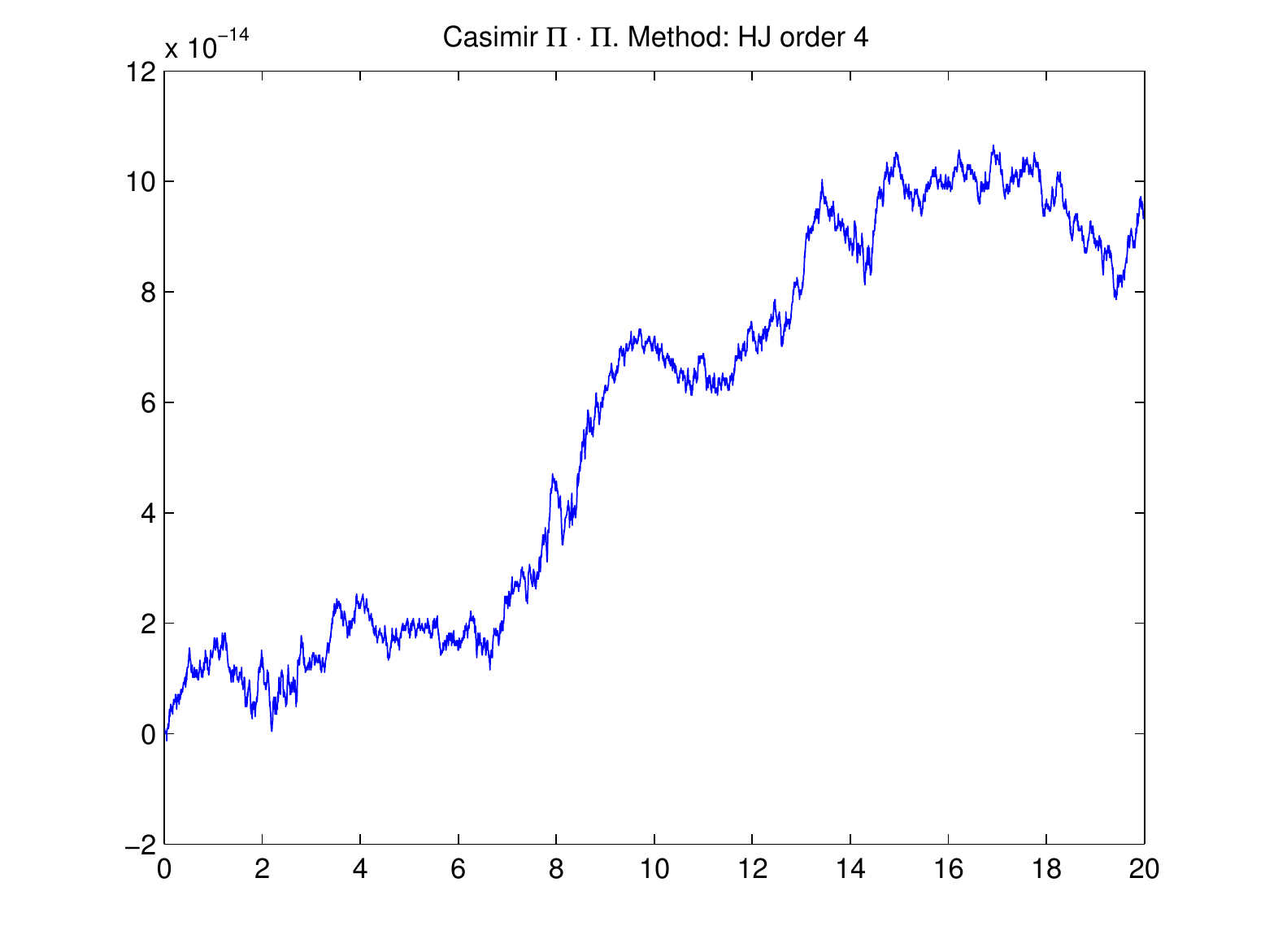}
\end{center}
%
\begin{center}
\includegraphics[trim = 10mm 5mm 10mm 0mm, clip, scale=.6]{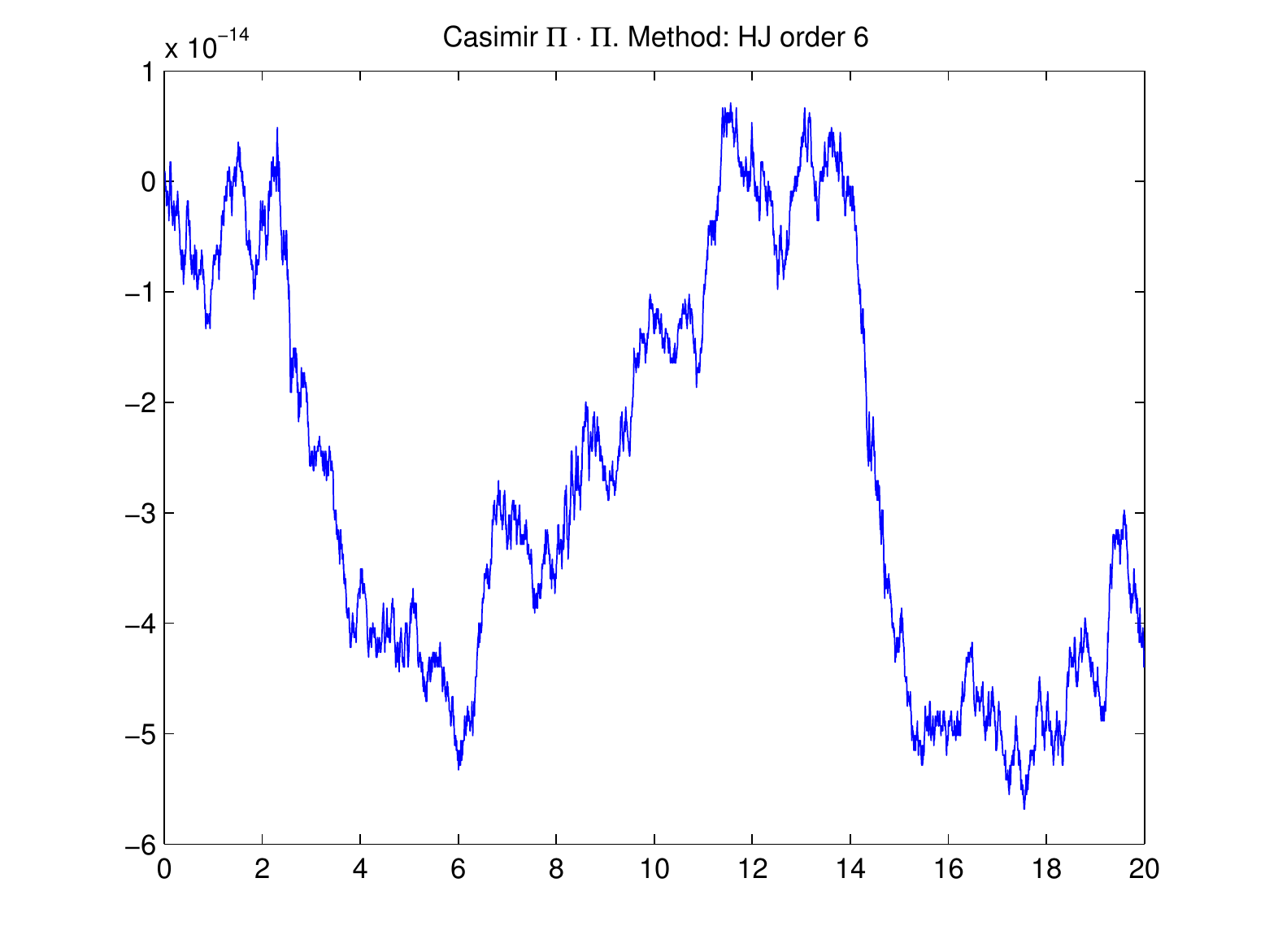}
\includegraphics[trim = 10mm 5mm 10mm 0mm, clip, scale=.6]{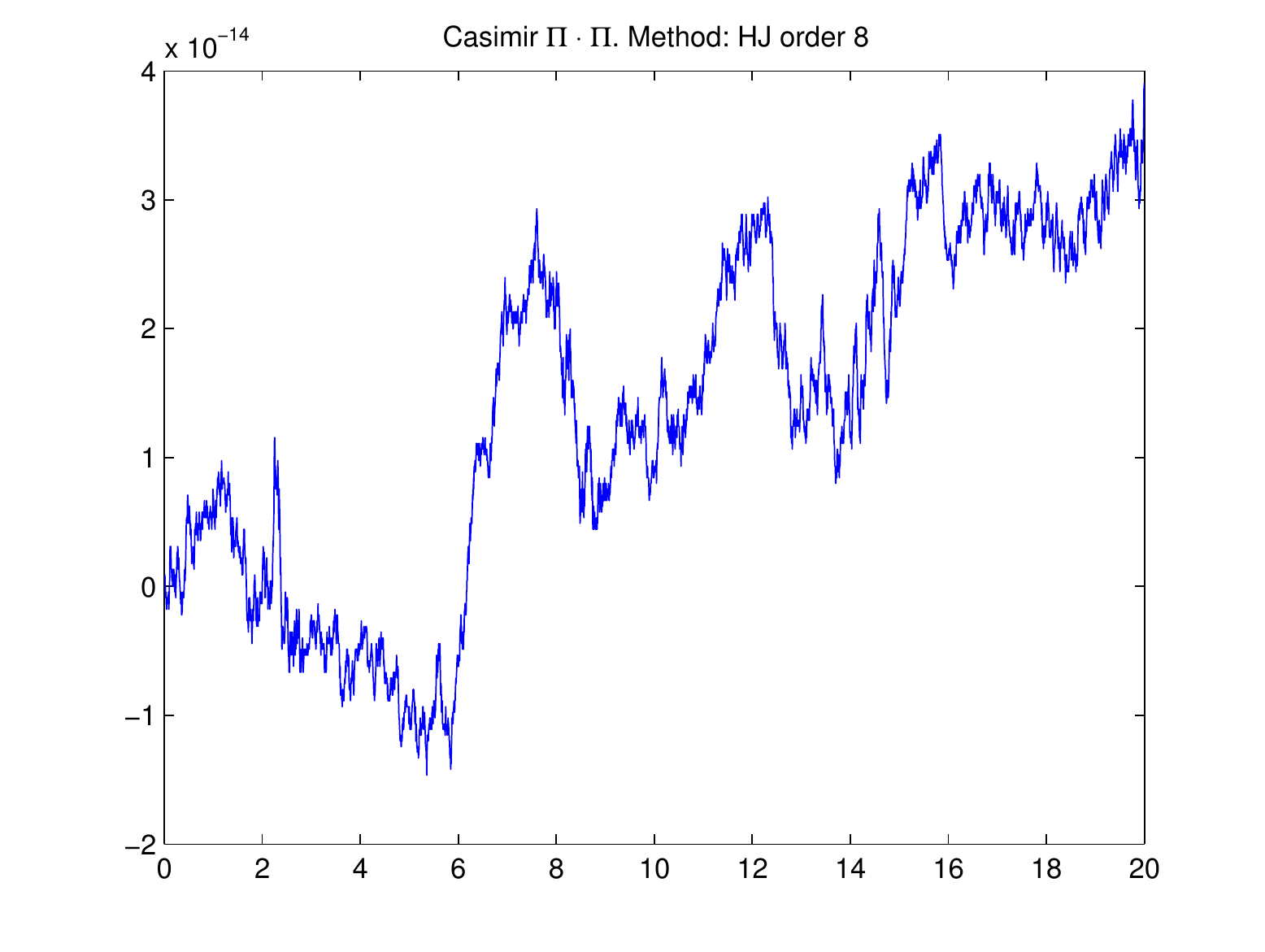}
\end{center}
\setlength\abovecaptionskip{-10pt}
\caption{Casimir conservation, HJ method.}
\label{fig:HJ-Casimir}
\end{figure}

Compared to order $4$ Runge-Kutta, we observe that while the solution
of our method oscillates periodically 
the
Runge-Kutta method will get a bigger energy error after around $1000$ seconds.

 \begin{figure}[H]
\begin{center}
\includegraphics[trim = 0mm 0mm 0mm 0mm, clip, scale=.45]{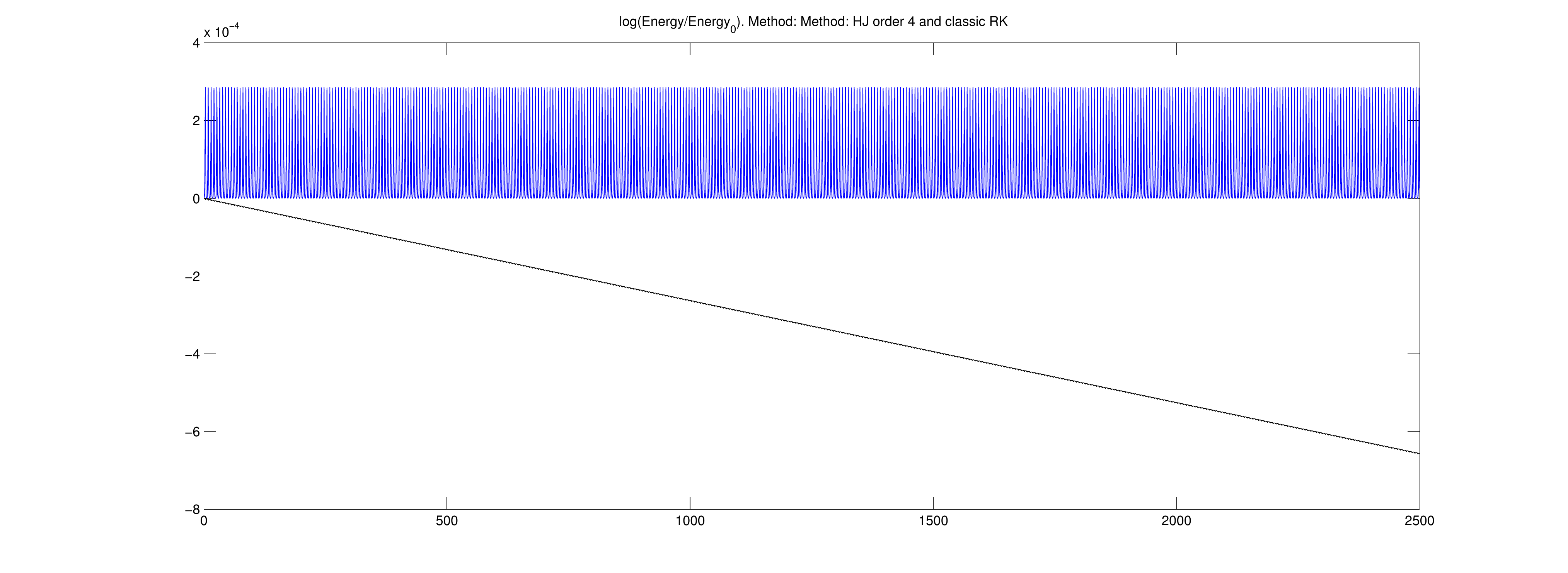}
\end{center}
\setlength\abovecaptionskip{-10pt}
\caption{Energy conservation comparison. Order $4$ Runge-Kutta
  (below), and order $4$ HJ (above). $T=50$, $N=1000$, $h=T/N=0.05$.}
\end{figure}

We show the exceptional conservation of the Casimir for long time of our methods in the figure below, where we simulated the rigid body for $25000$ seconds.

\begin{figure}[H]
\begin{center}
\includegraphics[trim = 0mm 0mm 0mm 0mm, clip, scale=.6]{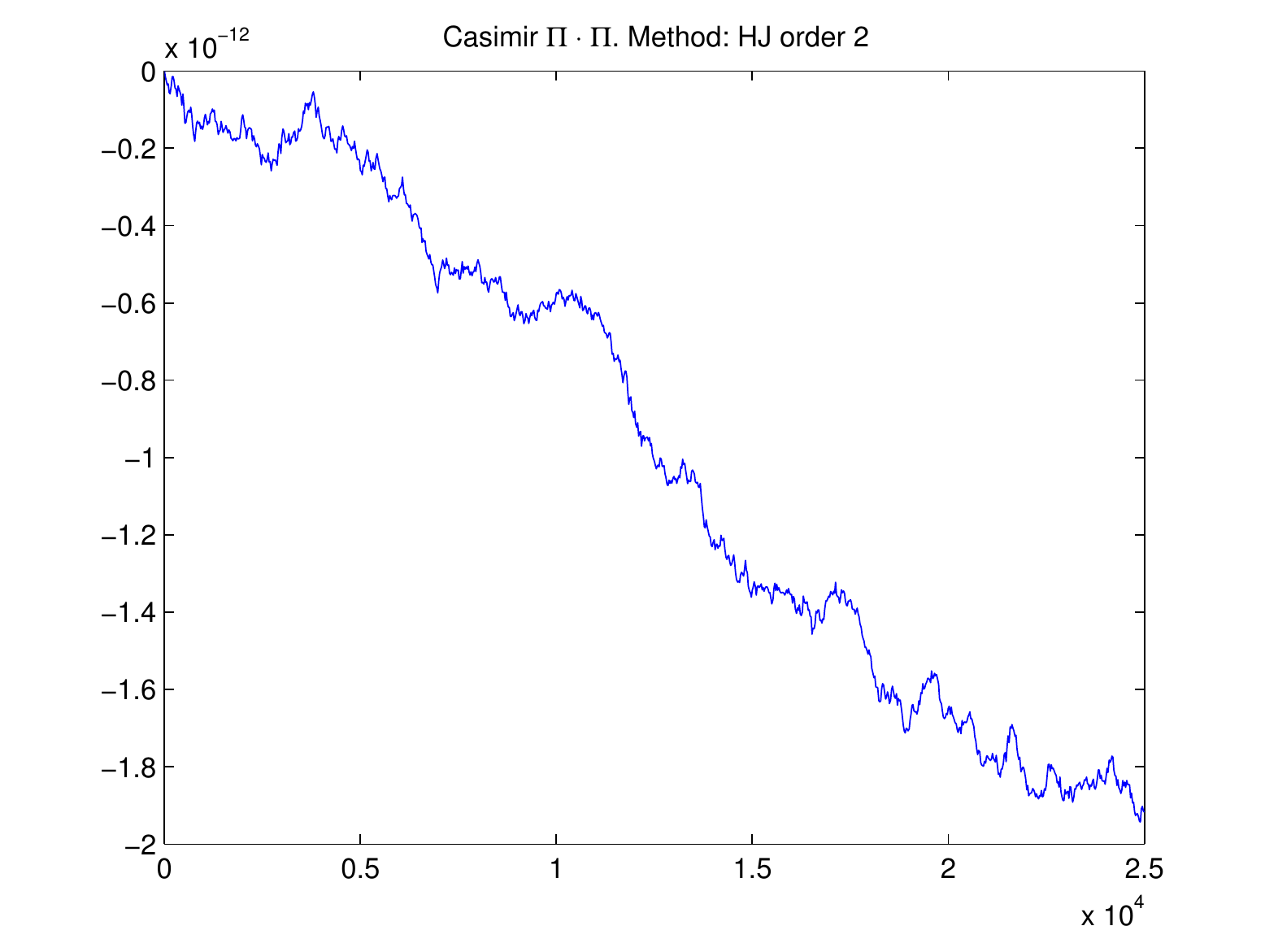}
\end{center}
\setlength\abovecaptionskip{-10pt}
\caption{Casimir conservation, HJ order 2}
\end{figure}

\subsubsection{Heavy Top}

As a concrete example of a system evolving on a transformation Lie algebroid
we consider  the heavy top. This example is modeled  on the action algebroid $\tau\ \colon\ S^2\times\mathfrak{so}(3)\to
S^2$ with Hamiltonian
\[
 H(\Gamma,\Pi)=\frac{1}{2}\Pi\cdot
 I^{-1}\Pi+mgl\Gamma\cdot \mathbf{e},
\]
where $\Pi\in \mathbb{R}^3\simeq\mathfrak{so}^*(3)$ is the angular
momentum, $\Gamma$ is the direction opposite to the gravity and
$\mathbf{e}$ is a unit vector in the direction from the fixed point to the
center of mass, all of them expressed in a frame fixed to the body.
The constants $m$, $g$ and $l$ are respectively the mass of the
body, the strength of the gravitational acceleration and the
distance from the fixed point to the center of mass. The matrix
$I$ is the inertia tensor of the body (see \cite{MMM,Ma}). In this case the Lie groupoid under consideration, $G$, is the action
groupoid $S^2\times SO(3)$, whose Lie algebroid is
$S^2\times \mathfrak{so}(3)$.
The source and target
maps are given by \[\begin{array}{c}\tilde\alpha_{\operatorname{cay}}(a_1,a_2,a_3,x,y,z,p_{a_1},p_{a_2},p_{a_3},p_x,p_y,p_z)
         \\ \noalign{\medskip}
=\left( \begin {array}{c} { a_1}\\{ a_2}
\\ { a_3}\\
  \left( \frac{{x}^{2}}{4}+1 \right) { p_x}+
 \left( \frac{xy}{4}-\frac{z}{2} \right) { p_y}+ \left( \frac{xz}{4}+\frac{y}{2}
 \right) { p_z}+a_2p_{a_3}-a_3p_{a_2}\\
 \left(\frac{xy}{4}+ \frac{z}{2} \right) {
 p_x}+ \left( \frac{y^2}{4}+1 \right) { p_y}+ \left( \frac{yz}{4}-\frac{x}{2} \right) { p_z}-a_1p_{a_3}+a_3p_{a_1}\\ 
\left( \frac{xz}{4}-\frac{y}{2}
 \right) { p_x}+ \left( \frac{yz}{4}+ \frac{x}{2} \right) { p_y}+ \left( \frac{z^2}{4} +1\right) { p_z}+a_1p_{a_2}-a_2p_{a_1}
\end {array} \right) \\ \noalign{\medskip}

\textrm{and} \\ \noalign{\medskip}\tilde\beta_{\operatorname{cay}}(a_1,a_2,a_3,
x,y,z,p_{a_1},p_{a_2},p_{a_3},p_x,p_y,p_z)\\ \noalign{\medskip}
=\left( \begin {array}{c}
{\frac { \left( {x}^{2}-{y}^{2}-{z}^{2}+4 \right) { a_1}}{{x}^{2}+{y
}^{2}+{z}^{2}+4}}+{\frac { \left( 2\,yx+4\,z \right) { a_2}}{{x}^{2}
+{y}^{2}+{z}^{2}+4}}+{\frac { \left( 2\,zx-4\,y \right) { a_3}}{{x}^
{2}+{y}^{2}+{z}^{2}+4}}\\
{\frac { \left( 2\,yx-4\,z \right) { a_1}}{{x}^{2}+{y}^{2}+{z}^{2}+4
}}+{\frac { \left( -{x}^{2}+{y}^{2}-{z}^{2}+4 \right) { a_2}}{{x}^{2}
+{y}^{2}+{z}^{2}+4}}+{\frac { \left( 2\,zy+4\,x \right) { a_3}}{{x}^
{2}+{y}^{2}+{z}^{2}+4}}
\\
{\frac { \left( 2\,zx+4\,y \right) { a_1}}{{x}^{2}+{y}^{2}+{z}^{2}+4
}}+{\frac { \left( 2\,zy-4\,x \right) { a_2}}{{x}^{2}+{y}^{2}+{z}^{2
}+4}}+{\frac { \left( -{x}^{2}-{y}^{2}+{z}^{2}+4 \right) { a_3}}{{x}^
{2}+{y}^{2}+{z}^{2}+4}}
\\
 \left( \frac{x^2}{4}+1 \right) { p_x}+
 \left(\frac{xy}{4}+ \frac{z}{2} \right) { p_y}+ \left( \frac{xz}{4}-\frac{y}{2}
 \right) { p_z}\\ \left( \frac{xy}{4}-\frac{z}{2} \right) {
 p_x}+ \left( \frac{y^2}{4}+1 \right) { p_y}+ \left( \frac{yz}{4}+\frac{x}{2}
 \right) { p_z}\\ \left( \frac{xz}{4}+\frac{y}{2} \right) {
 p_x}+ \left( \frac{yz}{4}-\frac{x}{2}
 \right) { p_y}+ \left( \frac{z^2}{4}+1 \right) { p_z}
\end {array} \right)
\end{array}\] We have run simulations using truncations of $S$ up to order 8 and
 the results are collected in Figure \ref{fig:heavytop-errors}. The
 initial condition was  $\left(\Gamma_0,\Pi_0)=((0.5, 0.5, -0.5)/ \| (0.5, 0.5, -0.5)
 \|,(0.1,-1,2)\right)$ and the parameters were 
$I = \operatorname{diag}(1,1.5,2)$,  $m = 0.1$, $g = 9.8$, $l = 0.2$ and $e = (0.1,0.2,0.5)/ \|(0.1,0.2,0.5)\|$.

\begin{figure}[H]
\begin{center}
\includegraphics[trim = 25mm 7mm 37mm 10mm, clip, scale=.52]{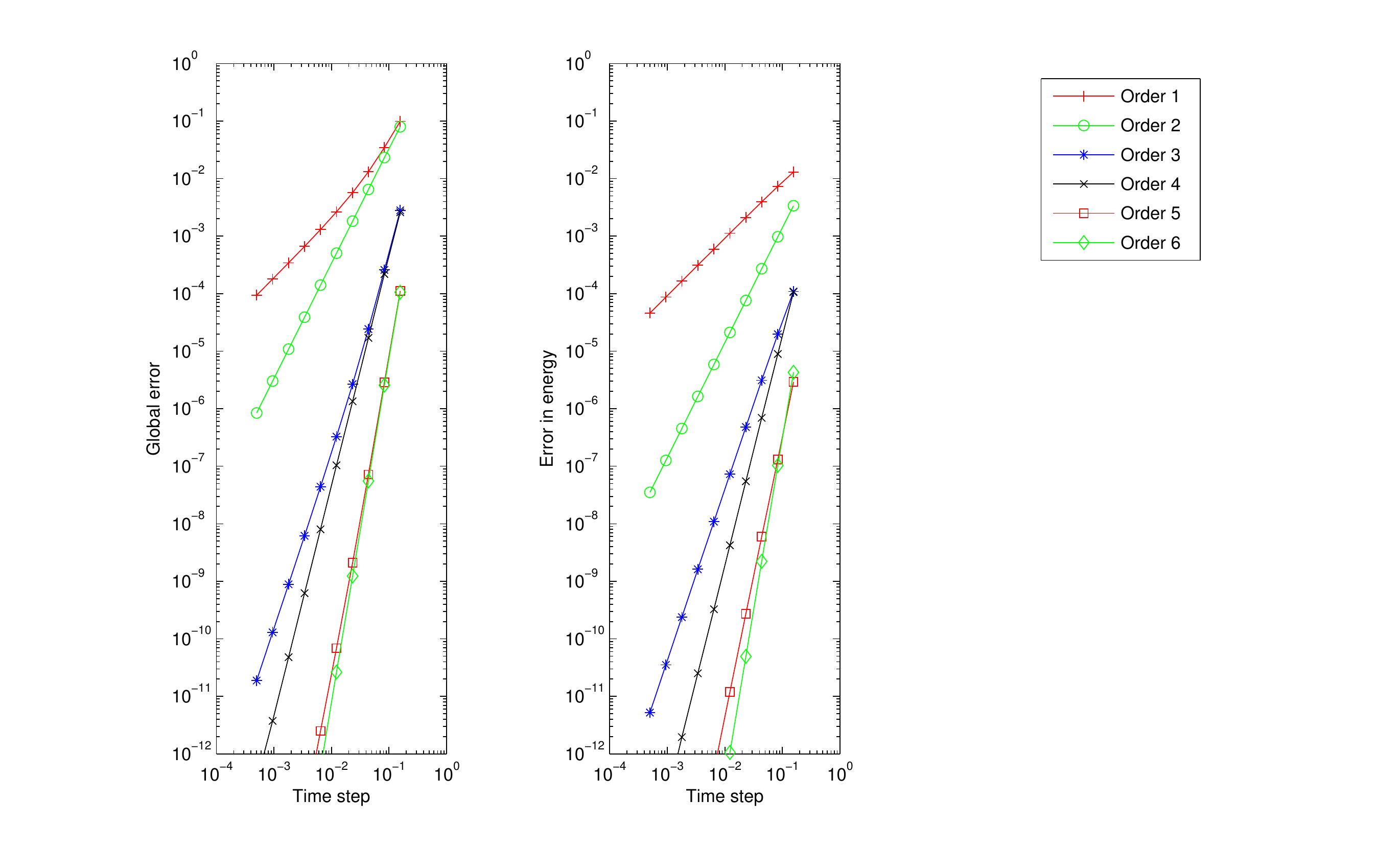}
\end{center}
\setlength\abovecaptionskip{-10pt}
\caption{Global errors for the heavy top simulations, HJ method.}
\label{fig:heavytop-errors}
\end{figure}

A similar behavior in the conservation of energy is observed. 

\begin{figure}[H]
\begin{center}
\includegraphics[trim = 0mm 0mm 0mm 0mm, clip, scale=.5]{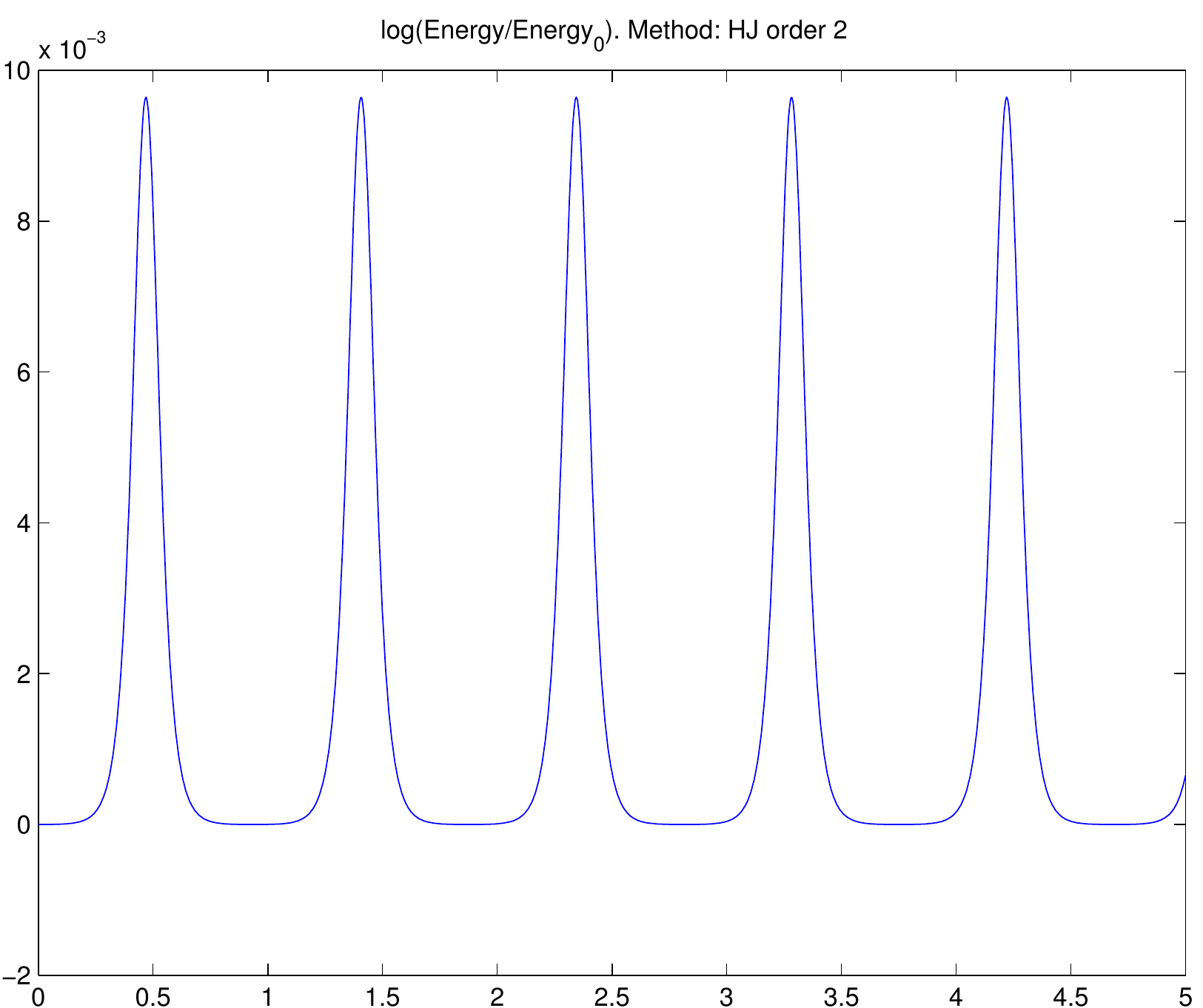}
\includegraphics[trim = 0mm 0mm 0mm 0mm, clip, scale=.5]{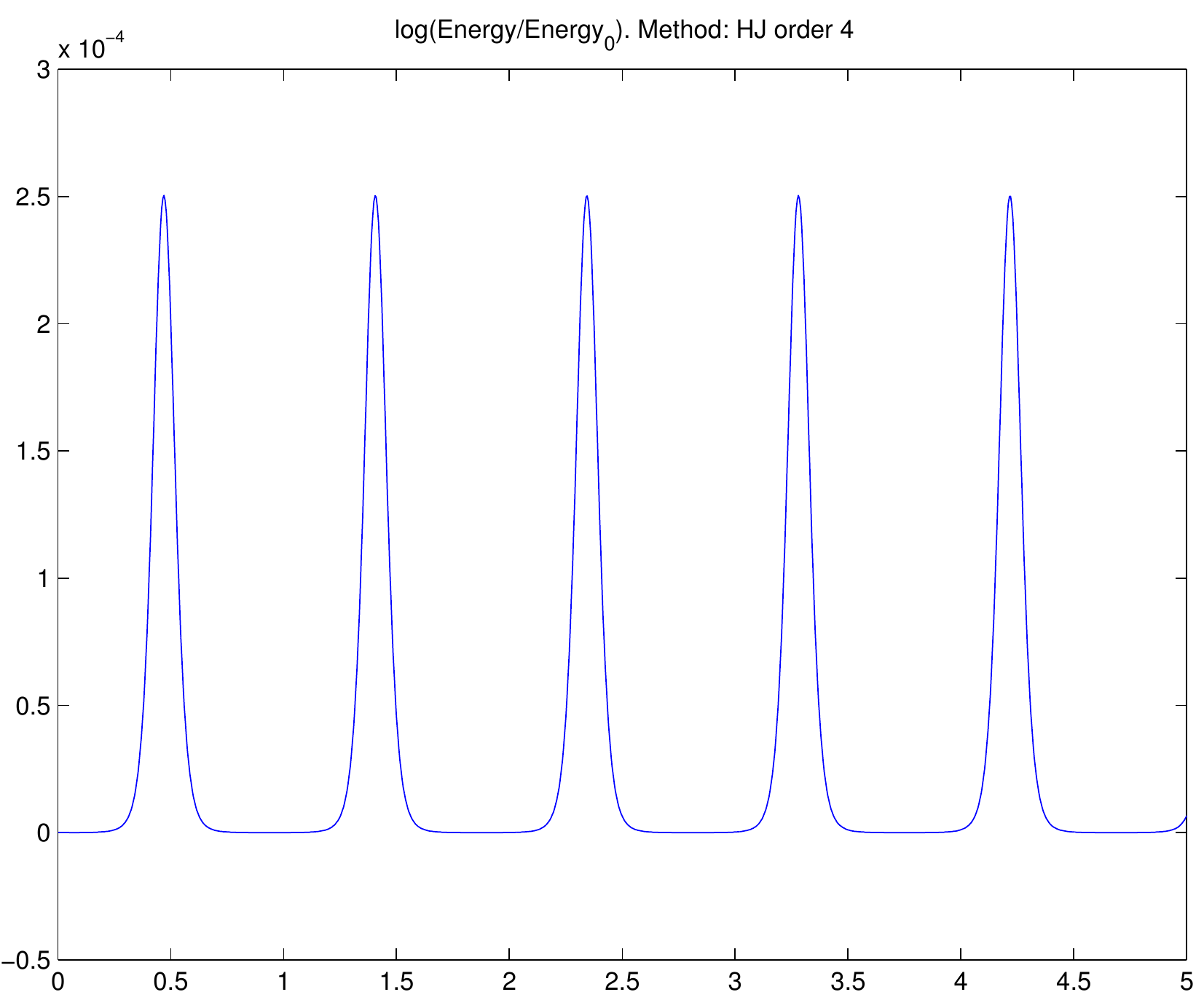}
\end{center}
%
\begin{center}
\includegraphics[trim = 0mm 0mm 0mm 0mm, clip, scale=.5]{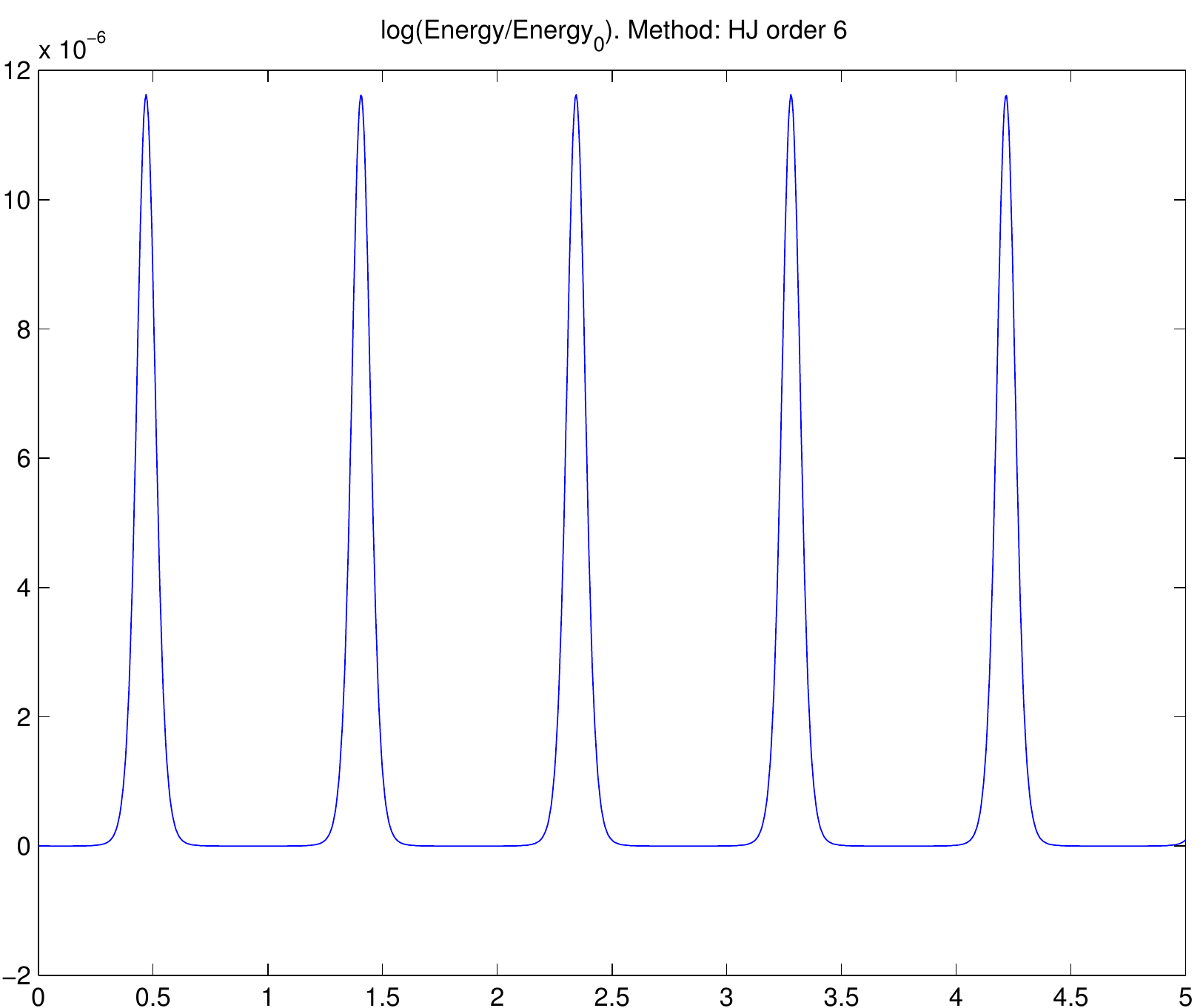}
\includegraphics[trim = 0mm 0mm 0mm 0mm, clip, scale=.5]{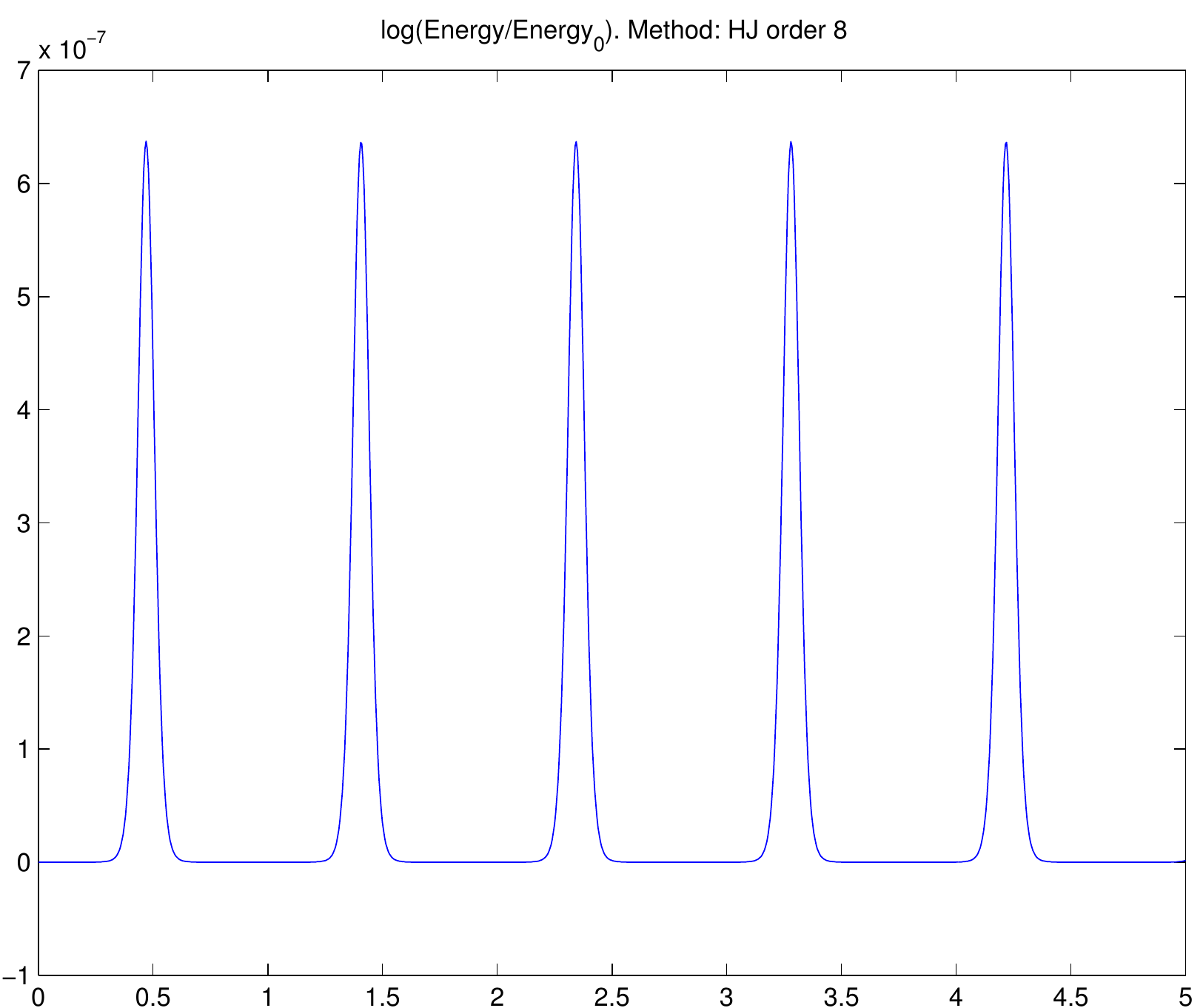}
\end{center}
\setlength\abovecaptionskip{5pt}
\caption{Energy conservation, HJ method.}
\label{fig:HJ-energy-heavytop}
\end{figure}

The comparison with other non-geometric methods shows again a drift
for the conservation of Casimirs and energy. See the figure below for a comparison of
the conservation of Casimirs for the Hamilton-Jacobi and ode45 methods for the first $500$ seconds.
\begin{figure}[H]
\begin{center}
\includegraphics[trim = 0mm 0mm 0mm 0mm, clip, scale=.52]{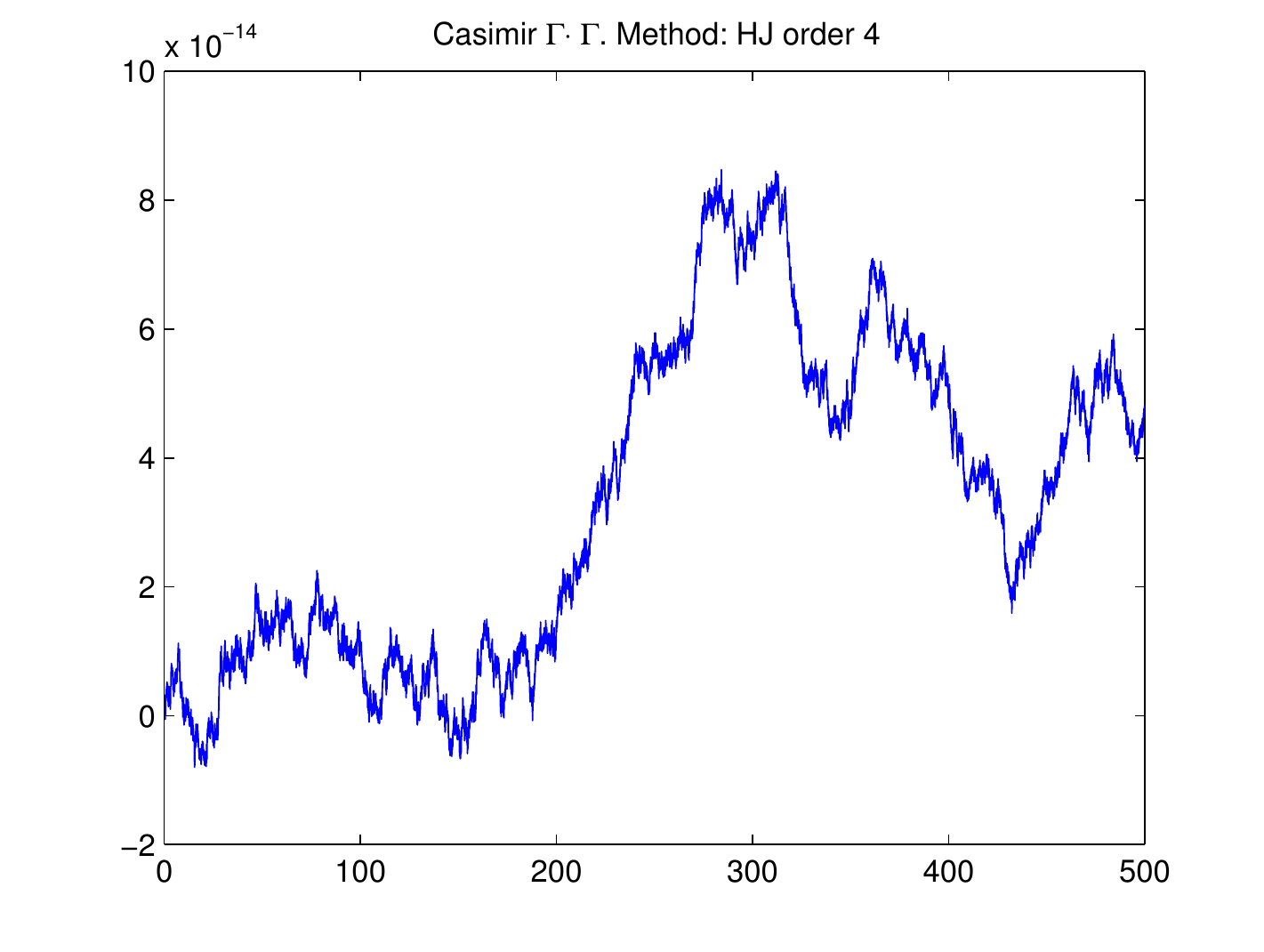}
\includegraphics[trim = 0mm 0mm 0mm 0mm, clip, scale=.52]{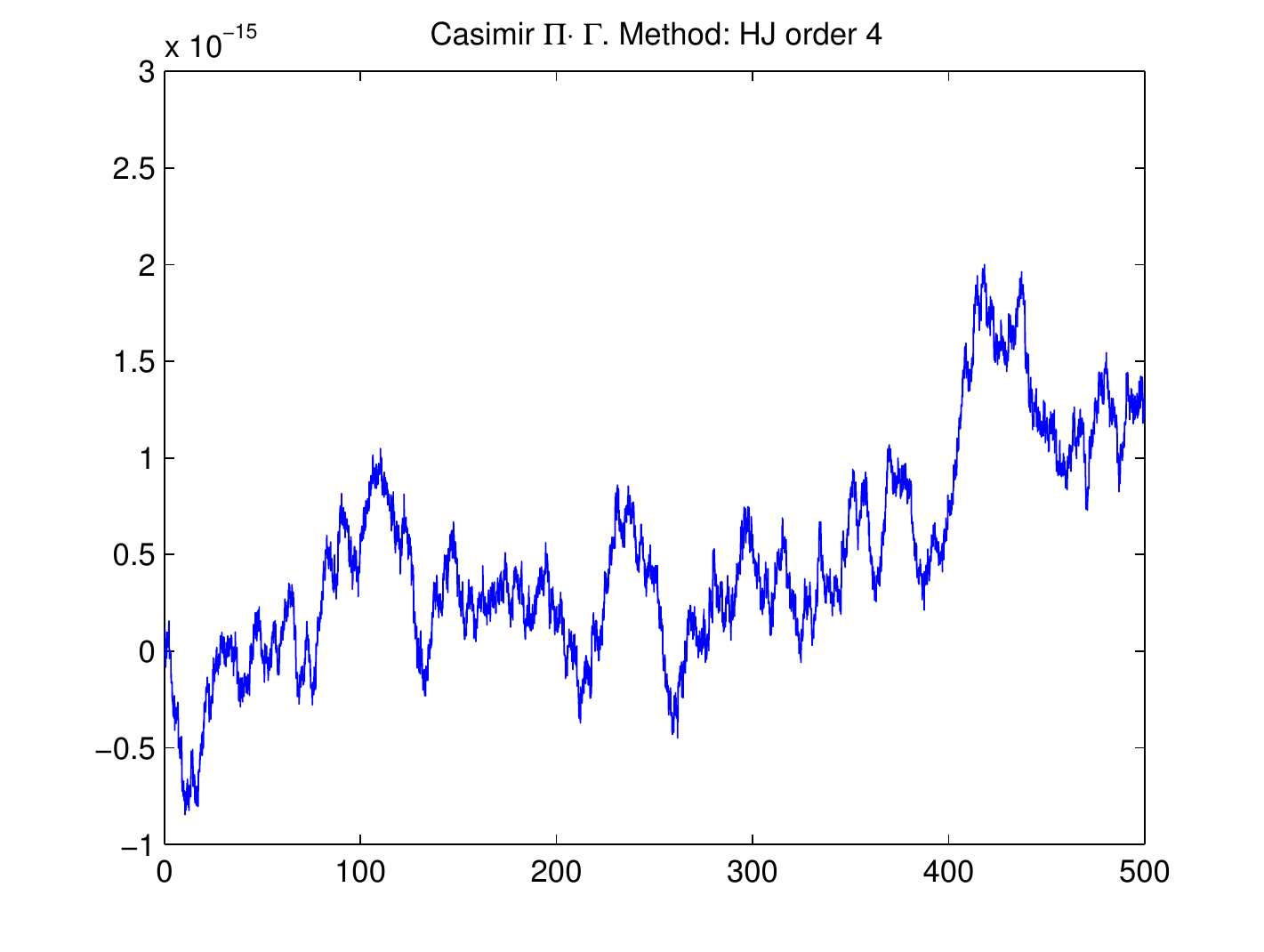}
\includegraphics[trim = 0mm 0mm 0mm 0mm, clip,scale=.52]{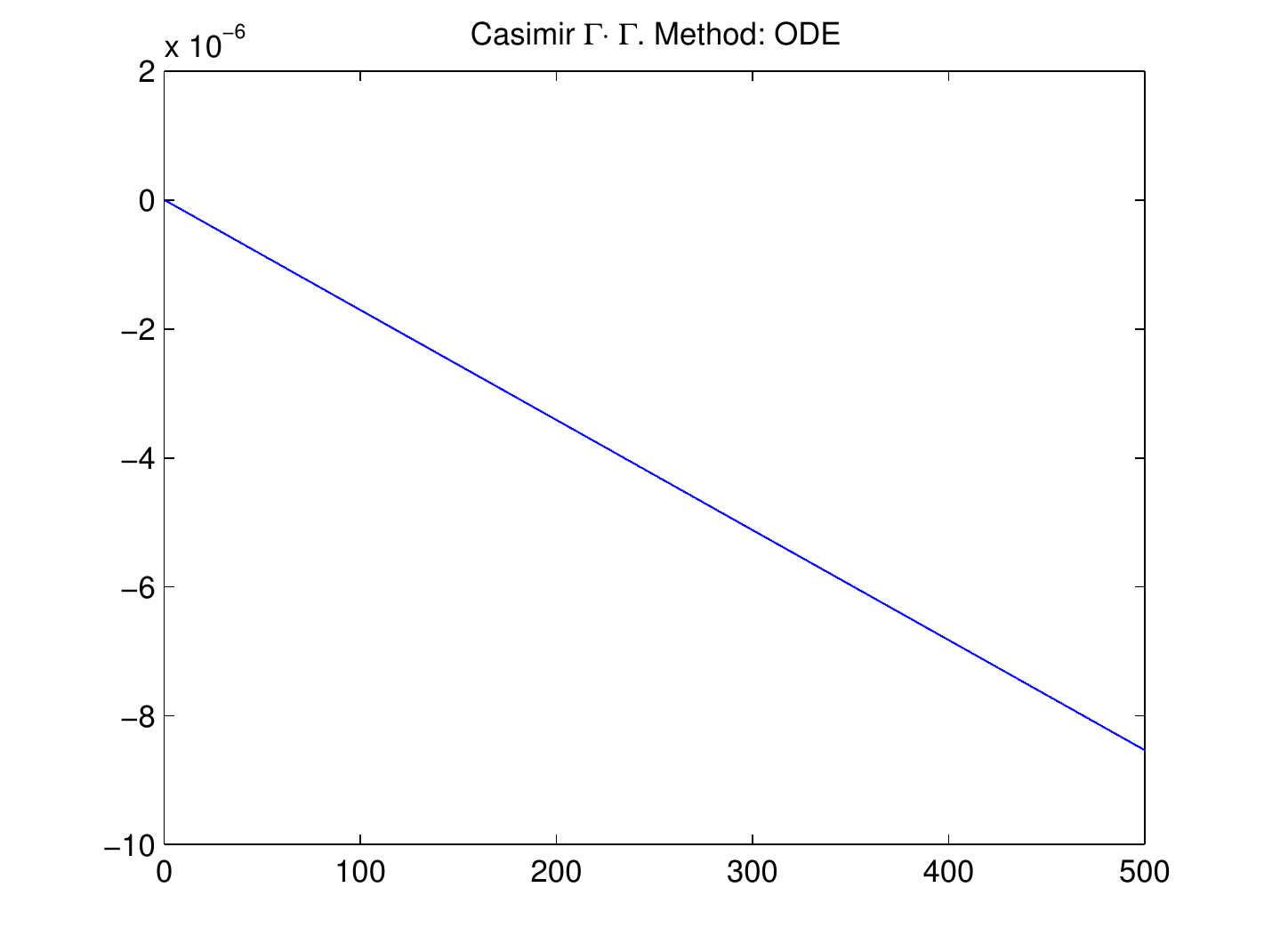}
\includegraphics[trim = 0mm 0mm 0mm 0mm, clip, scale=.52]{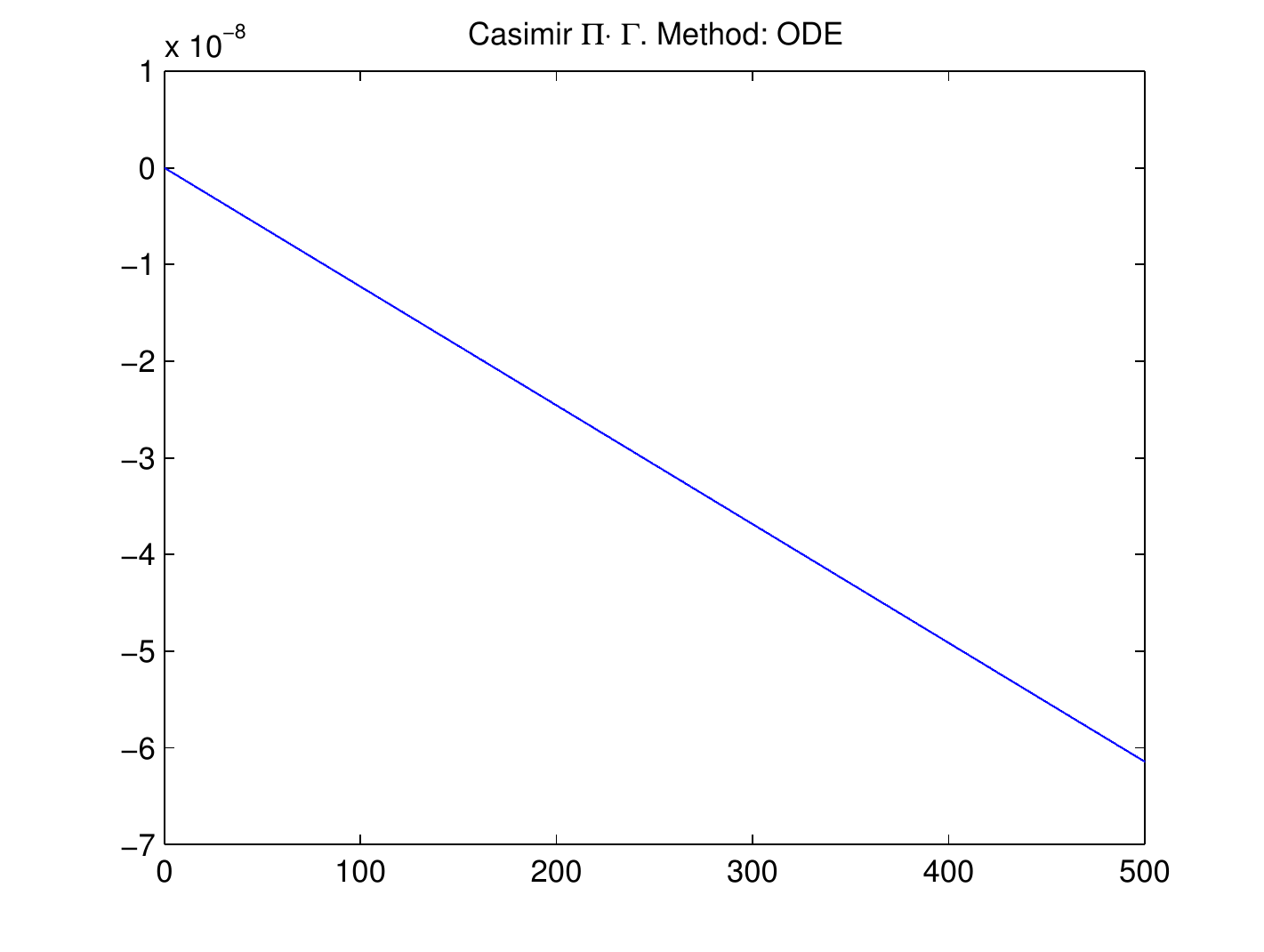}
\end{center}
\setlength\abovecaptionskip{-10pt}
\caption{Casimir conservation. Comparison of HJ and ode45 methods.}
\end{figure}
\subsubsection{Elroy's Beanie}

This system evolves on an Atiyah algebroid.
It is probably the simplest example
of a dynamical system with a non-Abelian Lie
group of symmetries. It consists of two planar rigid
bodies attached at their centers of mass, moving
freely in the plane.
The configuration space is $Q=SE(2)\times S^1$ with coordinates
$(x, y, \theta, \psi)$, where the first three 
coordinates  describe the position and
orientation of the center of mass of the first
body. The last one is the relative orientation
between both bodies. The dynamics is determined by the Euler-Lagrange equations corresponding to the Lagrangian
\[
L(x, y, \theta, \psi, \dot{x}, \dot{y},
\dot{\theta}, \dot{\psi})= \frac{1}{2}m
(\dot{x}^2+\dot{y}^2)+\frac{1}{2}
I_1\dot{\theta}^2+\frac{1}{2}I_2
(\dot{\theta}+\dot{\psi})^2-V(\psi),
\]
where $m$ denotes the mass of the system and
$I_1$ and $I_2$ are the inertias of the first and
the second body, respectively; additionally, we
also consider a potential function of the form
$V(\psi)$. This Lagrangian is $SE(2)$-invariant  where the
group action is given by: \[ \Phi_g(q)=\left(
\begin{array}{c
}
z_1+x\cos\alpha-y\sin \alpha\\
z_2+x\sin\alpha+y\cos \alpha\\
\alpha+\theta\\
\psi
\end{array}
\right)
\]
with $g=(z_1, z_2, \alpha)$ and $q=(x, y, \theta, \psi)$.

The Lie algebra ${\mathfrak se}(2)$ is
generated by matrices of the form
\[
\hat{\xi}=\left(
\begin{array}{ccc}
0&\xi_3&\xi_1\\
-\xi_3&0&\xi_2\\
0&0&0
\end{array}
\right)
\]
with basis
\[
e_1=\left(
\begin{array}{ccc}
0&0&1\\
0&0&0\\
0&0&0
\end{array}
\right) ,\qquad e_2=\left(
\begin{array}{ccc}
0&0&0\\
0&0&1\\
0&0&0
\end{array}
\right) \qquad \textrm{ and  } e_3=\left(
\begin{array}{ccc}
0&1&0\\
-1&0&0\\
0&0&0
\end{array}
\right).
\]
Then, any $\xi\in {\mathfrak{se}(2)}$ is expressed as $\hat{\xi}=\xi_1 e_1+\xi_2 e_2+\xi_3 e_3$ and the Lie algebra structure on ${\mathfrak se}(2)$
is determined by 
\[
[e_1,e_2]=0,\qquad [e_1,e_3]=e_2,\qquad [e_2,
e_3]=-e_1.
\]

The quotient space $M=Q/G=(SE(2)\times S^1)/SE(2)\simeq S^1$ is
naturally parameterized by the coordinate $\psi$. The projection $\pi:
Q\longrightarrow M$ is given in coordinates by $\pi(x, y, \theta,
\psi)=\psi$, $TQ/G\simeq TS^1\times {\mathfrak{se}(2)}$ and  the reduced Lagrangian reads
\[
\mathfrak{l}(\psi, \dot{\psi},
\xi_1, \xi_2, \xi_3)=\frac{1}{2}m(\xi_1^2+\xi_2^2)+\frac{1}{2}I_1
\xi_3^2+\frac{1}{2}I_2(\xi_3+\dot{\psi})^2-V(\psi).
\]
The reduced equations (Lagrange-Poincar\'e equations)  are
\begin{align*}
\dot{\xi}_1&=-\xi_2\xi_3,\\
\dot{\xi}_2&=\xi_1\xi_3,\\
(I_1+I_2)\dot{\xi}_3+I_2\ddot{\psi}&=0,\\
I_2(\dot{\xi}_3+\ddot{\psi})&=-\displaystyle{\frac{\partial V}{\partial \psi}}.
\end{align*}

The Hamiltonian framework is defined on $T^*Q/G\equiv T^*S^1\times {\mathfrak se}^*(2)$ with coordinates
$(\psi, p_{\psi}; p_1, p_2, p_3)$. The linear Poisson bracket is given by: 
\[
\{p_1, p_2\}=0,\quad \{p_1, p_3\}=-p_2, \quad \{p_2, p_3\}=p_1, \quad \{p_i, p_{\psi}\}=\{p_i, \psi\}=0, \quad \{\psi, p_{\psi}\}=1
\]
and the corresponding Hamiltonian function is 
\[
H(\psi, p_{\psi}; p_1, p_2, p_3)=\frac{1}{2m}\left ({p_1}^2+p_2^2\right)+\frac{1}{2I_2}p_{\psi}^2+\frac{1}{2I_1}(p_3-p_\psi)^2+V(\psi),
\]
Using the Poisson structure and the Hamiltonian function, one can easily compute the Hamilton's equations 
\begin{align*}
\dot{p}_1&=-\frac{p_3-p_{\psi}}{I_1}p_2,\\
\dot{p}_2&=\frac{p_3-p_{\psi}}{I_1}p_1,\\
\dot{p}_3&=0,\\
\dot{p}_{\psi}&=-\displaystyle{\frac{\partial V}{\partial \psi}},\\
\dot{\psi}&=\frac{p_{\psi}}{I_2}-\frac{p_3-p_{\psi}}{I_1}.
\end{align*}
Now, to apply our Hamilton-Jacobi formalism, observe that $(Q\times Q)/G\simeq S^1\times S^1\times SE(2)$ and therefore $T^*(Q\times Q)/G)\simeq T^*S^1\times T^*S^1\times T^*SE(2)$. We compute the source and target map of the symplectic groupoid
$\tilde{\alpha}:  T^*S^1\times T^*S^1\times T^*SE(2)\rightarrow T^*S^1\times {\mathfrak se}^*(2)$ and  
$\tilde{\beta}:  T^*S^1\times T^*S^1\times T^*SE(2)\rightarrow T^*S^1\times {\mathfrak se}^*(2)$
\[
\begin{array}{l}
\tilde{\alpha}(\psi_1, p_{\psi}^1, \psi_2, p_{\psi}^2, (z_1, z_2, \theta),  P_1, P_2, P_3)=
(\psi_1, -p_{\psi}^1; P_1, P_2, P_3-P_1z_2+P_2z_1),
\\ \noalign{\medskip}
\tilde{\beta}
(\psi_1, p_{\psi}^1, \psi_2, p_{\psi}^2, (z_1, z_2, \theta),  P_1, P_2, P_3)=
(\psi_2, p_{\psi}^2; P_1\cos\theta+P_2\sin \theta, -P_1\sin\theta+P_2\cos \theta , P_3).
\end{array}
\]
 Therefore, the Hamilton-Jacobi equation is
 \[
 \frac{\partial S}{\partial t}(t,  p^1_{\psi}, \psi_2, p_x, p_y, p_z)+\hat{H}( \frac{\partial S}{\partial p_{\psi}}, -p_{\psi}^1, \psi_2, \frac{\partial S}{\partial \psi_2}; 
 \frac{\partial S}{\partial P_1},  \frac{\partial S}{\partial P_2}, \frac{\partial S}{\partial P_3}, P_1, P_2, P_3)=0
 \]
 where $\hat{H}=H\circ \tilde{\beta}$
 \begin{eqnarray*}
 \hat{H}(\psi_1, p_{\psi}^1, \psi_2, p_{\psi}^2, z_1, z_2, \theta,  P_1, P_2, P_3)&=&
 \frac{1}{2m}\left (({P_1\cos\theta+P_2\sin \theta})^2
 +(-P_1\sin\theta+P_2\cos \theta)^2\right)\\
 &&+\frac{1}{2I_2}(p^2_{\psi})^2+\frac{1}{2I_1}(P_3-p^2_\psi)^2.
\end{eqnarray*}

The method was implemented numerically using $m=3$, $I_1=5$, $I_2=1$, with $T=10$ and several time step values. The potential energy is  $V(\psi)=\cos(2\psi)$, and the initial conditions are $(\psi,p_\psi,p_1,p_2,p_3)=(1,-0.1,0.1,0.2,1)$. Figure~\ref{fig:elroy-errors} shows the global errors.
\begin{figure}[H]
\begin{center}
\includegraphics[trim = 30mm 10mm 20mm 10mm, clip, scale=.52]{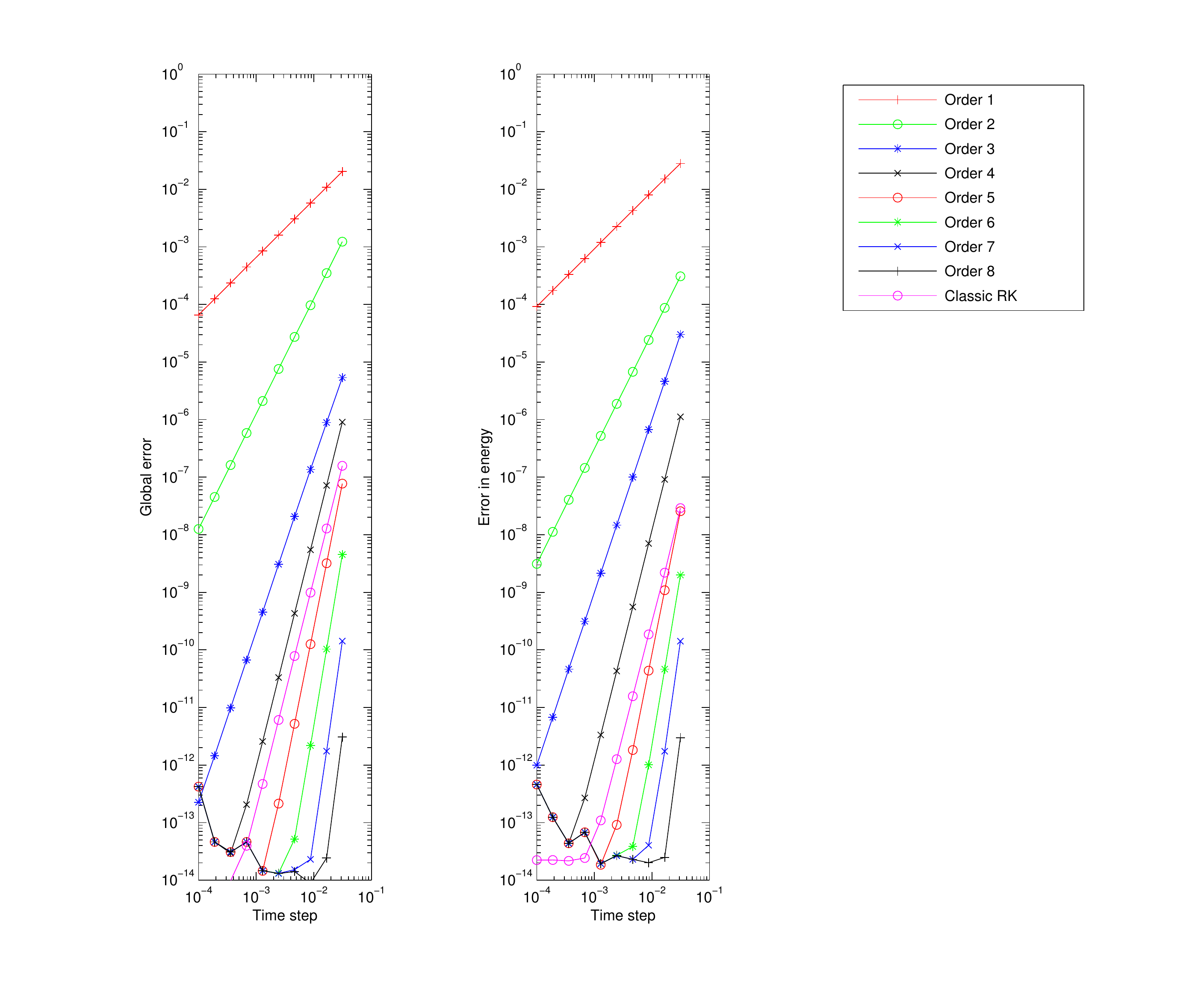}
\end{center}
\setlength\abovecaptionskip{-10pt}
\caption{Global errors for the Elroy's beanie simulations, HJ method.}
\label{fig:elroy-errors}
\end{figure}

Next, we show the energy error for different orders of our Hamilton-Jacobi method, for $N=1000$. 

\begin{figure}[H]
\begin{center}
\includegraphics[trim = 0mm 0mm 0mm 0mm, clip, scale=.5]{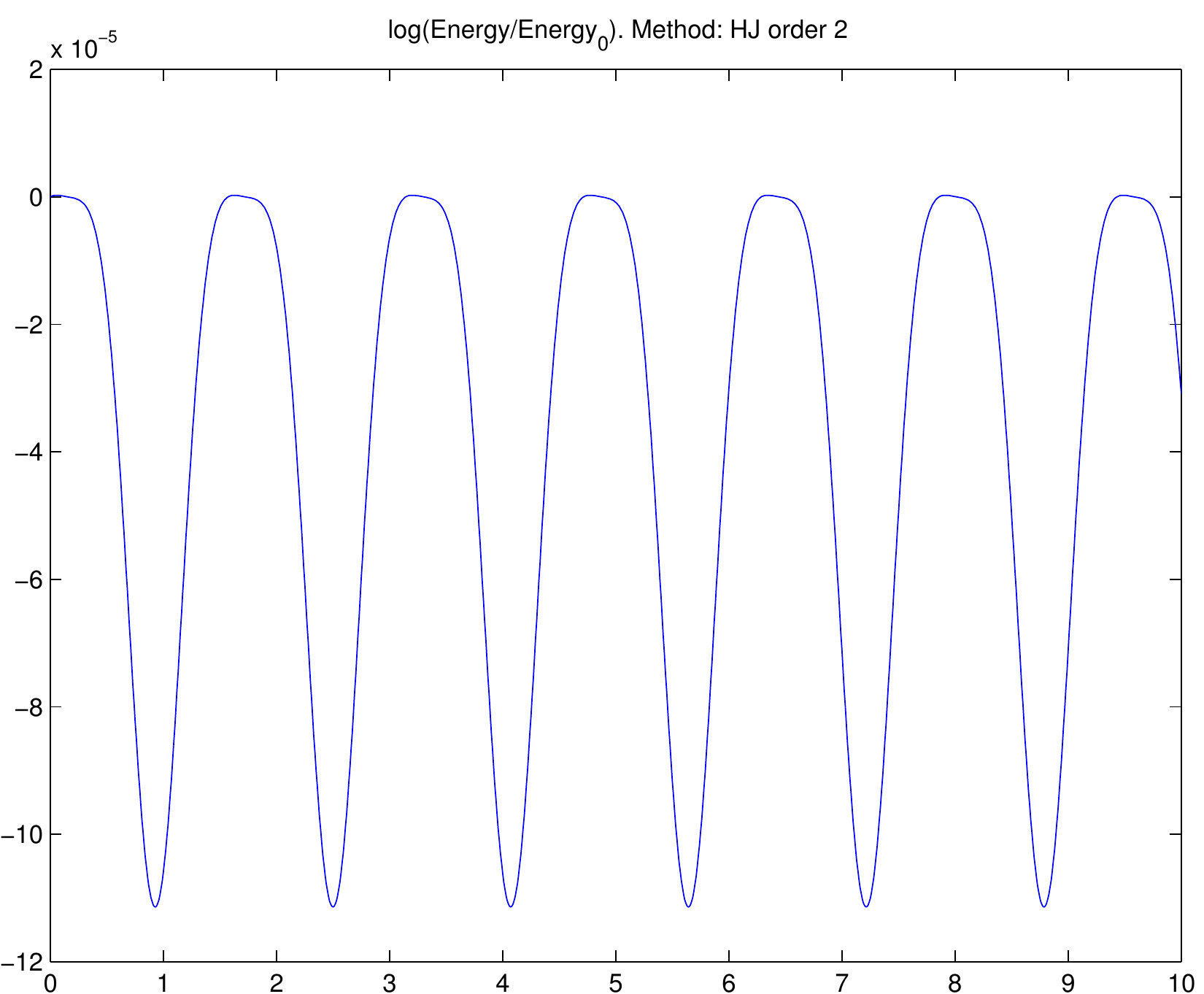}
\includegraphics[trim = 0mm 0mm 0mm 0mm, clip, scale=.5]{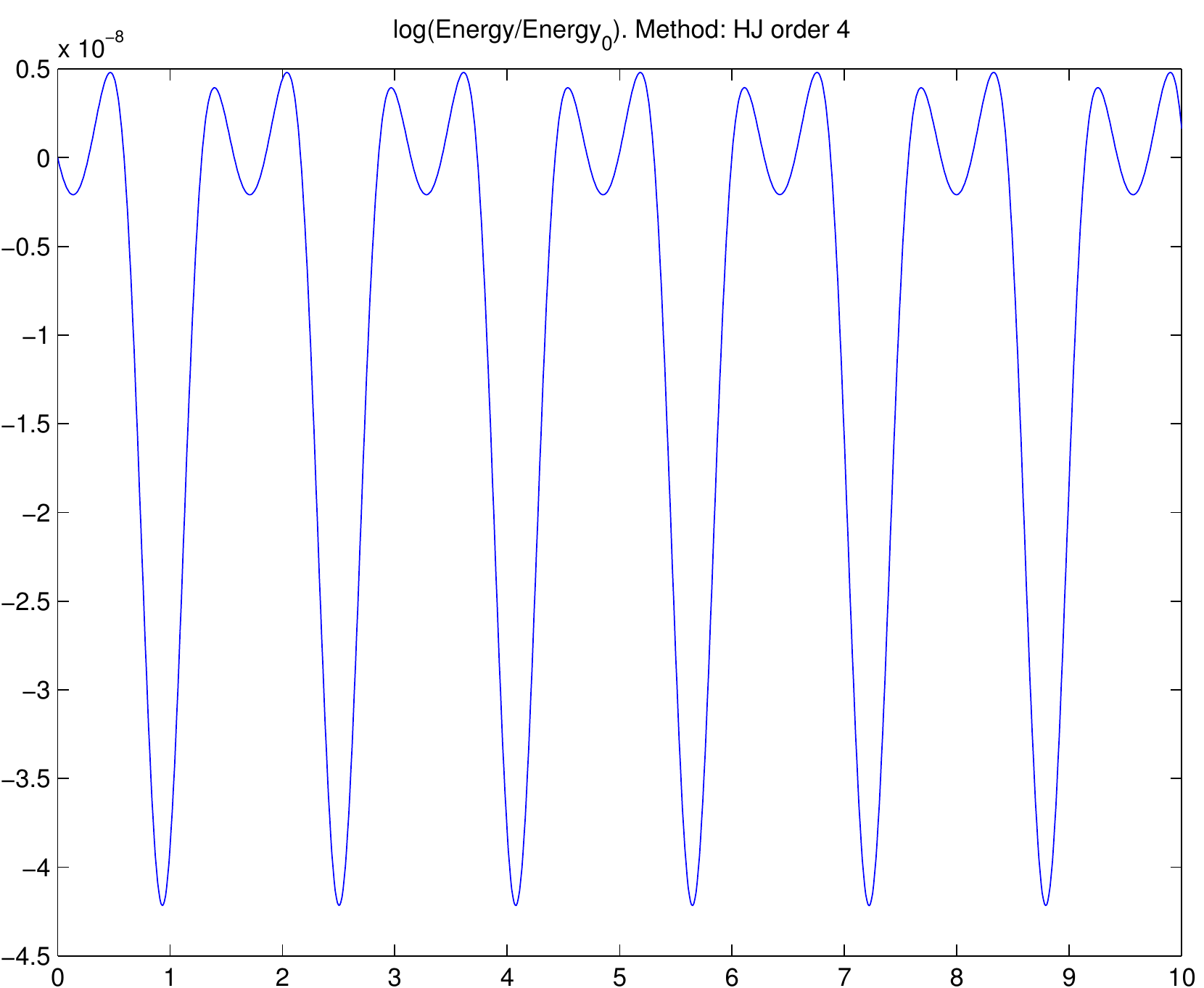}
\end{center}
%
\begin{center}
\includegraphics[trim = 0mm 0mm 0mm 0mm, clip, scale=.5]{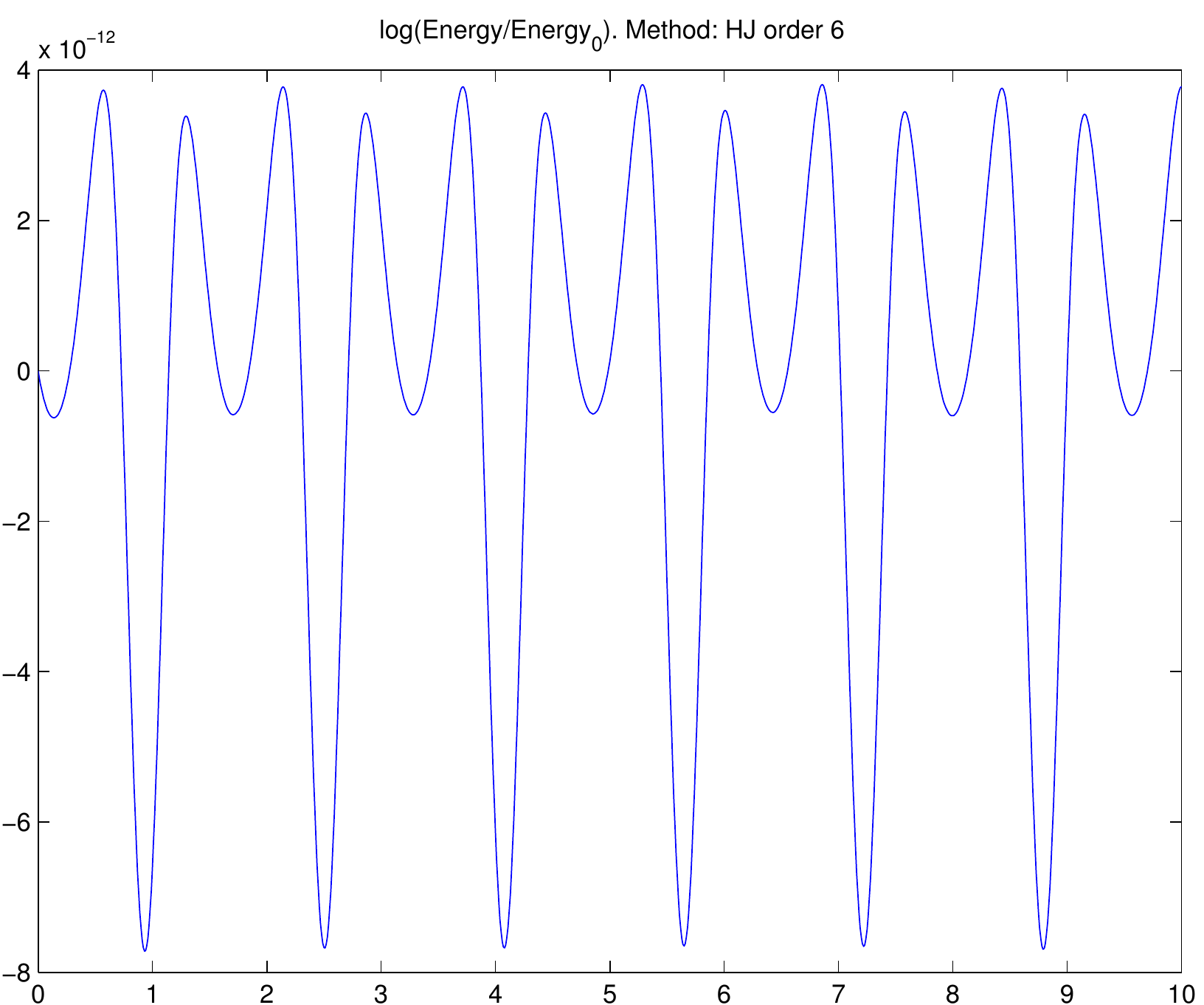}
\includegraphics[trim = 0mm 0mm 0mm 0mm, clip, scale=.5]{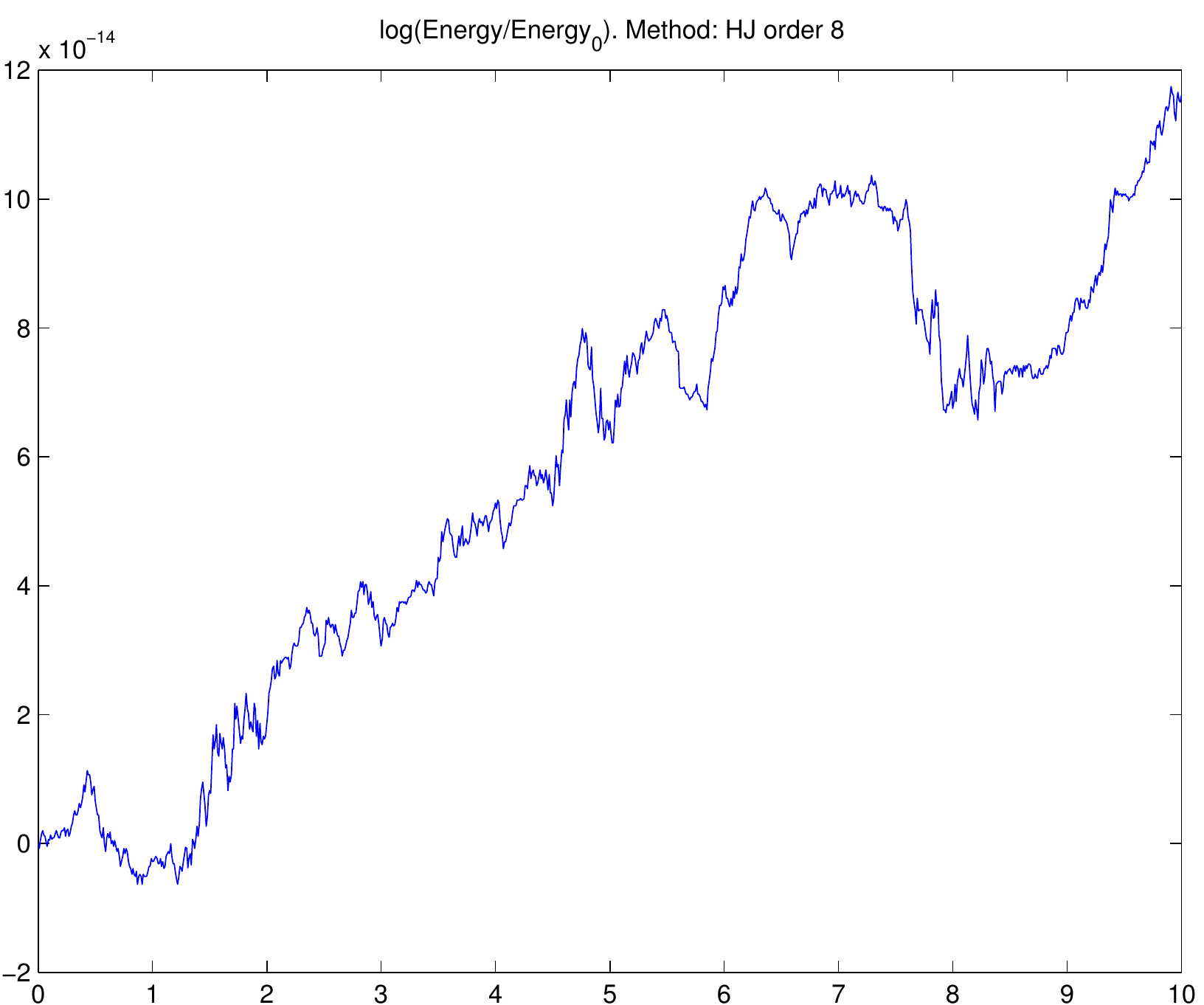}
\end{center}
\setlength\abovecaptionskip{5pt}
\caption{Energy conservation, HJ method.}
\label{fig:HJ-energy-eb}
\end{figure}

Similar behavior to the previous cases can be observed. While for
``short'' times the behavior is similar to Runge-Kutta methods (of the
same order) the exceptional energy and Casimir conservation should
make our methods very well suited to study long term situations, which
are of huge importance to study, for instance, invariant submanifolds
(see below for the Runge Kutta's energy drift). Some improvements of these methods, see next section, should provide means to understand the dynamics of more complicated systems.

\begin{figure}[H]
\begin{center}
\includegraphics[trim = 0mm 0mm 0mm 0mm, clip, scale=.5]{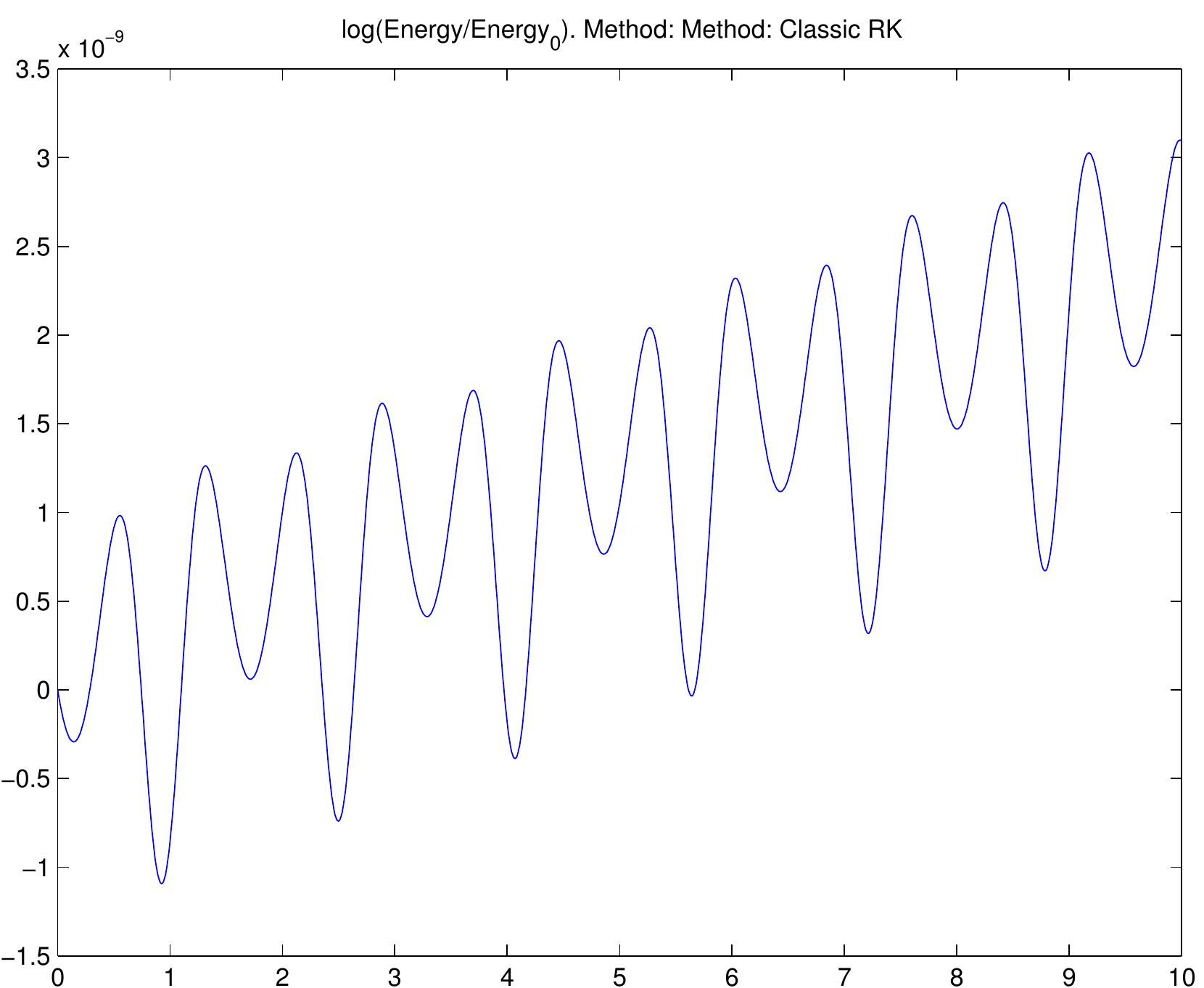}
\end{center}
\setlength\abovecaptionskip{5pt}
\caption{Energy conservation, Runge Kutta method.}
\label{fig:RKL-energy-eb}
\end{figure}


\section{Conclusions and Remarks}\label{conclusions}

In this paper we developed a Hamilton--Jacobi theory for certain class
of linear Poisson structures which happen to be general enough to
include the systems important for classical mechanics. As a practical
application we present an improvement of some Poisson numerical
methods, previously introduced by Channell, Ge, Marsden and Scovel
among others. There are still several issues to be exploited, we list
some of them below.

\begin{enumerate}

\item {\it Simplification of the Hamilton--Jacobi equations using Casimirs:}\label{casimirs} In this paper we only treated some numerical methods as examples,  but it is our belief
  that applications of our results are very promising for analytic integration of Hamilton's
  equations. This should not be surprising, as the classical
  Hamilton--Jacobi theory has proved to be one of the most powerful
  tools for analytic integration, see Arnold's quote in Section \ref{introduction}. In this regard, Casimirs should play
  an important role, based on the following observation. If $H:A^*G\rightarrow
  \mathbb{R}$ is the Hamiltonian under consideration, and
  $C:A^*G\rightarrow \mathbb{R}$ is a Casimir, then
  $X_{H}=X_{H+\lambda C}$, where $\lambda$ is a constant. Nonetheless,
  the Hamilton--Jacobi equations for $H$ and for $H+\lambda C$ could be
  very different. As a simple but illustrative application of
  this fact, we present here an application to the computation of the
  rigid body when two moments of inertia are equal. It is remarkable
  that in \cite{McLachlan} the author uses a similar procedure to
  obtain explicit Lie--Poisson integrators. For more information about
  that relation see point \ref{Ruth} below.
	
	\begin{example}
	Let $G$ be the Lie group $SO(3)$ and $0<\varphi<2\pi, \ 0<\theta<\pi, \ 0<\psi<2\pi$ be the Euler angles, defined following \cite{MarsdenRatiu}. They form a coordinate chart, although not including the identity. The source and target of the cotangent groupoid read in the associated cotangent coordinates
	\[
	\begin{array}{l}
	\tilde\alpha(\varphi,\psi,\theta,p_\varphi,p_\psi,p_\theta)\rightarrow \left(\begin{array}{c}\Pi_1=(((p_\psi-\cos(\theta)p_\varphi)\sin(\varphi)+\cos(\varphi)\sin(\theta)p_\theta )/\sin(\theta)\\\Pi_2=((\cos(\theta)p_\varphi-p_\psi)\cos(\varphi)+\sin(\theta)\sin(\varphi)p_\theta )/\sin(\theta)\\\Pi_3=p_\varphi\end{array}\right),
	
	\\ \noalign{\bigskip}
	
\tilde\beta(\varphi,\psi,\theta,p_\varphi,p_\psi,p_\theta)\rightarrow \left(\begin{array}{c}\Pi_1=((p_\varphi-p_\psi \cos(\theta))\sin(\psi)+p_\theta \sin(\theta)\cos(\psi) )/\sin(\theta) \\\Pi_2=((p_\varphi-p_\psi\cos(\theta))\cos(\psi)-p_\theta \sin(\theta) \sin(\psi) )/\sin(\theta)\\\Pi_3=p_\psi\end{array}\right).
	
	\end{array}
	\]
	The Lagrangian submanifold
	\[
	\lag=\{(\pi,\pi/2,\pi/2,p_\varphi,p_\psi,p_\theta
)\textrm{ such that } p_\varphi,\ p_\psi, \ p_\theta \in\mathbb{R}\}
	\]
	generates the trasformation described by
        \[
\begin{array}{c}
\tilde\alpha(\pi,\pi/2,\pi/2,p_\varphi,p_\psi,p_\theta)=(-p_\theta,p_\psi,p_\varphi),
    
\\ \noalign{\medskip}
        \tilde\beta(\pi,\pi/2,\pi/2,p_\varphi,p_\psi,p_\theta)=(p_\varphi,-p_\theta,p_\psi),
\end{array}\]
        that is,
\[
\hat{\lag}(\Pi_1,\Pi_2,\Pi_3)=(\Pi_3,\Pi_1,\Pi_2).
\]
This transformation is not the identity, due to problems with the
parametrization of the identity using Euler's angles, but it is a very
easy transformation and invertible in a trivial way. Thus, it can be
used instead of the identity transfomation to achieve analogous results.
Following our previous construction we can give the generating
function \[S(p_\varphi,p_\psi ,p_\theta)=\pi p_\varphi+(1/2)\pi
p_\psi+(1/2)\pi p_\theta.\] The Hamiltonian of the rigid body dynamics,
when two moments of inertia are equal, and the Casimir are given by
	\[
	\begin{array}{l}
	H(\Pi_1,\Pi_2,\Pi_3)=\displaystyle\frac{1}{2}\left(\frac{\Pi_1^2}{I}+\frac{\Pi_2^2}{I}+\frac{\Pi_3^2}{I'}\right),\\ \noalign{\bigskip}
	C(\Pi_1,\Pi_2,\Pi_3)=\Pi_1^2+\Pi_2^2+\Pi_3^2.
	\end{array}
	\]

The dynamics of the Hamiltonians $H$ and
$H'=H+\displaystyle\frac{1}{2I}C$ are equal, since $C$ is a Casimir \[X_{H'}(\Pi_1,\Pi_2,\Pi_3)=\left(\displaystyle\frac{I-I'}{I I'}\Pi_2\Pi_3\right)\partial \Pi_1+\left(\displaystyle\frac{I'-I}{I I'}\Pi_1\Pi_3\right)\partial \Pi_2+0\partial \Pi_3.\]
Nonetheless, the Hamiltonians $H$ and $H'$ are very different as functions,
that implies that their $\tilde\beta$-pullback are quite disparate as well
\[
\begin{array}{rl}
H\circ \tilde{\beta}&=\displaystyle\frac{1}{2}\left(\frac{[(p_\varphi-p_\psi \cos(\theta))\sin(\psi)+p_\theta \sin(\theta) \cos(\psi)]^2}{I \sin^2(\theta)}\right.
\\ \noalign{\medskip}
&+
\displaystyle\frac{[(p_\varphi-p_\psi \cos(\theta))\cos(\psi)+p_\theta \sin(\theta) \sin(\psi)]^2}{I \sin^2(\theta)}

+ \displaystyle\left.\frac{p_\psi^2}{I'}
\right)
\end{array}
\]
while
\[
\begin{array}{rl}
H'\circ \tilde{\beta}&= \left(\displaystyle\frac{I-I'}{I I'}\right){p_\psi^2}.
\end{array}
\]
This fact should have implications in our Hamilton--Jacobi theory, and this
is the case as we are going to show. A solution of the Hamilton--Jacobi
equation using the Hamiltonian $H'$ 
\[
\begin{array}{c}\displaystyle\frac{\partial S}{\partial
  t}(p_\varphi,p_\psi,p_\theta)+ (H'\circ \tilde{\beta})(\frac{\partial
  S}{\partial p_\varphi},\frac{\partial
  S}{\partial p_\psi},\frac{\partial
  S}{\partial p_\theta},p_\varphi,p_\psi,p_\theta)=0 \\
    \noalign{\medskip}

\Leftrightarrow \displaystyle\frac{\partial S}{\partial
  t}(p_\varphi,p_\psi,p_\theta)+  \left(\displaystyle\frac{I-I'}{I I'}\right){p_\psi^2}=0
\end{array}
\]
is given by direct inspection\[S
(t,p_\varphi,p_\psi,p_\theta)=- \left(\displaystyle\frac{I-I'}{I I'}\right){p_\psi^2}t+S_{0}(p_\varphi,p_\psi,p_\theta),\]where
$S_0$ is the initial condition, which is fixed to get the identity, so
finally
\[
S
(t,p_\varphi,p_\psi ,p_\theta)=-\left(\displaystyle\frac{I-I'}{I I'}\right){p_\psi^2}t+\pi p_\varphi+(1/2)\pi p_\psi +(1/2)\pi p_\theta.
\]
For the sake of simplicity we chose to integrate the easiest term of
the Hamiltonian, but the other terms are solvable in a similar
fashion. On the other hand, the Hamilton--Jacobi equation for the
Hamiltonian $H$ does not seem to be solvable in an obvious way. Although very elementary, this example shows that Casimirs can
be used to simplify the Hamilton--Jacobi equation. A systematic
development of these ideas should provide means to integrate the
Hamilton--Jacobi equation. These ideas are exclusive of the Poisson
setting, as the Casimirs are trivial for symplectic structures.
\end{example}

\item\label{improve} {\it Improvement of the numerical methods:} The numerical methods that
  we present here are very general and conserve the geometry very
  well, but they are not very efficient from the computational
  viewpoint. Our aim in this paper
  was to show how to use our results, rather than giving optimized
  numerical methods. Nonetheless, there is still a lot of room to improve
  these methods, as it has already been shown in
  \cite{BenzelGeScovel,ChannellScovel,Kang,MakazagaMurua,McLachlanScoveII}
  in the symplectic and Lie--Poisson case, where they give recipes to construct
  higher order approximations and reduce the computational cost. Those
  improvements can be applied in a straightforward fashion to our setting.

\item\label{Ruth} {\it Ruth type integrators:} Our methods are, generally,
  implicit. But in particular examples they can be
  made explicit sometimes. That is very important in order to develop
  very efficient methods, a nice exposition of these topics is given in \cite{McLachlan} . This is a classical fact, as it happens
  already in the symplectic case, classical references are
  \cite{ForestRuth,Ruth} where the authors develop fourth order
  explicit methods for mechanical Hamiltonians, that is, of the form
  kinetic plus potential energy. Our approach seems to be useful in
  that regard. For instance, consider the example introduced in
  \ref{casimirs} and assume that the moments of inertia of the rigid
  body are different, so the Hamiltonian reads,
\[
H(\Pi_1,\Pi_2,\Pi_3)=\displaystyle\frac{1}{2}\left(\frac{\Pi_1^2}{I^1}+\frac{\Pi_2^2}{I^2}+\frac{\Pi_3^2}{I^3}\right).
\]
We can use the Casimir to simplify one of the terms without changing
the dynamics, 
\[
H'=H-\displaystyle\frac{1}{2I^1}C=\frac{1}{2}\left(
  \frac{I^1-I^2}{I^1I^2}\Pi_2^2+\frac{I^1-I^3}{I^1I^3}\Pi_3^2\right)=C^1\Pi^2_2+C^2\Pi^2_3,
\]
where $C^1=\displaystyle\frac{I^1-I^2}{2I^1I^2}$ and
$C^2=\displaystyle\frac{I^1-I^3}{2I^1I^3}$. And using the previous expression for
$\tilde\beta$ we get
\[
H'\circ\tilde\beta=C^1\left(\displaystyle\frac{[(p_\varphi-p_\psi \cos(\theta))\cos(\psi)+p_\theta \sin(\theta) \sin(\psi)]^2}{ \sin^2(\theta)}
\right)+C^2p_{\psi}^2
\]
and the Hamilton--Jacobi equation becomes
\[
\displaystyle\frac{\partial S}{\partial
  t}+C^1\left(\displaystyle\frac{[(p_\varphi-p_\psi \cos(\frac{\partial S}{\partial
  p_\theta}))\cos(\frac{\partial S}{\partial
  p_\psi})+p_\theta \sin(\frac{\partial S}{\partial
  p_\theta}) \sin(\frac{\partial S}{\partial
  p_\psi})]^2}{ \sin^2(\frac{\partial S}{\partial
  p_\theta})}
\right)+C^2 (\frac{\partial S}{\partial
  p_\psi})^2=0.
\]
This equation is not easily integrable, but clearly $H=H^1+H^2$ where 
\[
\begin{array}{rl}
H^1&= C^1\left(\displaystyle\frac{[(p_\varphi-p_\psi \cos(\theta))\cos(\psi)+p_\theta \sin(\theta) \sin(\psi)]^2}{ \sin^2(\theta)}
\right)\\ \noalign{\bigskip}
H^2&=C^2p_{\psi}^2
\end{array}
\]
and the Hamilton--Jacobi equation for each of the Hamiltonians is
easily solvable by
\[
\begin{array}{rl}
S^1 (t,p_\varphi,p_\psi,p_\theta)&=-C^1{p_\theta^2}t+\pi p_\varphi+(1/2)\pi p_\psi+ (1/2)\pi p_\theta,
\\ \noalign{\medskip}
\textrm{and}
\\ \noalign{\medskip}
S^2 (t,p_\varphi,p_\psi,p_\theta)&=-C^2p_\psi^2t+\pi p_\varphi+(1/2)\pi p_\psi+(1/2)\pi p_\theta,
\end{array}
\]
respectively. Each of the solutions gives an explicit transformation which is a
rotation. The composition of the
explicit transformations induced by $S^1$ and $S^2$ gives the method
introduced by R.I. McLachlan in \cite{McLachlan}, but we obtained it
from the Hamilton--Jacobi theory. The development of  a rigorous Ruth
type integration techniques will provide very efficient numerical
methods which conserve the geometry. We want to stress here that all
the theoretical tools used in \cite{ForestRuth,Ruth}: the change of
the Hamiltonian under a canonical transformation, the different types
of generating functions,... were already introduced in this work. These
results applied to the linear Poisson setting were not present in the literature until now, as far as we know. The importance of the groupoid setting was already noticed by C. Scovel and A.D. Weinstein. We recall the quote from \cite{ScovelWeinstein}, {\it ``The groupoid aspect of the theory also provides natural Poisson
  maps, useful in the application of Ruth type
integration techniques, which do not seem easily derivable from the general theory of Poisson
reduction''.}

\item {\it Reduction of the Hamilton--Jacobi equation:} Some of the
  authors of this paper developed a reduction and reconstruction
  procedure for the Hamilton--Jacobi equation, see \cite{HamiltonJacobiSymmetries}, based on the following
  lemma (see \cite{HamiltonJacobiSymmetries,GaGuMaMe}).
\begin{lemma} Let $(M,\Omega)$ a symplectic manifold, $G$ a connected
  Lie group and $\Phi:G\times M\rightarrow M$ an action by
  symplectomorphisms. Assume that this action has an equivariant momentum mapping,
  say $J:M\rightarrow \mathfrak{g}^*$. Given a (connected) Lagrangian
  submanifold $\lag\subset M$, the following conditions are equivalent:
\begin{enumerate}
\item $\lag$ is a $G$-invariant Lagrangian submanifold,
\item $J_{|\lag}=\mu$, {\it i.e.}, $J$ is constant on $\lag$.
\end{enumerate}
\end{lemma}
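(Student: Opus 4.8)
The plan is to pass from the global $G$-action to the infinitesimal action of $\mathfrak g$, to read off the geometry from the Lagrangian condition, and then to use the two connectedness hypotheses to return to global statements. Concretely, for $\xi\in\mathfrak g$ write $J^\xi\in C^\infty(M)$ for $J^\xi(p)=\langle J(p),\xi\rangle$ and let $\xi_M$ denote the associated infinitesimal generator. The defining property of the momentum mapping (for a Hamiltonian action) is $i_{\xi_M}\Omega=dJ^\xi$, so that for any $p\in\lag$ and $v\in T_p\lag$ one has the single identity on which everything rests, $dJ^\xi(v)=\Omega(\xi_M(p),v)$.

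I would organize the argument around three equivalent conditions: (a) $\lag$ is $G$-invariant; (b) for every $\xi\in\mathfrak g$ the generator $\xi_M$ is tangent to $\lag$ at each point of $\lag$; (c) $J$ is constant on $\lag$. The step (b)$\Leftrightarrow$(c) is pure linear algebra together with connectedness of $\lag$: by the identity above, $dJ^\xi$ vanishes on $T_p\lag$ exactly when $\xi_M(p)\in(T_p\lag)^\perp$, and since $\lag$ is Lagrangian we have $(T_p\lag)^\perp=T_p\lag$, so this is precisely the tangency $\xi_M(p)\in T_p\lag$. Thus (b) holds iff $dJ^\xi|_{T\lag}=0$ for all $\xi$; as $\lag$ is connected this is equivalent to each $J^\xi$, and hence $J$ itself, being constant on $\lag$, i.e.\ to (c).

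For (a)$\Leftrightarrow$(b) I would use connectedness of $G$. If $\lag$ is $G$-invariant then each one-parameter subgroup preserves $\lag$, and differentiating $t\mapsto\Phi_{\exp(t\xi)}(p)$ at $t=0$ shows $\xi_M(p)\in T_p\lag$; conversely, if every $\xi_M$ is tangent to $\lag$ then its flow $\Phi_{\exp(t\xi)}$ preserves $\lag$, and since the elements $\exp(t\xi)$ generate the connected group $G$, it follows that $\Phi_g(\lag)=\lag$ for all $g\in G$. Chaining the two equivalences yields (a)$\Leftrightarrow$(c), which is the assertion.

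I expect the only genuinely delicate point to be the implication (b)$\Rightarrow$(a): turning tangency of the generators into honest invariance under all of $G$ relies on the connectedness of $G$ and on the standard fact that a vector field tangent to a submanifold has a flow preserving that submanifold, which requires a little care when $\lag$ is merely immersed or when the relevant flows fail to be complete (here the $\xi_M$ are complete because the $G$-action is globally defined, which disposes of the completeness issue). I would also remark that equivariance of $J$ is not actually needed for this equivalence---only the Hamiltonian relation $i_{\xi_M}\Omega=dJ^\xi$---although equivariance is exactly what further forces the common value $\mu$ to be a fixed point of the coadjoint action.
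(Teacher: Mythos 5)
A preliminary remark: the paper never proves this lemma itself---it is stated with a pointer to \cite{HamiltonJacobiSymmetries,GaGuMaMe}---so there is no internal proof to measure your argument against; it can only be judged on its own terms. Your scheme is the standard one, and two of its three steps are sound: (a)$\Rightarrow$(b) by differentiating $t\mapsto\Phi_{\exp(t\xi)}(p)$ inside $\lag$, and (b)$\Leftrightarrow$(c) from the identity $dJ^\xi(v)=\Omega(\xi_M(p),v)$ combined with the Lagrangian condition $(T_p\lag)^{\perp}=T_p\lag$ and connectedness of $\lag$. Your closing observation that only the Hamiltonian identity $i_{\xi_M}\Omega=dJ^\xi$ is used, while equivariance serves only to force the common value $\mu$ to be a fixed point of the coadjoint action, is also correct.

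The genuine gap is in (b)$\Rightarrow$(a), exactly at the point you flag, and the repair you propose does not work. Completeness of $\xi_M$ on $M$ is not the relevant issue; what is needed is that integral curves of $\xi_M$ starting on $\lag$ stay on $\lag$ for all time, and this can fail for a non-closed submanifold even when $\xi_M$ is complete. Concretely: take $M=\mathbb{R}^2$ with $\Omega=dx\wedge dy$, $G=(\mathbb{R},+)$ acting by translation in the $x$-direction, with equivariant momentum map $J(x,y)=y$, and $\lag=\{(x,0)\,:\,0<x<1\}$. This is a connected embedded Lagrangian submanifold with $J_{|\lag}=0$, yet it is not $G$-invariant; so the implication (2)$\Rightarrow$(1) is false as literally stated, and any correct proof must use a closedness (or compactness) hypothesis on $\lag$, or else weaken (1) to infinitesimal/local invariance. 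With $\lag$ closed in $M$ the step is repaired not by completeness but by an open-closed argument: for fixed $p\in\lag$ and $\xi\in\mathfrak{g}$, the set $\{t\in\mathbb{R}\,:\,\Phi_{\exp(t\xi)}(p)\in\lag\}$ is nonempty, open (tangency gives local flow-invariance, by uniqueness of integral curves of the restriction $\xi_M|_{\lag}$), and closed (because $\lag$ is closed in $M$), hence equals $\mathbb{R}$; connectedness of $G$, which is generated by $\exp(\mathfrak{g})$, then upgrades invariance under one-parameter subgroups to invariance under every $g\in G$, as you indicate.
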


The results introduced recently in \cite{LojaRatiuOrtega} should
permit the development of a reduction  theory applicable to our
framework. As an evidence of this fact, notice that in
\cite{LojaRatiuOrtega} the authors obtain a  momentum
mapping for an action which resembles the cotangent lifted
actions. That setting is very similar to the one used by the authors
in \cite{HamiltonJacobiSymmetries} and so it seems very likely to be
that reduction theory applies to our theory.

We also want to stress here that the reconstruction procedure introduced in the aforementioned work \cite{HamiltonJacobiSymmetries} can be combined with the Hamilton-Jacobi theory developed here to obtain symplectic integrators that conserve momentum mappings in the same way that Poisson integrators conserve the Casimirs.

\item{\it The reduced Hamilton-Jacobi theory: a discrete principal connection approach.} Let $H: T^*P \to \mathbb{R}$ be a Hamiltonian function which is invariant under the cotangent lift of a free and proper action of the Lie group $G$ on $P$. Denote by $M$ the space of orbits of the action of $G$ on $P$.  
Then, one may consider the reduced Hamiltonian function $H_{red}: T^*P/G \to \mathbb{R}$ and the corresponding Hamilton-Poincar\'e dynamics. 

Suppose that $S: \mathbb{R} \times (P \times P)/G \to \mathbb{R}$ is a solution of the Hamilton-Jacobi equation
\[
H_{red}^{ext} \circ dS = \beta_{\mathbb{R}}^*K, \; \; \mbox{ with } K: \mathbb{R} \times M \to \mathbb{R}.
\]
Note that the involved spaces in the previous equation are quotient manifolds. So, in order to write the Hamilton-Jacobi equation in a suitable way, we may use a discrete principal connection on the principal $G$-bundle $\pi: P \to M$ \cite{FeToZu,Leok,MMM}. In fact, in the presence of a discrete principal connection, one obtains a decomposition of the tangent bundle to $\mathbb{R} \times (P\times P)/G$ as follows
\[
T(\mathbb{R} \times (P \times P) / G) \simeq T\mathbb{R} \oplus T(M \times M) \oplus T((P \times G)/G)
\]
Thus, the Hamilton-Jacobi equation 
\[
d(H_{red}^{ext} \circ dS) = d(\beta_{\mathbb{R}}^*K)
\]
may be decomposed into the {\em horizontal Hamilton-Jacobi equation} and {\em the vertical Hamilton-Jacobi equation}. 
It would be interesting to check the efficiency of this method in some
explicit examples of symmetric Hamiltonian systems. This will be the
subject of a forthcoming paper. Anyway, the use of a (continuous)
principal connection has proved to be a very useful method in the
discussion of Hamilton-Poincar\'e (resp. Lagrange-Poincar\'e) equations
associated with a symmetric Hamiltonian (resp. Lagrangian) system (see
\cite{CeMaPeRa,CeMaRa,MaMiOrPeRa}). Even extensions of these methods
for higher-order mechanical systems and for classical field theories
have been also discussed in the literature (see \cite{EllGaHoRa,GaHoRa}).

\item {\it Truncation of infinite-dimensional Poisson systems using linear Poisson structures: } There are several
  examples of relevant physical importance which are
  infinite-dimensional Poisson systems: Euler equations of
  incompressible fluids, Vlasov--Maxwell and Vlasov--Poisson equations... Truncations
  of some of those systems
  conserving the geometry have already been carried out
  successfully \cite{McLachlan,ScovelWeinstein,FairlieZachos,Zeitlin}. In this
  regard, it
  seems that linear Poisson structures, {\it i.e.} dual bundles of Lie
  algebroids, should be the natural setting for that. After that
  truncation is done, our methods could be applied in order to understand the qualitative behavior of those infinite-dimensional systems.

\item {\it Poincar\'e's generating function:} Poincar\'e's generating functions have been used in dynamical systems in order to relate critical points of a function to periodic orbits. Our setting admits an analogous theory using the coordinates introduced previously. We describe these statements in the Lie algebra case below.

\begin{example}
Let $G$ be a Lie group with Lie algebra $\mathfrak{g}$. Consider $(g^i,p_i)$ a set of coordinates introduced following Section \ref{generatingalgebras}. Then the following lemma holds.
\begin{lemma}
Let $S(p_i)$ a function such that the corresponding Lagrangian submanifold $\{(\frac{\partial S}{\partial p_i},p_i) / p_i\in\mathbb{R}\}$ is a bisection, and let $\hat{S}:\mathfrak{g}^*\rightarrow \mathfrak{g}^*$ be the induced Poisson transformation. Then the critical points of $S$ correspond to fixed points of $\hat{S}$.
\end{lemma}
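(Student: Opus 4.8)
The plan is to reduce the statement to a property of the units of the cotangent groupoid $T^*G\rightrightarrows\mathfrak{g}^*$ and then read everything off in the coordinates $(g^i,p_i)$ of Section \ref{generatingalgebras}. Recall that the induced transformation is $\hat S=\tilde{\alpha}\circ(\tilde{\beta}_{|\lag})^{-1}$ by Theorem \ref{CDW}, so a point $\mu\in\mathfrak{g}^*$ is a fixed point of $\hat S$ precisely when the unique $\varpi\in\lag$ with $\tilde{\beta}(\varpi)=\mu$ also satisfies $\tilde{\alpha}(\varpi)=\mu$; equivalently, when $\tilde{\alpha}(\varpi)=\tilde{\beta}(\varpi)$, that is $J_L(\varpi)=J_R(\varpi)$. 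Parametrising $\lag$ by $p$, the corresponding covector has natural coordinates $(g^i(p),p_i)$ with $g^i(p)=\partial S/\partial p_i(p)$, so that a critical point $p^\ast$ of $S$ is exactly a point with $g^i(p^\ast)=0$, i.e.\ with $\pi_G(\varpi(p^\ast))=\ide$.

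First I would dispatch the implication \emph{critical point $\Rightarrow$ fixed point}, which is immediate. If $dS(p^\ast)=0$ then $g(p^\ast)=\ide$, so $\varpi(p^\ast)\in\pi_G^{-1}(\ide)=\widetilde{\epsilon}(\mathfrak{g}^*)$ is a unit of the cotangent groupoid. For a unit the source and target coincide (both are the canonical identification $T^*_{\ide}G\cong\mathfrak{g}^*$), hence $\tilde{\alpha}(\varpi(p^\ast))=\tilde{\beta}(\varpi(p^\ast))=:\mu^\ast$ and therefore $\hat S(\mu^\ast)=\mu^\ast$. Since $\tilde{\beta}_{|\lag}$ is a diffeomorphism (the bisection hypothesis), the assignment $p^\ast\mapsto\mu^\ast=\tilde{\beta}(\varpi(p^\ast))$ is injective, so distinct critical points produce distinct fixed points.

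For the reverse direction I would use the explicit relation between the two momentum maps. Starting from $J_L(\varpi_g)(\xi)=\varpi_g(T_\ide R_g\xi)$ and $J_R(\varpi_g)(\xi)=\varpi_g(T_\ide L_g\xi)$ together with the identity $T_\ide R_g=T_\ide L_g\circ\operatorname{Ad}_{g^{-1}}$, one gets $J_L(\varpi_g)=\operatorname{Ad}^*_{g^{-1}}J_R(\varpi_g)$. Thus the fixed-point condition $J_L(\varpi)=J_R(\varpi)$ becomes $\operatorname{Ad}^*_{g(p)^{-1}}\mu=\mu$, where $\mu=J_R(\varpi(p))$ and $g(p)$ has coordinates $\partial S/\partial p(p)$.

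This last equation is where I expect the main obstacle to sit: it asserts only that $g(p)^{-1}$ lies in the \emph{coadjoint isotropy} of $\mu$, which for a generic $\mu$ is strictly larger than $\{\ide\}$, so by itself it does not force $g(p)=\ide$ (equivalently $dS(p)=0$). Writing $g(p)=\exp(\eta(p))$ in the near-identity regime, the condition reads exactly $\eta(p)\in\mathfrak{g}_\mu$, i.e.\ $\operatorname{ad}^*_{\nabla S(\mu)}\mu=0$, which is the vanishing of the Lie--Poisson vector field generated by $S$. Consequently the clean bijection is between fixed points of $\hat S$ and these Lie--Poisson equilibria; these agree with ordinary critical points precisely when $\nabla S$ is constrained to vanish, and some care is genuinely required here, as the radial choice $S=\tfrac12\|p\|^2$ on $\mathfrak{so}(3)^*$ (for which $\hat S=\id$ while $dS$ vanishes only at the origin) shows that the inclusion can be strict. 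To close the converse as stated I would therefore impose the transversality coming from the non-degeneracy hypothesis, so that along the branch through $\widetilde{\epsilon}(\mathfrak{g}^*)$ the only solutions of $\operatorname{ad}^*_{\nabla S(\mu)}\mu=0$ satisfy $\nabla S(\mu)=0$, recovering $dS=0$ and the desired correspondence.
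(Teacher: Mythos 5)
Your first paragraph is, modulo wording, the paper's entire proof: the paper notes that in the coordinates of Section \ref{generatingalgebras} the Lagrangian submanifold generating the identity is $\lag_{\id}=\{(0,p_i)\}$, observes that $\lag_{\id}\cap \operatorname{graph}(dS)$ is exactly the set of critical points of $S$, and stops there. Your proof of the implication ``critical point $\Rightarrow$ fixed point'' (a critical point places the covector in $\pi_G^{-1}(\ide)=\widetilde{\epsilon}(\mathfrak{g}^*)$, where $\tilde{\alpha}=\tilde{\beta}$, with injectivity coming from the bisection hypothesis) simply makes explicit the step the paper leaves implicit, and is the same argument.

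Your analysis of the converse identifies a genuine gap --- not in your own argument, but in the paper's statement and proof. The fixed-point condition for $\hat{S}=\tilde{\alpha}\circ(\tilde{\beta}_{|\lag})^{-1}$ is $J_L(\varpi_g)=J_R(\varpi_g)$, i.e.\ $\operatorname{Ad}^*_{g^{-1}}\mu=\mu$ with $\mu=J_R(\varpi_g)$, which only forces $g$ into the coadjoint isotropy group $G_\mu$; the paper's one-line proof silently replaces ``fixed point of $\hat{S}$'' by ``point of $\lag\cap\lag_{\id}$'', which is a strictly stronger condition whenever some $G_\mu$ is nontrivial. This is precisely where the Lie--Poisson case departs from the classical case of Section \ref{classical}: the pair groupoid has trivial isotropy, so there the Poincar\'e correspondence really is a bijection, whereas the isotropy subgroupoid of $T^*G\rightrightarrows\mathfrak{g}^*$ is large. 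Your counterexample is correct, and it can even be checked in the paper's own Cayley coordinates for $SO(3)$: for $S(p)=\frac{1}{2}\|p\|^{2}$ the bisection is $\{g=p\}$, and substituting $(x,y,z)=(p_x,p_y,p_z)$ into the displayed formulas for $\tilde\alpha_{\operatorname{cay}}$ and $\tilde\beta_{\operatorname{cay}}$ yields $\tilde\alpha=\tilde\beta=(1+\|p\|^{2}/4)\,p$ at every point of $\lag$, so $\hat{S}=\id$ and every $\mu\in\mathfrak{so}(3)^*$ is a fixed point, while $dS$ vanishes only at $p=0$. So what is actually true --- and what both you and the paper prove --- is the inclusion: critical points of $S$ inject into fixed points of $\hat{S}$. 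The reverse implication requires either your reformulation (fixed points correspond, near the identity, to solutions of $\operatorname{ad}^*_{\nabla S(\mu)}\mu=0$, the Lie--Poisson equilibria) or an additional transversality hypothesis on $\lag$ with respect to the isotropy directions; the paper supplies neither.
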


\begin{proof}
The Lagrangian submanifold that generates the identity is given in those coordinates by 
\[
\lag_{\id}=\{(0^i,p_i)\textrm{ such that }p_i\in\mathbb{R}\}
\]
and so $\lag_{\id}\cap \textrm{graph}(dS)=\{\textrm{critical points of } S\}$ which concludes the proof.
\end{proof}
\end{example}

Similar results can be obtained in the general situation.

\item {\it Other geometric settings: Non-holonomic mechanics.} In this
  paper we showed that, with the appropriate geometric tools, the
  classical complete solutions of the Hamilton--Jacobi theory can be
  extended to the Poisson case. This extension, far from being just a
  theoretical question,  gives means to study Hamiltonian dynamical
  systems, numerically and analytically. Recently, the Hamilton-Jacobi
  theory has been extended to other frameworks, like the non-holonomic
  mechanics, see
  \cite{CaGrMaMaMuRo2,HamiltonJacobiAlgebroids,Bloch,Bloch2}. Unfortunately,
  to the best of our knowledge, in the mentioned works there is not
  described a way to generate transformations from complete solutions
  of the Hamilton-Jacobi equation, analogous to the results described
  here. To obtain such a procedure will be extremely illuminating in
  order to construct non-holonomic integrators. Anyway,
  Hamilton--Jacobi theory for non-holonomic mechanical systems may be a
  useful method to obtain first integrals of the system
  \cite{CaGrMaMaMuRo2,GrilloPadron} which eventually will facilitate the
  integration of the systems. So, in conclusion, Hamilton--Jacobi
  theory could be used in the integration of some interesting
  symmetric non-holonomic systems which have been discussed, very
  recently, using new geometric tools (see \cite{Balseiro,Balseiro2,BaSa}).
\end{enumerate}

\begin{appendices}
\section{Lie Algebroids}\label{liealgebroids}
In this section we will recall the definition of  a Lie
algebroid (see \cite{CannasDaSilvaWeinstein,Mackenzie}). Associated to every Lie groupoid there is a Lie algebroid,
as can be seen in Appendix \ref{liegroupoids},
although the converse is not true. 

\subsection{Definition}
A \emph{Lie algebroid} is a vector bundle $\tau:A\rightarrow M$
endowed with the following data:
\begin{itemize}
\item A bundle map $\rho: A\rightarrow TM$ called the \emph{anchor}.

\item A \emph{Lie bracket on the space of sections} $\Gamma(\tau)$
  satisfying the Leibniz identity, i.e.
\[
\lcf X, fY\rcf=f\lcf X, Y\rcf+\mathcal{L}_{\rho(X)}(f)Y
\]
for all $X, \ Y\in \Gamma(\tau)$ and any $f\in C^\infty(M)$.
\end{itemize}

\subsection{Examples}

In this section, we introduce some examples of Lie algebroids. More
examples, maybe the most natural ones, will be sketched in Appendix
\ref{liegroupoids} when we talk about the Lie algebroid associated with
a Lie groupoid.

\subsubsection{Vector Fields}

It can be seen that given a manifold $M$ there is a one-to-one
correspondence between Lie algebroid structures on the trivial bundle
$M\times \mathbb{R}$ and vector fields $X$ on $M$. The vector bundle
$\tau:A=M\times \mathbb{R}\rightarrow M$ is given just by the projection
onto $M$.
\begin{enumerate}
\item The anchor map $\rho:A\rightarrow TM$ is given by
  $\rho(m,t)=tX(m)\in TM$.
\item Given $f, \ g\in C^\infty(M)$ then $\lcf f, g\rcf=fX(g)-gX(f)$.
\end{enumerate}

\subsubsection{$2$-forms}

Let $M$ be a manifold and consider now the bundle $A=TM\times
\mathbb{R}$ over $M$, where the vector bundle structure $\tau:A\rightarrow M$
is the obvious one. Any closed $2$-form $\omega$ on $M$ defines a Lie algebroid
estructure on that bundle:
\begin{enumerate}
\item The anchor $\rho:A\rightarrow TM$ is given by the projection
  onto the first factor.

\item The Lie bracket is given by $\lcf (X,f), (Y,g)\rcf=([X,Y], \ 
  X(g)-Y(f)+\omega(X,Y))$
\end{enumerate}

\begin{remark}
This case can be used to find examples of non-integrable Lie algebroids.
\end{remark}

\subsubsection{Poisson Manifolds}
One really important example of Lie algebroids is given by Poisson
manifolds. Let $(P,\Lambda)$ be a Poisson manifold, then we consider the cotangent bundle of $P$, $\pi_P:T^*P\rightarrow P$
and where the anchor and the Lie bracket are given by:
\begin{enumerate}
\item The anchor map is just the induced morphism $\Lambda^{\sharp}:T^*P\rightarrow TP$ by the Poisson tensor $\Lambda$.
\item The Lie bracket is given by $\lcf \alpha,\beta
  \rcf=\mathcal{L}_{\Lambda^\sharp(\alpha)}\beta-\mathcal{L}_{\Lambda^\sharp(\beta)}\alpha-d(\Lambda(\alpha,\beta))$
  for all $\alpha$ and $\beta$ $1$-forms on $P$.
\end{enumerate}

\subsection{The Poisson Structure of the Dual of a Lie Algebroid}\label{poisson}

Given a Lie algebroid $(\tau:A\rightarrow M, \ \rho,\ \lcf\cdot{}, \
\cdot{} \rcf)$, its dual $A^*$ has a natural linear Poisson structure
that we proceed to describe now. 

Given $X$ and $Y$ sections of $\tau$, then they determine linear
functions on $A^*$ that we denote by $\hat{X}$ and $\hat{Y}$. On the
other hand each $f\in C^\infty(M)$ determines a function $f\circ \tau$ which is
constant on the fibers. It can be seen that there exists
a unique Poisson structure on $A^*$ which satisfies
\[
\begin{array}{l}
\{\hat{X},\hat{Y}\}=\widehat{\lcf X, Y\rcf}, \\ \noalign{\medskip}
\{\hat{X},f\circ\tau\}=(\rho(X)(f))\circ\tau, \\ \noalign{\medskip}
\{f\circ\tau,g\circ\tau\}=0,
\end{array}
\]
for all $X,Y\in\Gamma(\tau)$ and $f,g\in C^\infty(M)$.

Once we choose a local basis of sections $e_j$ $j=1,\ldots,m$ and local coordinates
$x^i$ $i=1,\ldots,n$ on $M$. This system induces local coordinates in $A^*$ by
$(x^i,\mu_i)$, where $i=1,\ldots,n$, $j=1,\ldots,m$ and where
$\mu_i=\hat{e}_i$. If the structure functions and the
anchor map read locally 
\[
\lcf e_i, e_j\rcf=c^k_{ij}e_k\quad \textrm{ and } \quad
\rho(e_i)=\rho^j_i \displaystyle\frac{\partial}{\partial x^j},
\]
then the Poisson bracket on the coordinates $(x^i,\mu_j)$ reads
\[
\begin{array}{c}
\{x^i,x^j\}=0, \\ \noalign{\medskip} \{\mu_i,x^j\}=\rho_i^j, \\ \noalign{\medskip}\{\mu_i,\mu_j\}=c_{ij}^k\mu_k.
\end{array}
\]
Due to our conventions, this Poisson structure is the opposite to the
one that makes $\tilde\beta$ a Poisson mapping. So in our work we are
considering the Poisson structure $(A^*G,-\{\cdot{},\ \cdot{} \})$.
\section{Lie Groupoids}\label{liegroupoids}

We recall here the definition of a (Lie) groupoid. For more
information about this concept, we refer the reader to the monograph \cite{lecturesontheintegrabilityofliebrackets} and K.\ Mackenzie's
book \cite{Mackenzie}.

\subsection{Definition}
\paragraph{Groupoids:}
A \emph{groupoid} is a set $G$ equipped with the following  data:
\begin{enumerate}
\item another set $M$, called the \emph{base};
\item two surjective maps $\alpha\colon G\to M$ and $\beta\colon G\to M$, called, respectively,  the \emph{source} and \emph{target} projections; we visualize an element $g\in G$ as an arrow from $\alpha(g)$ to $\beta(g)$:
$$
\xymatrix{*=0{\stackrel{\bullet}{\mbox{\tiny
 $x=\alpha(g)$}}}{\ar@/^1pc/@<1ex>[rrr]_g}&&&*=0{\stackrel{\bullet}{\mbox{\tiny
$y=\beta(g)$}}}}
$$
\item A partial multiplication, or composition map, $m\colon G_2\to G$ defined on the subset $G_2$ of $G\times G$:
\[
G_2=\left\{ (g,h)\in G\times G\mid \beta(g)=\alpha(h) \right\}.
\]
The multiplication will be denoted for simplicity by $m(g, h)=gh$.  It verifies the following properties:
\begin{enumerate}
\item $\alpha(gh)=\alpha(g)$ and $\beta(gh)=\beta(h)$.
\item $(gh)k=g(hk)$.
$$\xymatrix{*=0{\stackrel{\bullet}{\mbox{\tiny
 $\alpha(g)=\alpha(gh)$}}}{\ar@/^2pc/@<2ex>[rrrrrr]_{gh}}{\ar@/^1pc/@<2ex>[rrr]_g}&&&*=0{\stackrel{\bullet}{\mbox{\tiny
 $\beta(g)=\alpha(h)$}}}{\ar@/^1pc/@<2ex>[rrr]_h}&&&*=0{\stackrel{\bullet}{\mbox{\tiny
 $\beta(h)=\beta(gh)$}}}}$$
\end{enumerate}
\item An \emph{identity section} $\epsilon\colon  M \to G$ such that
\begin{enumerate}
\item $\epsilon(\alpha(g))g=g$ and $g\epsilon(\beta(g))=g$ for all $g\in G$,
\item $\alpha(\epsilon(x))=\beta(\epsilon(x))=x$ for all $x\in M$.
\end{enumerate}

\item An \emph{inversion map} $\iota\colon  G \to G$, to be denoted simply by $\iota(g)=g^{-1}$, such that
\begin{enumerate}
\item $g^{-1}g=\epsilon(\beta(g))$ and $gg^{-1}=\epsilon(\alpha(g))$.
$$\xymatrix{*=0{\stackrel{\bullet}{\mbox{\tiny
 $\alpha(g)=\beta(g^{-1})$}}}{\ar@/^1pc/@<2ex>[rrr]_g}&&&*=0{\stackrel{\bullet}{\mbox{\tiny
 $\beta(g)=\alpha(g^{-1})$}}}{\ar@/^1pc/@<2ex>[lll]_{g^{-1}}}}$$
\end{enumerate}
\end{enumerate}

We will denote a groupoid $G$ over a base $M$ by $\xymatrix@1@C=1.5em{
G\ar@<.3ex>[r]^{\alpha}\ar@<-.3ex>[r]_{\beta}&M
}$ or simply $G\rightrightarrows M$.

It is easy to see that $\epsilon$ must be injective, so there is a
natural  identification between $M$ and $\epsilon(M)$. However, we will keep a distinction between the two sets.

\paragraph{Lie Groupoids:}

A groupoid, $G \rightrightarrows M$, is said to be a \emph{Lie
groupoid} if $G$ and $M$ are differentiable manifolds, all the structural maps
are differentiable and besides, $\alpha$ and $\beta$ differentiable
submersions. If $G \rightrightarrows M$ is a Lie groupoid then $m$
is a submersion, $\epsilon$ is an embedding and $\iota$ is a
diffeomorphism. Notice that since $\alpha$ and $\beta$ are
submersions, the $\alpha$ and $\beta$-fibers are submanifolds. The same properties
imply that $G_2$ is a submanifold. We will use $G^x=\alpha^{-1}(x)$,
$G_y=\beta^{-1}(y)$ and $G^x_y=\alpha^{-1}(x)\cap\beta^{-1}(y)$.

\paragraph{Left and Right multiplication:}
Given $g\in G_y^x$, so $g:x\to y$, we can define two (bijective) mappings
$l_{g}\colon G^y\to G^x$ and and $r_{g}\colon G_x\to G_y$, which are the
\emph{left translation by $g$} and the \emph{right translation by $g$}
respectively. These diffeomorphisms are given by
\begin{equation}\label{leftrightmultiplication}
\begin{array}{lcr}
\begin{array}{rccl}
l_{g}\colon &G^y&\longrightarrow &G^x \\ 
&h &\mapsto&l_{g}(h) = gh
\end{array}
&; &
\begin{array}{rccl}
r_{g}\colon &G_x&\longrightarrow& G_y \\ 
&h& \mapsto &r_{g}(h) = hg,
\end{array}
\end{array}
\end{equation}
where we have that $(l_{g})^{-1} = l_{g^{-1}}$ and $(r_{g})^{-1} = r_{g^{-1}}$.

\paragraph{Bisections:}
A submanifold $\mathcal{B}\subset G$ is called a bisection of $G$ if the
restricted maps, $\alpha_{|\mathcal{B}}: \ \mathcal{B}\rightarrow M$ and $\beta_{|\mathcal{B}}: \
\mathcal{B}\rightarrow M$ are both diffeomorhisms. Consequently, for any
bisection $\mathcal{B}\subset G$, there is a  corresponding $\alpha$-section
$\mathcal{B}_\alpha =(\alpha_{|\mathcal{B}})^{-1}: \ M\rightarrow \mathcal{B}$, where $\beta\circ
\mathcal{B}_\alpha:\ M\rightarrow M$ is a diffeomorphism. Likewise, there is a
$\beta$-section $\mathcal{B}_\beta=(\beta_{|\mathcal{B}})^{-1}: \ M\rightarrow G$, where
$\alpha\circ \mathcal{B}_{\beta}=(\beta\circ \mathcal{B}_\alpha)^{-1}:\ M \rightarrow M $
is a diffeomorphism. More generally, $\mathcal{B}\subset G$ is called a local
bisection if the restricted maps $\alpha_{|\mathcal{B}}$ and $\beta_{|\mathcal{B}}$ are
local diffeomorphisms onto open sets, $U,\ V\subset M$,
respectively. Local bisections on a Lie groupoid always exist.

\paragraph{Invariant Vector Fields:}
A vector field $X$ on $G$ is said to be
\emph{left-invariant} (resp., \emph{right-invariant}) if it is
tangent to the fibers of $\alpha$ (resp., $\beta$) and
\[X(gh) = (T_{h}l_{g})(X(h))  \quad
\Big(\textrm{ resp. }X(gh)= (T_{g}r_{h})(X(g)) \Big)\]for all $(g,h) \in G_{2}$.

\paragraph{Morphisms:}
Given two Lie groupoids $G \rightrightarrows M$ and $G'
\rightrightarrows M'$, a \emph{morphism of Lie groupoids} is a
smooth map $\Phi\colon  G \to G'$ such that
\begin{enumerate}
\item If $(g, h) \in G_{2} $ then $ (\Phi(g), \Phi(h)) \in (G')_{2}$ and
\item $\Phi(gh) = \Phi(g)\Phi(h)$.
\end{enumerate}

A morphism of Lie groupoids $\Phi\colon  G \to G'$ induces a smooth map
$\Phi_{0}\colon  M \to M'$ in such a way that the source, the
target and the identity section commute with the morphism, i.e.
\[
\alpha' \circ \Phi = \Phi_{0} \circ \alpha, \makebox[.3cm]{}
\beta' \circ \Phi = \Phi_{0} \circ \beta, \makebox[.3cm]{} \Phi
\circ \epsilon = \epsilon' \circ \Phi_{0},
\]
$\alpha$, $\beta$ and $\epsilon$ (resp., $\alpha'$, $\beta'$ and
$\epsilon'$) being the source, the target and the identity sections
of $G$ (resp., $G'$).

\subsection{Lie Algebroid Associated to a Lie Groupoid}

Given a Lie groupoid $G$ we denote by
$AG=\ker(T\alpha)_{|\epsilon(M)}$, i.e., the set of vectors tangent to
the $\alpha$-fibers restricted to the units of the groupoid. Since the
units, $\im(\epsilon)$, are diffeomorphic to the base manifold, $M$, we
will consider  the set $AG$ as a vector bundle $\tau\colon  AG \to M$. The
reader should keep this identification in mind, because it is going to be
used implicitly in some places ($M\equiv \im(\epsilon)\subset G$).

It is easy to prove that there exists a
bijection between the space of sections $\Gamma(\tau)$ and the set of
left-invariant (resp., right-invariant) vector fields on $G$. If $X$
is a section of $\tau\colon  AG \to M$, the corresponding left-invariant
(resp., right-invariant) vector field on $G$ will be denoted
$\lvec{X}$ (resp., $\rvec{X}$), where
\begin{equation}\label{linv}
\lvec{X}(g) = (T_{\epsilon(\beta(g))}l_{g})(X(\beta(g))),
\end{equation}
\begin{equation}\label{rinv}
\left(\textrm{resp.,\ } \rvec{X}(g) = -(T_{\epsilon
(\alpha(g))}r_{g}\circ \iota)( X(\alpha(g)))\right),
\end{equation}
for $g \in G$.

Using the above facts, we may introduce a Lie algebroid structure
$(\lcf\cdot , \cdot\rcf, \rho)$ on $AG$:
\begin{enumerate}

\item The \emph{anchor map} $\rho\colon AG\to TM$ is
\[
\rho(X)(x) = (T_{\epsilon(x)}\beta)(X(x))
\]
for $X\in \Gamma(\tau)$ and $x \in M$.

\item The \emph{Lie bracket} on the space of sections   $\Gamma(\tau)$,
  denoted by $\lcf \cdot, \cdot\rcf$ is defined by
\[
\lvec{\lcf X, Y\rcf} = [\lvec{X}, \lvec{Y}],
\]
for $X, Y \in \Gamma(\tau)$ and $x \in M$. 

Note that
\[
\rvec{\lcf X, Y\rcf} = -[\rvec{X}, \rvec{Y}], \makebox[.3cm]{}
[\rvec{X}, \lvec{Y}] = 0,
\]
\[
T\iota\circ \rvec{X}=-\lvec{X}\circ \iota,\;\;\;\; T\iota\circ
\lvec{X}=-\rvec{X}\circ \iota,
\]
(for more details, see \cite{Mackenzie}). The dual bundle of $AG$ will be denoted by $A^*G$.
\end{enumerate}

In addition, define the vector bundle $V\alpha$ as the sub-bundle of
$TG$ consisting of $\alpha$\emph{-vertical vectors}, that is, vectors tangent to the $\alpha$-fibers. $V\beta$ is defined analogously. Thus $AG$ is the restriction of $V\alpha$ to $\epsilon(M)$.

\subsection{Examples of Lie groupoids}\label{Subsec-2.3}

Next, we will present some examples of Lie groupoids. The
corresponding associated Lie algebroid is pointed out in each case.

\subsubsection{Lie Groups}

 Any {\it Lie group} $G$ is a Lie groupoid over
$\{ \ide \}$, the identity element of $G$. 
\begin{enumerate}
 \item The source, $\alpha$, is the constant map $\alpha(g)=\ide$.
\item The target, $\beta$, is the constant map $\beta(g)=\ide$.
\item The identity map is $\epsilon(\ide)=\ide$.
\item The inversion map is $\iota(g) =g^{-1}$.
\item The multiplication is $m(g,h) =g\cdot h$, for
any $g$ and $h$ in $G$.
\end{enumerate}

\paragraph{Associated Lie algebroid:}
The Lie algebroid
associated with $G$ is just the {\it Lie algebra} ${\frak g}$ of $G$ in a
straightforward way.

\subsubsection{The Pair or Banal Groupoid}

 Let $M$ be a manifold. The
product manifold $M \times M$ is a Lie groupoid over $M$ called the \emph{pair or banal
groupoid}.  Its structure mappings are:
\begin{enumerate}
 \item The source, $\alpha$, is the projection onto the first
   factor.
\item The target, $\beta$, is the projection onto the second factor.
\item The identity map is $\epsilon(x)
= (x, x)$, for all $x \in M$.
\item The inversion map is $\iota(x, y) = (y, x)$.
\item The multiplication is $m((x, y), (y, z)) = (x, z)$, for
$(x, y), (y, z) \in M \times M$.
\end{enumerate}

\paragraph{Associated Lie Algebroid:} 
If $x$ is a point of $M$, it follows that
\[
A=\ker(T\alpha)_{\epsilon(x)}=\{0_x\}\times T_xM
\]
which gives the vector bundle structure and given $(0_x,X_x)\in A$,
then \[\tau (0_x,X_x)=x.\]
\begin{enumerate}
\item The anchor is given by the projection over the second factor
  $\rho(0_x,X_x)=X_x\in T_xM$
\item The Lie bracket on the space of sections, $\Gamma(\tau)$, is the
  Lie bracket of vector fields on the second factor $\lcf (0,X),(0,Y)\rcf=(0,[X,Y])$.
\end{enumerate}
In conclusion, we have that the Lie algebroid $A(Q \times Q) \to Q$ may be identified 
with the standard Lie algebroid $\tau_Q: TQ \to Q$.

\subsubsection{Atiyah or Gauge Groupoids}\label{atiyah}

Let $\pi\colon  P \rightarrow M$ be a
principal $G$-bundle. Then the free action $\Phi\colon  G \times P \to
P$ induces the diagonal action  $\Phi'\colon  G \times (P \times
P) \to P\times P$ by $\Phi'(g, (q, q')) = (gq, gq')$. Moreover, one may consider the quotient manifold $(P
\times P) / G$ and it admits a Lie groupoid structure over $M$,  called the
\emph{Atiyah or Gauge groupoid} (see, for instance, \cite{Mackenzie,
MMM}). We describe now the structural mappings.
\begin{enumerate}
\item The source, ${\alpha}\colon  (P \times P) / G \to M $ is
  given by $[(q,
q')] \mapsto \pi(q)$.
\item The target, ${\beta}\colon  (P \times P) / G \to M$  is given by
  $[(q,
q')] \mapsto \pi(q')$.
\item The identity map, ${\epsilon}\colon  M \to (P \times P) / G$ is
$x\mapsto [(q, q)], \; \mbox{ if } \pi(q) = x.$
\item The inversion map, ${\iota}\colon  (P \times P) / G \to (P
  \times P) / G$ is $ [(q, q')] \mapsto [(q', q)]$.
\item The multiplication map ${m}\colon  ((P \times P) / G)_{2} \to (P
  \times P) / G$ is $ ([(q, q')], [(gq', q'')]) \mapsto [(gq, q'')]$.
\end{enumerate}

\paragraph{Associated Lie Algebroid:}
It easily follows that $A=\ker{T\alpha}_{\epsilon(M)}$ can be
identified with $TP/G$. Then the associated Lie algebroid is just
$\tau: TP/G\rightarrow M$, where $\tau$ is the obvious
projection and the Lie algebroid structure is provided by
\begin{enumerate}
\item The anchor, $\rho:TP/G\rightarrow TM$, is given by the quotient
  of the natural projection map $T\pi:TP\rightarrow TM$. That is,
  $\rho=\tilde{T\pi}:TP/G\rightarrow TM$.
\item 
The space of sections of the vector bundle $\tau: TP/G \to P/G = M$ may be identified with the set of $G$-invariant vector fields on $P$. Under this identification, the Lie bracket on the space of sections is given by the standard Lie
  bracket of vector fields. We remark that it is easy to see that the Lie bracket of
  two $G$-invariant vector fields is another $G$-invariant vector field.
\end{enumerate}

\subsubsection{Action Lie groupoids}\label{action}

 Let $G$ be a Lie group and let ${\Phi}\colon M\times G\to M$, $(x,g)\mapsto xg$, be a right action of $G$  on $M$.  Consider the \emph{action Lie groupoid} $M\times G$ over $M$ with
structural maps given by
\begin{enumerate}
\item The source is ${\alpha}(x,g)=x$.
\item The target is ${\beta}(x,g)=xg$.
\item The identity map is ${\epsilon }(x)=(x,{e})$.
\item The inversion map is ${\iota}(x,g)=(xg, g^{-1})$.
\item The multiplication is $m((x,g),(xg,g'))=(x,gg')$.
\end{enumerate}
See, for instance, \cite{Mackenzie,
MMM} for the details.

\paragraph{Associated Lie Algebroid:}
Now, let ${\mathfrak g}=T_{{ e}}G$ be the Lie algebra of $G$. Given
$\xi\in\mathfrak{g}$ we will denote by $\xi_M$ the infinitesimal generator of the action
$\Phi\colon M\times G\to M$. Consider now the
vector bundle $\tau:M\times \mathfrak{g}\rightarrow M$ where $\tau$ is
the projection over the first factor, endowed with the following
structures:
\begin{enumerate}
\item The anchor is $\rho(x,\xi)=\xi_M(x)$.
\item The Lie bracket on the space of sections is given by $\lcf
\widetilde{\xi},\widetilde{\eta}\rcf(x)=[\widetilde{\xi}(x),\widetilde{\eta}(x)]
+
(\widetilde{\xi}(x))_M(x)(\widetilde{\eta})-(\widetilde{\eta}(x))_M(x)(\widetilde{\xi})$
for $\widetilde\xi$, $\widetilde\eta\in \Gamma(\tau)$.
\end{enumerate}
The resultant Lie algebroid is just the Lie algebroid of the action Lie groupoid $M \times G \rightrightarrows M$.
\end{appendices}

\vspace{0.2in}

\section*{Acknowledgments}
This work has been partially supported by MINEICO, MTM 2013-42
870-P, MTM 2015-64166-C2-2-P, MTM 2016-76702-P,  the European project IRSES-project ``Geomech-246981'' and the ICMAT Severo Ochoa project SEV-2011-0087 and 
SEV-2015-0554.
M. Vaquero wishes to thank MINEICO for a FPI-PhD Position,
BES-2011-045780,  and David Iglesias-Ponte and Luis Garc\'ia-Naranjo
for useful discussions. The research of S. Ferraro has been supported
by CONICET (PIP 2013--2015 GI 11220120100532CO), ANPCyT (PICT
2013-1302) and SGCyT UNS. We would like to thank the referees for their comments and insight, which helped us to improve our manuscript.

\nocite{*}
\bibliography{hj}
\bibliographystyle{acm}
\end{document}